\renewcommand{\(}{\begin{equation}}
\renewcommand{\)}{\end{equation}}
\newcommand{\bea}{\begin{eqnarray}}
\newcommand{\eea}{\end{eqnarray}}
\newcommand{\R}{{\mathbb R}}
\newcommand{\LX}{\bigwedge^\bullet\mathfrak{X}}
\newcommand{\Sh}{\mathrm{Sh}}
\newcommand{\maps}{\colon}
\newcommand{\tensor}{\otimes}
\newcommand{\vs}[1]{-(-1)^{\binom{{#1}+1}{2}}}
\renewcommand{\deg}[1]{\left \lvert #1 \right \rvert}
\newcommand{\vk}[1]{v_{1} \wedge \cdots \wedge  v_{#1}}
\newcommand{\vsk}[1]{v_{\sigma(1)} \wedge \cdots \wedge v_{\sigma(#1)}}
\newcommand{\xto}[1]{\xrightarrow{#1}}
\DeclareMathOperator{\res}{\mathrm{res}}
\DeclareMathOperator{\id}{\mathrm{id}}
\DeclareMathOperator{\Tot}{\mathrm{Tot}}
\newcommand{\LieQuant}{\mathrm{dgLie}_{\mathrm{Qu}} (X,\bar{A})}
\newcommand{\Xham}{\mathfrak{X}_{\mathrm{Ham}}}
\newcommand{\X}{\mathfrak{X}}
\renewcommand{\L}{\mathcal{L}}
\newcommand{\bbrac}[2]{\left \llbracket #1,#2 \right \rrbracket}
\newcommand{\bx}{\bar{x}}
\newcommand{\btheta}{\bar{\theta}}
\newcommand{\ksgn}[1]{-(-1)^{\binom{#1+1}{2}}}
\newcommand{\fc}{\mathfrak{c}}
\newcommand{\fa}{\mathfrak{a}}
\newcommand{\dt}{d_{\mathrm{Tot}}}
\newcommand{\cU}{\mathcal{U}}
\numberwithin{equation}{subsection}
\renewcommand{\deg}[1]{\left \lvert #1 \right \rvert}
\renewcommand{\L}{\mathcal{L}}
\newcommand{\innerprod}[2]{\langle #1,#2 \rangle}
\newcommand{\nbbt}[1]{\textcolor{black}{#1}}
\def\gg {\mathfrak{g}}
\theoremstyle{plain}
\newtheorem{theorem}{Theorem}[subsection]
\newtheorem{lemma}[theorem]{Lemma}
\newtheorem{proposition}[theorem]{Proposition}
\newtheorem{corollary}[theorem]{Corollary}
\newtheorem{propositionapp}{Proposition}[section]
\newtheorem{lemmaapp}[propositionapp]{Lemma}
\theoremstyle{definition}
\newtheorem{definition}[theorem]{Definition}
\newtheorem{defprop}[theorem]{Definition/Proposition}
\newtheorem{example}[theorem]{Example}
\newtheorem{remark}[theorem]{Remark}
\begin{document}

\title{$L_\infty$-algebras of local observables from higher prequantum bundles
}
\author{Domenico Fiorenza}
\email{fiorenza@mat.uniroma1.it}
\address{Department of Mathematics, Sapienza Universit\`a di Roma,
P.le Aldo Moro 2, 00185, Rome, Italy.}

\author{Christopher L. Rogers}
\email{crogers@uni-math.gwdg.de}
\address{Mathematics Institute, Georg-August Universit\"at G\"ottingen,
Bunsenstrasse 3-5, D-37073,
G\"ottingen, Germany}

\author{Urs Schreiber}
\email{urs.schreiber@gmail.com}
\address{Mathematics Institute,
Radboud Universiteit Nijmegen,
Comeniuslaan 4, 6525 HP,
Nijmegen, The Netherlands}

\begin{abstract}

To any manifold equipped with a higher degree closed form, one can associate
an $L_\infty$-algebra of local observables that generalizes the
Poisson algebra of a symplectic manifold. Here, by means of an
explicit homotopy equivalence, we interpret this $L_\infty$-algebra in
terms of infinitesimal autoequivalences of higher prequantum
bundles. By truncating the connection data on the prequantum
bundle, we produce analogues of the (higher) Lie algebras of
sections of the Atiyah Lie algebroid and of the Courant Lie
2-algebroid. We also exhibit the $L_\infty$-cocycle that realizes
the $L_\infty$-algebra of local observables as a Kirillov-Kostant-Souriau-type
$L_\infty$-extension of the Hamiltonian vector fields. When restricted
along a Lie algebra action, this yields Heisenberg-like
$L_\infty$-algebras such as the string Lie 2-algebra of a semisimple
Lie algebra.
\end{abstract}

\subjclass[2010]{53D50; 53C08; 18G55}
\keywords{Geometric quantization, gerbes, homotopical algebra}

\maketitle

\setcounter{tocdepth}{1} 
\tableofcontents

\section{Introduction}

Geometric objects, such as
manifolds, orbifolds, or stacks, equipped with a closed differential
form play important roles in many areas of current mathematical
interest.  The archetypical examples are closed 2-forms in
(pre-)symplectic geometry. Higher
degree closed forms play crucial roles, for example, in covariant quantum field
theory, in Hitchin's generalized complex/Riemannian geometry, and in differential
cohomology. It is becoming clear that it is advantageous to
consider these forms, in one way or another, as higher degree generalizations of symplectic
structures.

In all of these applications, there is a particular focus on integral
closed forms. This is because such forms correspond to the curvatures
of higher geometric structures known as $U(1)$-$n$-bundles with
connection (or $U(1)$-$(n-1)$-bundle gerbes with connection).  Here we
refer to these as \emph{higher prequantum bundles}, in analogy with
the role that $U(1)$-principal connections play in the geometric
prequantization of symplectic manifolds {\cite{Kostant:1970,Souriau:1967}}.
(A modern review can be found in \cite{Brylinski:1993}.)  In the companion article
\cite{companionArticle} we present general aspects of such higher
geometric prequantum structures; here we work out details of the
general theory specialized to the higher differential geometry over
smooth manifolds.  In particular,
we use homotopy Lie theory to study the infinitesimal autoequivalences of
higher prequantum bundles covering infinitesimal diffeomorphisms of
the base manifold, i.e., the infinitesimal quantomorphisms.

It is well known
that every pre-symplectic manifold induces a Lie algebra
of Hamiltonian functions whose bracket is the Poisson bracket given
by the closed 2-form. When the manifold is equipped with a prequantum
bundle, this Lie algebra is isomorphic to the Lie algebra of
infinitesimal autoequivalences of that structure, i.e., those vector fields on the
bundle whose flow preserves the underlying bundle and its connection under pullback.
These are also called prequantum operators.
More generally, manifolds equipped with higher degree
forms also have Hamiltonian vector fields, which form a Lie algebra just as in
symplectic geometry. The differential form induces a bilinear
skew-symmetric bracket not on functions, but on higher degree
differential forms. However, this bracket fails to satisfy the Jacobi identity.
The observation made in \cite{Rogers:2010nw} was that, for the case of
non-degenerate forms, this failure  is controlled by coherent homotopy.
Hence, instead of being a problem, the lack of a genuine Lie bracket
indicates the presence of a natural, but higher (homotopy-theoretic)
structure. More precisely,  the higher Poisson bracket gives rise to
a strong-homotopy Lie algebra or $L_\infty$-algebra.
The construction in \cite{Rogers:2010nw} extends
immediately
to the case of degenerate forms, and we call the resulting algebra
the `$L_\infty$-algebra of local observables'. In this paper,
we illuminate its conceptual role further.

\medskip

\noindent {\bf Summary of results.}
We identify the
higher Kirillov-Kostant-Souriau $L_\infty$-algebra cocycle
that classifies the 
$L_\infty$-algebra of local observables
as an extension of the Hamiltonian vector fields (theorem \ref{TheExtensionClassifiedByTheCocycle})
and
show how this result immediately
gives a construction
of `higher Heisenberg $L_\infty$-algebras'
(section \ref{TheHeisenbergExtension}). As an example, we obtain
a direct rederivation (example \ref{StringAsHeisenberg})
of the $\mathfrak{string}_{\mathfrak{g}}$-Lie 2-algebra
as the Heisenberg Lie 2-algebra of a compact simple Lie
group $G$ \cite{RogersString}.

We briefly recall the construction of the higher prequantum automorphism
group of a higher prequantum bundle, which is described with more detail
in \cite{companionArticle}. {We construct a dg Lie algebra (def.\
  \ref{dgLieXOmega}) that can be thought of as modeling the
  ``infinitesimal elements'' of this higher automorphism group in
  terms of the {\v C}ech-Deligne cocycle for the prequantum bundle.
(Similar dg Lie models for the ``infinitesimal symmetries''
of a $U(1)$-bundle gerbe were constructed by Collier \cite{Collier:2011}.)}

{We prove explicitly that our dg Lie algebra of infinitesimal quantomorphisms is
equivalent, as an $L_\infty$-algebra, to the
$L_\infty$-algebra  of local observables of the corresponding
pre-$n$-plectic form (theorem \ref{theorem.main-theorem}).}

Finally,
we show that this construction induces an inclusion of
the $L_\infty$-algebra of local observables
into higher Courant and higher Atiyah Lie algebras
(section \ref{InclusionIntoAtiyahAndCourant}).

\begin{remark}
  All of the constructions and results that we discuss here apply to
  the general context of pre-$n$-plectic manifolds, i.e., manifolds
  equipped with a closed $(n+1)$-form. Non-degeneracy conditions on
  the differential form do not play a role. Nevertheless, our
  formalism allows us to restrict to the case of non-degenerate forms,
  and it may be interesting to do so in specific applications.  This
  is analogous to the well-known fact that non-degeneracy is not
  needed to prequantize a symplectic manifold. Indeed, one can proceed
  even further in this case; the full geometric quantization of
  pre-symplectic manifolds is a well-defined and interesting endeavour
  in its own right (e.g.\ \cite{SKT}).

\end{remark}
\medskip

\medskip

\noindent {\bf Motivation and perspective.} The $L_\infty$-algebras of local observables as considered here
appear naturally in traditional field theory in the guise of higher
order local Noether currents. For instance,
it is shown in \cite{Baez:2008bu} how the energy-momentum tensor for the
bosonic string arises in the Lie 2-algebra associated to a multiphase
space for a 1+1 dimensional field theory.
Generally, the classical
Hamilton-de Donder-Weyl field equations in multisymplectic field theory
characterize the higher dimensional infinitesimal flows
in the $L_\infty$-algebra of local observables (Maurer-Cartan elements in the tensor product
with a Grassmann algebra); this is discussed in section
1.2.11.3  of \cite{dcct}.

In a broader perspective, these $L_\infty$-algebras naturally arise in the context of higher geometric prequantization and in particular in the geometric quantization of loop groups by the orbit method, see, e.g., \cite[p.\ 249]{Brylinski:1993} and the discussion in \cite[Sec.\ 2.6.1]{companionArticle}.
This  was
a motivation behind the
  refinement of multisymplectic geometry to homotopy theory developed
  in \cite{Rogers:2010nw}, leading  to a higher Bohr-Sommerfeld-like
  geometric quantization procedure for manifolds equipped with closed
  integral 3-forms \cite[Chap.\ 7]{RogersThesis}.
  These integral 2-plectic stuctures also naturally appear as
  the geometric quantization of Poisson manifolds
  via their associated symplectic groupoids
  (whose multiplicative symplectic form is secretly a 2-plectic simplicial form), see \cite{Bongers13}.

In terms of quantum field theory, higher geometric prequantization
concerns the pre-quantum incarnation of {local} quantum field theory, in the way envisioned by
Freed \cite{Freed}, Baez-Dolan \cite{Baez-Dolan}, and more recently formalized by Lurie \cite{Lurie}.
While Lurie's theorem gives a full characterization of the
topological quantum field theories that are local in this sense,
it is an open problem to find a refinement of the process of quantization
that would ``read in'' higher geometric prequantum data and produce a local
QFT in this sense. The results of the present article, when placed within
the larger context of higher pre-quantum geometry, as discussed more fully in \cite{companionArticle},
are meant to provide some answers to this open question.
Indeed, based on these developments, further progress in this
direction has been made recently in \cite{Nuiten}. A survey is given in section 6 of \cite{dcct}.

It should be remarked that in the present article we are solely interested in the $L_\infty$-algebra structure on local observables and we are not investigating the existence of compatible associative and commutative algebra structures (up to homotopy) making the higher local observables a $\text{Poisson}_\infty$-algebra. This issue will hopefully be investigated elsewhere. It is also worth mentioning that, in parallel to the $L_\infty$-algebras for $n$-plectic geometry as considered here,
there are various other attempts to formulate generalizations of the algebraic structures found in symplectic geometry to multisymplectic geometry \cite{Forger,Kanatch, Richter}. These
differing proposals are not manifestly equivalent, and it would be interesting to understand the relations between these various proposals at a deeper level.

\medskip

\noindent{\it Acknowledgements.} We thank the referee, Johannes Huebschmann, and Jim Stasheff for valuable comments and suggestions on a first version of this article, Yael Fr\'{e}gier and Marco
Zambon for sharing a preliminary draft \cite{FRZ:2013} of their work (joint with
C.L.R.) that inspired some of the ideas developed here, and Ruggero
Bandiera, Christian Blohmann, and Marco Manetti for the invaluably inspiring conversations they
had with D.F.\ on the homotopy fibers of $L_\infty$-morphisms and on
higher symplectic geometry.

C.L.R.\ acknowledges support from the German Research Foundation
(Deutsche Forschungsgemeinschaft (DFG)) through the Institutional
Strategy of the University of G\"{o}ttingen. U.S. was supported by
the Dutch Research Organization (NWO project 613.000.802).

\subsection*{Notation and conventions.}
\subsubsection{Notation for Cartan calculus} \label{notat_cartan}
The Schouten bracket of two decomposable multivector fields
$u_{1} \wedge \cdots \wedge u_{m}, v_{1} \wedge \cdots \wedge v_{n}
\in \LX(X)$ is
\begin{multline} \label{schouten}
\left [ u_{1} \wedge \cdots \wedge u_{m}, v_{1} \wedge \cdots \wedge
  v_{n} \right]
= \\\sum_{i=1}^{m} \sum_{j=1}^{n} (-1)^{i+j} [u_{i},v_{j}]
\wedge u_{1} \wedge \cdots \wedge \hat{u}_{i} \wedge  \cdots \wedge
u_{m}
 \wedge v_{1} \wedge \cdots \wedge \hat{v}_{j} \wedge \cdots \wedge v_{n},
\end{multline}
where $[u_{i},v_{j}]$ is the usual Lie bracket of vector fields.

Given a form $\alpha \in \Omega^{\bullet}(X)$, the \textbf{interior product} of a decomposable
multivector field $v_{1} \wedge \cdots \wedge v_{n}$ with $\alpha$ is
defined as
$\iota_{v_{1} \wedge \cdots \wedge v_{n}} \alpha = \iota_{v_{n}} \cdots
\iota_{v_{1}} \alpha,
$ 
where $\iota_{v_{i}} \alpha$ is the usual interior product of vector
fields and differential forms. The interior product of an arbitrary
multivector field is obtained by extending the above formula by $C^\infty(X;\mathbb{R})$-linearity.
The \textbf{Lie derivative} $\L_{v}$ of a differential form along a multivector field $v \in
\LX(X)$ is defined via the graded commutator of $d$ and $\iota(v)$:
$
\L_{v} \alpha =  d \iota_v \alpha - (-1)^{\deg{v}} \iota_v d\alpha,
$ 
where $\iota(v)$ is considered as a degree $-\deg{v}$ operator.

The last identity we will need involving multivector fields is for the graded commutator of
the Lie derivative and the interior product. Given $u,v \in
\LX(X)$, we have the Cartan identity
\begin{equation} \label{Cartan_commutator}
\iota_{[u,v]} \alpha = (-1)^{(\deg{u}-1)\deg{v}} \L_{u} \iota_v  \alpha - \iota_v\L_{u} \alpha.
\end{equation}

\subsubsection{Conventions on chain and cochain complexes}
We will work mostly with chain complexes and homological degree conventions. The differential of a chain complex $(A_\bullet,d)$ will have degree $-1$:
$
\cdots\to A_{n+1}\xrightarrow{d} A_n \xrightarrow{d}A_{n-1}\to \cdots
$.
The shift functor $A_\bullet \mapsto A[1]_\bullet$ will act by $A[1]_k=A_{k-1}$. In particular, if $V$ is a vector space, seen as a chain complex concentrated in degree zero, $V[n]$ will be the chain complex consisting of $V$ concentrated in degree $n$.
A cochain complex $(A^\bullet,d)$ will have a differential of degree +1,
$
\cdots\to A^{n-1}\xrightarrow{d} A^n \xrightarrow{d}A^{n+1}\to \cdots,
$
 and will be identified with a chain complex (with the same differential) by the rule $A_k=A^{-k}$. In particular chain complexes concentrated in non-negative degree will correspond to cochain complexes concentrated in nonpositive degree, and vice versa. On cochain complexes the shift functor $A^\bullet \mapsto A[1]^\bullet$ will act by $A[1]^k=A^{k+1}$.

\subsubsection{Conventions and notation for $L_\infty$-algebras}
We will assume the reader is familiar with the homotopy theory of dg-Lie and $L_\infty$-algebras. A comprehensive account can be found in \cite{Loday-Vallette}.
We will follow homological degree
conventions, as in \cite{Lada-Markl:1995}, so that the differential of a dg-Lie algebra and of an $L_\infty$-algebra will have degree $-1$.
All examples of $L_\infty$-algebras $\mathfrak{g}$ given here will have
their underlying chain complex $\mathfrak{g}_\bullet$ concentrated in
non-negative degree. An $L_\infty$-algebra concentrated in degrees 0 through $(n-1)$
will be called a \emph{Lie $n$-algebra}.

An $L_\infty$-algebra whose $k$-ary brackets for $k \geq 2$
are trivial, i.e., a plain chain complex, is called an \emph{abelian} $L_\infty$-algebra.
If $\mathfrak{h}$ is an abelian $L_\infty$-algebra with underlying chain complex
$\mathfrak{h}_\bullet$, then we also write $\mathbf{B}\mathfrak{h}$ for the
abelian $L_\infty$-algebra with underlying chain complex $\mathfrak{h}_\bullet[1]$.
In particular, for $n \in \mathbb{N}$ we write $\mathbf{B}^n \mathbb{R}=\mathbb{R}[n]$ for the abelian $L_\infty$-algebra
whose underlying chain complex is $\mathbb{R}$ concentrated in degree $n$.

 An $L_\infty$-morphism of the form $\mathfrak{g} \to \mathbf{B}A$, for $A$ an abelian $L_\infty$-algebra, will be called an \emph{$L_\infty$-algebra cocycle} on the $L_{\infty}$-algebra
$\mathfrak{g}$ with coefficients in $A$. For $\mathfrak{g}$ a Lie algebra and $A=\mathbb{R}[n]$, these
  are just the traditional cocycles used in Lie
  algebra cohomology. See
\cite{NikolausSchreiberStevensonI,NikolausSchreiberStevensonII} for a discussion of $L_\infty$-algebra extensions in the broader context of principal $\infty$-bundles.

The (non-full) inclusion of dg-Lie algebras into $L_\infty$-algebras
is a part of an adjunction
  \begin{equation}  \label{TheAdjunction}
    (\mathcal{R} \dashv i)
	:
    \xymatrix{
	  L_\infty\mathrm{Alg}
	  \ar@<+4pt>@{<-}[r]^{i}
	  \ar@<-4pt>[r]_{\mathcal{R}}
	  &
	  \mathrm{dgLie}
	}
	\,,
  \end{equation}
  see for instance \cite[Proposition
  11.4.5]{Loday-Vallette}.
  We will call $i \circ \mathcal{R}$ the \emph{rectification functor}
  for $L_\infty$-algebras, and will often leave the (non-full)
  embedding $i$ notationally implicit.
In particular,
for any $L_\infty$-algebra $\mathfrak{g}$
there is a \emph{canonical} $L_\infty$-algebra homomorphism
$\mathfrak{g} \stackrel{v_{\mathfrak{g}}}{\to} \mathcal{R}(\mathfrak{g})$, namely,  the unit of the adjunction, such that
every $L_\infty$ morphism $f_\infty:\mathfrak{g}\to A$ to a dg-Lie algebra $A$
uniquely factors as
$
\mathfrak{g}\xrightarrow{v_{\mathfrak{g}}} \mathcal{R}(\mathfrak{g})\xrightarrow{\xi_A\circ \mathcal{R}(f_\infty)}A,
$
where $\xi_A:\mathcal{R}(A)\to A$ is the dg-Lie algebra morphism in the factorization of the identity of $A$ as
$
A\xrightarrow{v_{A}} \mathcal{R}(A)\xrightarrow{\xi_A}A$.

 There is a wealth of presentations for the homotopy theory of
 $L_\infty$-algebras,
 given by a web of model category structures
 with Quillen equivalences between them \cite{Pridham}.
 Here we make use of the model structures due to \cite{Hinich, Hinich:2001},
 from which one can distill the following statement: 
   the category 
    of dg-Lie algebras (over the real numbers)
   carries a model category structure in which the weak equivalences
   are the quasi-isomorphisms on the underlying chain complexes,
   and the fibrations are the degreewise surjections on the underlying chain
   complexes.
   Moreover, if we define a morphism $\mathfrak{g} \to \mathfrak{h}$
   in $L_\infty\mathrm{Alg}$ to be a weak equivalence iff the
   underlying morphism of complexes $\mathfrak{g}_\bullet \to
   \mathfrak{h}_\bullet$ is a quasi-isomorphism, then the adjunction
   $(\mathcal{R} \dashv i)$ 
   induces an equivalence between the homotopy theories of dg-Lie algebras
   and $L_\infty$-algebras. In particular, the components of the unit
   $\mathfrak{g} \stackrel{v_{\mathfrak{g}}}{\to}
   \mathcal{R}(\mathfrak{g})$ and counit $\mathcal{R}(A) \xto{\xi_{A}}
   A$ of this adjunction
   are weak equivalences.
   
\subsubsection{Conventions on stacks and higher stacks}
While this article focuses on homotopy Lie theory, we do mention
at some points the corresponding constructions in
higher smooth stacks, according to \cite{companionArticle}.
A detailed overview of this formalism is given in Sec.\ 3.1 in
\cite{FScSt}.
Smooth stacks are taken to be stacks over the category of all
smooth manifolds equipped with its standard Grothendieck topology
of good open covers. Equivalently but more conveniently these are
stacks over just the subcategory $\mathrm{CartSp}$
of Cartesian spaces $\{\mathbb{R}^n\}_{n \in \mathbb{N}}$ (or equivalently of open $n$-balls),
regarded as smooth manifolds. A higher smooth stack may always be presented
as a Kan-complex valued functor on $\mathrm{CartSp}^{\mathrm{op}}$ and the
homotopy theory $\mathbf{H}$ of smooth stacks is given by the category of such functors
with stalkwise homotopy equivalences of Kan complexes universally turned
into actual homotopy equivalences:
$
  \mathbf{H} := L_{\mathrm{lhe}} \,\mathrm{Func}(\mathrm{CartSp}^{\mathrm{op}}, \mathrm{KanCplx})
  \,.
$
In the applications of the present
article all examples of such objects are either given by sheaves of
chain complexes $A_\bullet$ of abelian groups in non-negative degrees under the Dold-Kan correspondence
$
  \mathrm{DK}
  :
    \mathrm{Ch}_{\geq \bullet}(\mathrm{Ab})
    \xrightarrow{\simeq}
    \mathrm{AbGrp}^{\Delta^{\mathrm{op}}}
    \xrightarrow{\mathrm{forget}}
    \mathrm{KanCplx},
$
or are the {\v C}ech nerve $\check{C}(\mathcal{U})$ of an open cover
$\mathcal{U} = \{U_i \to X\}_i$ of a smooth manifold $X$.
If $\mathcal{U}$ is a good cover and if $A_\bullet$ is $\mathrm{CartSp}$-acyclic
(which it is in all the examples we consider), then the function complex
$\mathbf{H}(X,A) \simeq \mathrm{Func}(\check{C}(\mathcal{U}), \mathrm{DK}(A_\bullet))$
is the traditional cocycle complex of {\v C}ech hypercohomology of $X$ with
coefficients in $A_\bullet$.

\section{Higher prequantum geometry over smooth manifolds}

We briefly review here the basic notions of higher prequantum geometry
over smooth manifolds that we will use throughout the article.
First in \ref{nPlecticManifolds} we recall the notion of
pre-$n$-plectic manifolds and their Hamiltonian vector fields
and then in \ref{PrequantumnPlecticManifolds} their pre-quantization
by {\v C}ech-Deligne cocycles.

\subsection{\texorpdfstring{$n$}{n}-Plectic manifolds and their Hamiltonian vector fields}
 \label{nPlecticManifolds}

In \cite{Baez:2008bu} the following terminology has been introduced.

\begin{definition}
\label{n-plectic_def}
  A \emph{pre-$n$-plectic manifold} $(X,\omega)$ is a smooth manifold $X$
  equipped with a closed $(n+1)$-form $\omega \in \Omega^{n+1}_{\mathrm{cl}}(X)$.
   If the contraction map $\hat{\omega} \maps TX \to \Lambda^{n} T^{\ast}X$
  is injective, then $\omega$ is called \emph{non-degenerate} or \emph{$n$-plectic}
  and $(X,\omega)$ is called an \emph{$n$-plectic manifold}.
\end{definition}
\begin{example}
  For $n = 1$ an $n$-plectic manifold is equivalently an ordinary symplectic manifold.
A compact connected simple Lie group
equipped with its canonical left invariant differential 3-form
$\omega := \langle -,[-,-]\rangle$ is a 2-plectic manifold.
  \label{CompactSimpleLieGroup2Plectic}
\end{example}

\begin{definition}
Let $(X,\omega)$ be a pre-$n$-plectic manifold. If a vector field $v$ and an $(n-1)$-form $H$ are related by
$
\iota_v\omega+dH=0
$
then we say that $v$ is a Hamiltonian field for $H$ and that $H$ is a Hamiltonian form for $v$.
  We denote by
  $
   {\mathrm{Ham}}^{n-1}(X)\subseteq \mathfrak{X}(X)\oplus\Omega^{n-1}(X)
  $
  the subspace of pairs $(v,H)$ such that $\iota_v\omega+dH=0$. We call this the
  space of \emph{Hamiltonian pairs}.
\label{VectorFieldsWithHamiltonian}
\label{HamiltonianVectorFields}
The image $\mathfrak{X}_{\mathrm{Ham}}(X)\subseteq \mathfrak{X}(X)$ of the projection ${\mathrm{Ham}}^{n-1}(X)\to \mathfrak{X}(X)$ is called the space of \emph{Hamiltonian vector fields} of $(X,\omega)$.
\end{definition}

\begin{remark}\label{remark.hamiltonian}
Given a pre-$n$-plectic manifold $(X,\omega)$
We have a short exact sequence
of vector spaces
$
0\to \Omega^{n-1}_{\mathrm{cl}}(X)\to {\mathrm{Ham}}^{n-1}(X)\to \mathfrak{X}_{\mathrm{Ham}}(X)\to 0,
$
i.e., closed $(n-1)$-forms are Hamiltonian, with zero Hamiltonian vector field.
  It is immediate from the definition that Hamilton vector fields
  preserve the pre-$n$-plectic form $\omega$, i.e.,
  $\mathcal{L}_v\omega=0$. Indeed, since $\omega$ is closed, we have
  $\mathcal{L}_v\omega=d\iota_v\omega=-d^2H_v=0$. Therefore the
  integration of a Hamiltonian vector field gives a diffeomorphism of
  $X$ preserving the pre-$n$-plectic form: a
  \emph{Hamiltonian $n$-plectomorphism}.
\end{remark}
\begin{lemma}\label{lemma.hamiltonian}
The subspace $\mathfrak{X}_{\mathrm{Ham}}(X)$ is a Lie subalgebra of $\mathfrak{X}(X)$.
\end{lemma}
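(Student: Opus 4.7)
The plan is a direct computation using the Cartan identity (\ref{Cartan_commutator}) and the observation in remark \ref{remark.hamiltonian} that $\mathcal{L}_v\omega=0$ whenever $v$ is Hamiltonian. Since $\mathfrak{X}_{\mathrm{Ham}}(X)$ is clearly a linear subspace of $\mathfrak{X}(X)$ (if $(v_i,H_i)\in{\mathrm{Ham}}^{n-1}(X)$, then $(\lambda_1 v_1+\lambda_2 v_2,\lambda_1 H_1+\lambda_2 H_2)$ lies there as well), the only thing to check is closure under the Lie bracket of vector fields.

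So let $v_1,v_2\in\mathfrak{X}_{\mathrm{Ham}}(X)$ with Hamiltonian forms $H_1,H_2\in\Omega^{n-1}(X)$, i.e.\ $\iota_{v_i}\omega+dH_i=0$. I want to produce an $(n-1)$-form $H_{12}$ such that $\iota_{[v_1,v_2]}\omega+dH_{12}=0$. Applying the Cartan commutator identity (\ref{Cartan_commutator}) with $u=v_1$, $v=v_2$, both of degree $1$, gives
\[
\iota_{[v_1,v_2]}\omega=\mathcal{L}_{v_1}\iota_{v_2}\omega-\iota_{v_2}\mathcal{L}_{v_1}\omega.
\]
By remark \ref{remark.hamiltonian}, $\mathcal{L}_{v_1}\omega=0$, so the second term drops out. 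Substituting $\iota_{v_2}\omega=-dH_2$ and using that the Lie derivative commutes with the de Rham differential, we obtain
\[
\iota_{[v_1,v_2]}\omega=-\mathcal{L}_{v_1}dH_2=-d\,\mathcal{L}_{v_1}H_2.
\]
Hence $H_{12}:=\mathcal{L}_{v_1}H_2$ is a Hamiltonian form for $[v_1,v_2]$, so $[v_1,v_2]\in\mathfrak{X}_{\mathrm{Ham}}(X)$.

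There is no real obstacle here; the argument is essentially the same one that shows Hamiltonian vector fields form a Lie subalgebra in ordinary symplectic geometry, and the only mild subtlety is making sure the sign conventions from the graded Cartan identity (\ref{Cartan_commutator}) reduce correctly in the case where both arguments are ordinary vector fields of degree $1$.
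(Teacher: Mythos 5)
Your proof is correct and follows essentially the same route as the paper: apply the Cartan identity \eqref{Cartan_commutator}, use $\mathcal{L}_{v_1}\omega=0$ from remark \ref{remark.hamiltonian}, and commute $\mathcal{L}_{v_1}$ past $d$ to exhibit a Hamiltonian form for $[v_1,v_2]$. The only (immaterial) difference is the choice of Hamiltonian: you take $\mathcal{L}_{v_1}H_2$, while the paper pushes one step further to $\iota_{v_1\wedge v_2}\omega$ (the two differ by the exact form $d\iota_{v_1}H_2$), the latter being the representative that reappears as the binary bracket of $L_\infty(X,\omega)$.
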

\begin{proof}
Let $v_1$ and $v_2$ be Hamiltonian vector fields, and let $H_1$, $H_2$ be
their respective Hamiltonian forms. By  $\mathcal{L}_{v_{1}}\omega=0$ and
by the Cartan formulas, we get
$
\iota_{[v_{1},v_{2}]}\omega=[\mathcal{L}_{v_{1}},\iota_{v_{2}}]\omega=-\mathcal{L}_{v_{1}}dH_{2}=-d\mathcal{L}_{v_{1}}H_2=d\iota_{v_{1}}\iota_{v_{2}}\omega,
$
i.e., the vector field $[v_{1},v_{2}]$ is Hamiltonian, with Hamiltonian $\iota_{v_{1}\wedge  v_{2} }\omega$.
\end{proof}

\begin{remark}Hamiltonian vector fields on a pre-$n$-plectic manifold $(X,\omega)$ are by definition those vector fields $v$ such that $\iota_v\omega$ is exact. One may relax this condition and consider \emph{symplectic vector fields} instead, i.e., those vector fields $v$ such that $\mathcal{L}_v\omega=0$, or, equivalently, such that $\iota_v\omega$ is closed. Then the arguments in Remark \ref{remark.hamiltonian} and in Lemma \ref{lemma.hamiltonian} show that symplectic vector fields form a Lie subalgebra $\mathfrak{X}_{\mathrm{symp}}(X)$ of $\mathfrak{X}(X)$ and that $\mathfrak{X}_{\mathrm{Ham}}(X)\subseteq \mathfrak{X}_{\mathrm{symp}}(X)$ is a Lie ideal.
\end{remark}

\subsection{Prequantization of (pre-)\texorpdfstring{$n$}{n}-plectic manifolds}
 \label{PrequantumnPlecticManifolds}

The traditional notion of prequantization of a presymplectic manifold $(X,\omega)$
is equivalently a lift of the presymplectic form, regarded as a de Rham 2-cocycle,
to a degree 2 cocycle in \emph{ordinary differential cohomology}
(see, for instance \cite[Section 2.2]{Brylinski:1993}). Equivalently, this is a lift of $\omega$
to a connection $\nabla$ on a $U(1)$-principal bundle on $X$ with curvature $F_\nabla = \omega$.
Accordingly, the prequantization of a pre-$n$-plectic manifold is naturally
defined to be a lift of $\omega$ regarded as a degree $(n+1)$ cocycle in de Rham
cohomology to a cocycle of degree $(n+1)$ in
ordinary differential cohomology.
\medskip

\begin{definition}
  For $X$ a smooth manifold and $\mathcal{U} = \{U_i \to X\}$
  an open cover, we write $(\Tot(\cU,\Omega),\dt)$
  for the corresponding {\v C}ech-de Rham total complex,
  i.e.,  the cochain complex with underlying graded vector space
  $
   \Tot^{n}(\cU,\Omega)  = \bigoplus_{i+j=n} \!\check{C}^{i}(\cU,\Omega^{j})
  $
  and whose differential is given on elements
  $\bar{\theta} = \sum_{i=0}^{n} \theta^{n-i}
  $ with  $\theta^{n-i} \in \check{C}^{i}(\cU,\Omega^{n-i})$ by
$
\dt \theta^{n-i} = \delta \theta^{n-i} + (-1)^{i} d \theta^{n-i}.
$
\end{definition}

\begin{definition}
The cochain complex of sheaves
\[
C^\infty(-;U(1))\xrightarrow{d\text{log}}\Omega^1(-)\xrightarrow{d}\Omega^2(-)\xrightarrow{d}\cdots\cdots\xrightarrow{d}\Omega^{n}(-)\xrightarrow{d}\Omega^{n+1}(-)\to\cdots,
\]
with $C^\infty(-;U(1))$ in degree zero,
will be called the \emph{Deligne complex} and will be denoted by the symbol $\underline{U}(1)_{\mathrm{Del}}$. Its truncation
\[
C^\infty(-;U(1))\xrightarrow{d\text{log}}\Omega^1(-)\xrightarrow{d}\Omega^2(-)\xrightarrow{d}\cdots\cdots\xrightarrow{d}\Omega^{n}(-)\to 0\to 0\to \cdots
\]
will be denoted by $\underline{U}(1)_{\mathrm{Del}}^{\leq n}$.
\end{definition}
It follows from the above definition that a degree $n$ {\v C}ech-Deligne cocycle
in $\underline{U}(1)_{\mathrm{Del}}^{\leq n}$
is
$
\bar{A} = \sum^{n}_{i=0} A^{n-i}$, with $A^{n-i} \in
\check{C}^{i}(\mathcal{U},\Omega^{n-i})$ and $A^{0} \in \check{C}^{n}(\mathcal{U},\underline{U}(1))$,
satisfying
\begin{equation}
\begin{split}
\delta A^{n-i} &= (-1)^{i} dA^{n-i -1}, \quad i=0,\hdots,n-2\\
\delta A^{1}&= (-1)^{n-1} d\mathrm{log} A^0; \qquad
\delta A^{0} =1
\end{split}
\end{equation}
 \label{DeligneCocycle}

\begin{definition}
The \emph{$n$-stack of principal $U(1)$-$n$-bundles (or
$(n-1)$-bundle gerbes) with connection}  is the $n$-stack
presented via applying the Dold-Kan construction to the presheaf $\underline{U}(1)_{\mathrm{Del}}^{\leq n}[n]$, regarded as a presheaf of chain complexes concentrated in non-negative degree. It will be denoted by the symbol $\mathbf{B}^{n}U(1)_{\mathrm{conn}}$.
\end{definition}

The commutative diagram
\[
\scalebox{0.92}{
\xymatrix{
C^\infty(-;U(1))\ar[r]^{d\text{log}}\ar[d]&\Omega^1(-)\ar[d]\ar[r]^{d}&\cdots\ar[r]^{d}&\Omega^{n-1}(-)\ar[d]\ar[r]^{d}&\Omega^{n}(-)\ar[d]^d\\
0\ar[r]&0\ar[r]&\cdots\ar[r]&0\ar[r]&\Omega^{n+1}(-)_{\mathrm{cl}}
}
}
\]
presents the morphism of stacks
$
F:\mathbf{B}^{n}U(1)_{\mathrm{conn}}\to \Omega^{n+1}(-)_{\mathrm{cl}}
$
that maps a principal $U(1)$-$n$-bundle with connection to its \emph{curvature} $(n+1)$-form.

\begin{definition}
 Let $(X,\omega)$ be a pre-$n$-plectic manifold. A
\emph{prequantization} of $(X,\allowbreak \omega)$ is a lift
\[\scalebox{0.92}{
\xymatrix{
&\mathbf{B}^{n}U(1)_{\mathrm{conn}}\ar[d]^F\\
X\ar[r]^-{\omega}\ar[ru]^{\nabla}&\Omega^{n+1}(-)_{\mathrm{cl}}.
}
}
\]
We call the triple $(X,\omega,\nabla)$ a prequantized pre-$n$-plectic manifold.
  \label{PrequantizationOfnPlectic}
\end{definition}
Local data for a prequantization $(X,\omega,\nabla)$ are conveniently expressed in terms of the \v{C}ech-Deligne total complex.
Namely, let $\mathcal{U}$ be a good cover of $X$; then a pre-$n$-plectic structure on $X$ is the datum of a closed element $\omega$ in $\check{C}^{0}(\mathcal{U},\underline{U}(1)_{\mathrm{Del}}^{\leq n+1})$. Moreover, if $(X,\omega)$ admits a prequantization, then the datum of a prequantization is an element $A$ in $\mathrm{Tot}^n(\mathcal{U},\underline{U}(1)_{\mathrm{Del}})$ such that $d_{\mathrm{Tot}}A=\omega$.

\begin{remark}\label{remark.deligne}
It is a well know fact that $(X,\omega)$ admits a prequantization if
and only if it is an \emph{integral} pre-symplectic manifold, i.e., if and only if the closed form $\omega$ represents an integral class in
de Rham cohomology; see, e.g., \cite{Brylinski:1993}.
Indeed, since the shifted Deligne complex $\underline{U}(1)_{\mathrm{Del}}[n]$
is an acyclic resolution of the cochain complex of sheaves $\flat \mathbf{B}^nU(1)$ consisting of locally constant $U(1)$-valued functions placed in degree $-n$, we see that a pre-$n$-plectic structure $\omega$ is prequantizable if and only if $\omega$ defines the trivial class in the degree $n+1$ \v{C}ech cohomology of $X$ with coefficients in the discrete abelian group $U(1)$. By the short exact sequence of groups
$
0\to\mathbb{Z}\to \mathbb{R}\to U(1)\to 1
$
and by the \v{C}ech-de Rham isomorphism $H^n_{\mathrm{dR}}(X,\mathbb{R})\cong \check{H}^n(X,\mathbb{R})$, this is equivalent to requiring that the de Rham class of $\omega$ is an integral class.
\end{remark}

\section{The \texorpdfstring{$L_\infty$}{Loo}-algebra of local observables and its KKS \texorpdfstring{$L_\infty$}{Loo}-cocycle}
 \label{LooXomega}
To any pre-$n$-plectic manifold $(X,\omega)$ one can associate
an $L_\infty$-algebra $L_\infty(X,\omega)$, as defined in \cite{FRZ:2013,Rogers:2010nw},
which we may think of as the higher  local observables on $(X,\omega)$.
This is an $L_\infty$-extension of the Lie algebra of Hamiltonian vector fields on $(X,\omega)$
by the $(n-1)$-shifted truncated de Rham complex of $X$.
We briefly recall this construction in \ref{ThePoissonBracketAlgebra} below.

For $(V, \omega)$ an ordinary symplectic vector space, we may regard it as
a symplectic manifold that is canonically equipped with a $V$-action
by Hamiltonian vector fields, with $V$ regarded as the abelian Lie algebra
of constant (left invariant) vector fields on itself.
The evaluation map at zero $\iota_{-\wedge -}\omega\vert_0 : V \times V \to \mathbb{R}$
of the symplectic form is then a Lie algebra 2-cocycle on $V$ and hence
defines an extension of Lie algebras.
This is famous as the \emph{Heisenberg Lie algebra} extension and $\iota_{-\wedge -}\omega\vert_0$
is the \emph{Kirillov-Kostant-Souriau cocycle} that classifies it
(see example \ref{CocycleOvernPlecticVectorSpaces} below).
More generally, for any symplectic manifold, the KKS 2-cocycle classifies the underlying Lie algebra of the Poisson algebra as a central extension of the Hamiltonian vector fields  \cite{Kostant:1970,Souriau:1967}. For symplectic vector spaces, the restriction of the KKS 2-cocycle to the constant Hamiltonian vector fields is precisely the above cocycle.
We describe in
\ref{TheHeisenbergLInfinityCocycle} below a further generalization of
this to a class of $L_\infty$-algebra $(n+1)$-cocycles on Hamiltonian
vector fields over pre-$n$-plectic manifolds.
We call these the \emph{higher Kirillov-Kostant-Souriau $L_\infty$-cocycles}.
In \ref{TheKSExtension} we prove that the $L_\infty$-algebra extension
that is classified by the KKS $(n+1)$-cocycle is indeed again
the Poisson-bracket $L_\infty$-algebra of local observables.

\subsection{The \texorpdfstring{$L_\infty$}{Loo}-algebra of local observables}
 \label{ThePoissonBracketAlgebra}
We recall the construction of the $L_\infty$-algebra of local
observables associated to a pre-$n$-plectic manifold.
It is best seen in the light of the following immediate consequence of
Cartan's ``magic formula'' $\mathcal{L}_v=d \iota_v+ \iota_vd$.
\begin{lemma}\label{tech_lemma}
Let $X$ be a smooth manifold
and let  $\beta$ be an  $n$-form (not necessarily closed) on $X$.
Given $k$ vector fields $v_1,\dots,v_k$ ($k\geq 1$) on $X$, the following identity holds:
 \begin{align*}
(-1)^{k}d \iota_{v_{1} \wedge\cdots \wedge v_{k}} \beta =&
 \sum_{1 \leq i < j \leq
  k} (-1)^{i+j} \iota_{[v_{i},v_{j}] \wedge v_{1} \wedge \cdots
  \wedge \hat{v}_{i} \wedge \cdots \wedge \hat{v}_{j} \wedge \cdots \wedge v_{k}}\beta\\
&+\sum_{i=1}^{k} (-1)^{i}\iota_{ v_{1} \wedge \cdots
  \wedge \hat{v}_{i} \wedge \cdots \wedge {v}_{k}}\mathcal{L}_{v_i}\beta+ \iota_{v_{1} \wedge \cdots
  \wedge   {v}_{k}} d\beta.
\end{align*}
\end{lemma}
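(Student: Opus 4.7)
The natural approach is induction on $k$, with the base case $k=1$ being simply a rewriting of Cartan's magic formula: $-d\iota_{v_1}\beta = -\mathcal{L}_{v_1}\beta + \iota_{v_1}d\beta$, which matches the claimed identity since the double sum is empty and the single sum contributes only $(-1)^1\iota_\emptyset \mathcal{L}_{v_1}\beta = -\mathcal{L}_{v_1}\beta$.

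For the inductive step, set $V := v_1 \wedge \cdots \wedge v_k$ and $V' := v_1 \wedge \cdots \wedge v_{k-1}$, so that $\iota_V = \iota_{v_k} \iota_{V'}$. Applying Cartan's magic formula to the outermost $v_k$ gives
\[
d\iota_V\beta \;=\; d\iota_{v_k}(\iota_{V'}\beta) \;=\; \mathcal{L}_{v_k}(\iota_{V'}\beta) \;-\; \iota_{v_k} d\iota_{V'}\beta.
\]
I would then use the Cartan commutator identity \eqref{Cartan_commutator}, which for the vector field $v_k$ and the multivector field $V'$ of degree $k-1$ reduces to $\mathcal{L}_{v_k}\iota_{V'}\beta = \iota_{V'}\mathcal{L}_{v_k}\beta + \iota_{[v_k,V']}\beta$. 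Expanding $[v_k,V']$ via the Schouten formula \eqref{schouten} (which for a vector field against a decomposable multivector field collapses to a single sum over ordinary Lie brackets) and simultaneously substituting the induction hypothesis for $d\iota_{V'}\beta$ yields four families of terms.

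The verification is then a matter of matching these four families with the four types of terms appearing on the right-hand side of the formula for $V$: the $i<j<k$ commutator terms come from $\iota_{v_k}$ applied to the commutator sum in the induction hypothesis; the $j=k$ commutator terms come from $(-1)^k \iota_{[v_k,V']}\beta$ (after rewriting $[v_k,v_j] = -[v_j,v_k]$); the $i<k$ Lie-derivative terms come from $\iota_{v_k}$ applied to the Lie-derivative sum; the $i=k$ Lie-derivative term comes from $\iota_{V'}\mathcal{L}_{v_k}\beta$; and the final $\iota_V d\beta$ descends directly from $\iota_{v_k}\iota_{V'}d\beta$. The key combinatorial identity that makes the pieces fit is $\iota_{v_k}\iota_W\alpha = \iota_{W\wedge v_k}\alpha$ for any multivector $W$, which ensures that appending $v_k$ at the end of the wedge products in the inductive data reproduces exactly the wedge orderings in the target formula.

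The only real obstacle is sign bookkeeping. There are three independent sources of signs that must be tracked simultaneously: the $(-1)^k$ overall prefactor that changes to $(-1)^{k-1}$ in the induction hypothesis, the sign $(-1)^{i+j}$ that must acquire a shift when $j=k$ is pulled out, and the antisymmetry $[v_k,v_j] = -[v_j,v_k]$ needed to put the brackets in the canonical order $i<j$. I would devote a separate short computation to each of these, particularly verifying that the composite sign $(-1)^k \cdot (-1)^{j+1} \cdot (-1)$ arising from the $j=k$ terms reduces to the required $(-1)^{j+k}$.
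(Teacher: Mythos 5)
Your induction is correct and is essentially the argument the paper has in mind: the lemma is presented there as an immediate consequence of Cartan's magic formula, with no written proof beyond citing the special case (Lemma 3.7 of \cite{Rogers:2010nw}), which is itself obtained by exactly this kind of induction on $k$ using $\mathcal{L}_{v_k}=d\iota_{v_k}+\iota_{v_k}d$ together with the commutator identity \eqref{Cartan_commutator} and the Schouten expansion \eqref{schouten}. Your sign bookkeeping checks out — in particular the composite sign $(-1)^{k}(-1)^{j+1}(-1)=(-1)^{j+k}$ for the $j=k$ bracket terms — and the paper's convention $\iota_{v_1\wedge\cdots\wedge v_m}=\iota_{v_m}\cdots\iota_{v_1}$ does give the identity $\iota_{v_k}\iota_{W}=\iota_{W\wedge v_k}$ on which your matching of terms relies.
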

A  special case of the above appeared as Lemma 3.7 in \cite{Rogers:2010nw}.
We thank M.\ Zambon for pointing out to us this generalization.

\begin{proposition}[Thm.\ 5.2 \cite{Rogers:2010nw}; Thm.\ 4.7 \cite{FRZ:2013}]
\label{ham-infty}
Let $(X,\omega)$ be a pre-$n$-plectic manifold.
There exists a Lie $n$-algebra $L_{\infty}(X,\omega)$ whose underlying
chain complex is
\[
\Omega^0(X)\xrightarrow{d}\Omega^1(X)\xrightarrow{d}\cdots \xrightarrow{d}\Omega^{n-2}(X)\xrightarrow{(0,d)}{\mathrm{Ham}}^{n-1}(X)
\,,
\]
with ${\mathrm{Ham}}^{n-1}(X)$ in degree zero, and
whose multilinear brackets $l_i$ are
 \begin{equation*}
\scalebox{0.92}{$\displaystyle{ l_{1}(x)=
  \begin{cases}
  0 \oplus dx & \text{if $\deg{x}=1$},\\
  d x & \text{if $\deg{x}  > 1$,}
  \end{cases}
   \qquad\qquad
  l_{2}(x_{1},x_{2}) =
  \begin{cases}
    [v_{1},v_{2}] + \iota_{v_{1} \wedge v_{2}} \omega  &    \text{if $\deg{x_{1}}=\deg{ x_{2}}$}  = 0,\\
      0  & \text{otherwise},
  \end{cases}}$}
 \end{equation*}
 and, for $k > 2$:
 \[
 l_{k}(x_{1},\ldots,x_{k}) =
 \begin{cases}
 \vs{k}\iota_{v_{1} \wedge \cdots \wedge v_{k}} \omega & \text{if $\deg{ x_{1}}= \cdots =\deg{x_{k}} =0$}, \\
 0 & \text{otherwise},
 \end{cases}
 \]
where $x=v+\eta^\bullet$ denotes a generic element $(\eta^0,\eta^1,\dots,v+\eta^{n-1})$ in the chain complex. 
\end{proposition}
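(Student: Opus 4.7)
My plan is to verify the three axioms of a Lie $n$-algebra for the given data: (i) the underlying chain complex and all brackets are well-defined (land in the correct spaces); (ii) the brackets are graded antisymmetric; (iii) they satisfy the $L_\infty$-Jacobi identities $J_k$. The essential input is Lemma~\ref{tech_lemma} applied to $\omega$.

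For (i), the only non-trivial check is that $l_2((v_1,H_1),(v_2,H_2)) = ([v_1,v_2],\iota_{v_1\wedge v_2}\omega)$ actually lies in $\mathrm{Ham}^{n-1}(X)$, which is precisely the computation carried out in Lemma~\ref{lemma.hamiltonian}. The remaining checks, such as $(0,d\eta)\in \mathrm{Ham}^{n-1}(X)$ for $\eta\in\Omega^{n-2}(X)$, or that $\iota_{v_1\wedge\cdots\wedge v_k}\omega\in\Omega^{n+1-k}(X)$ sits in homological degree $k-2$, are immediate. For (ii), graded antisymmetry on degree-$0$ elements reduces to ordinary antisymmetry, which holds because $[-,-]$ on vector fields is antisymmetric and $\iota_{v_1\wedge\cdots\wedge v_k}\omega$ is antisymmetric in the $v_i$ by the wedge.

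The content is in (iii), and I would proceed by cases on the homological degrees of the $n$ inputs to $J_n$. Since $l_k$ for $k\ge 2$ vanishes unless all inputs have degree $0$, nearly every term in $J_n$ vanishes trivially when any input has positive degree; what survives in those cases is either $l_1^2=0$ (from $d^2=0$) or a Leibniz-type identity that reduces to the $l_2$ computation of (i). The one genuinely non-trivial case is $J_n$ applied to $n$ degree-$0$ Hamiltonian pairs $(v_1,H_1),\dots,(v_n,H_n)$. Among the partitions $i+j=n+1$ only two families of terms survive: the single term $l_1\circ l_n$, producing $\mp\, d\iota_{v_1\wedge\cdots\wedge v_n}\omega$ in the $\Omega^{n-1}$-slot, and the terms $l_{n-1}\circ(l_2\otimes\mathrm{id}^{\otimes n-2})$ summed over shuffles, producing $\sum_{i<j}\pm\,\iota_{[v_i,v_j]\wedge v_1\wedge\cdots\hat v_i\cdots\hat v_j\cdots\wedge v_n}\omega$. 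All other combinations either feed a positive-degree output of $l_i$ ($i\geq 3$) into $l_j$ ($j\geq 2$), killing the result, or feed a degree-$0$ element into $l_1$. The surviving identity is precisely Lemma~\ref{tech_lemma} applied to $\beta=\omega$ at $k=n$, with the $\iota\,\mathcal{L}_{v_i}\omega$ terms vanishing by $\mathcal{L}_{v_i}\omega=0$ (Remark~\ref{remark.hamiltonian}) and the $\iota\, d\omega$ term vanishing by closedness of $\omega$.

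The main obstacle I anticipate is bookkeeping the Koszul signs: matching the signs arising from the shuffle sum in $J_n$ with those produced by the combinatorial factor $\vs{k} = -(-1)^{\binom{k+1}{2}}$ in the definition of $l_k$ and with the signs $(-1)^{i+j}$ and $(-1)^k$ appearing in Lemma~\ref{tech_lemma}. This factor was in fact calibrated in \cite{Rogers:2010nw,FRZ:2013} precisely so that the identification is exact, and I would verify it by a direct induction on $k$, or equivalently by passing through the skew-symmetrization conventions of Lada--Markl and identifying the result with the Cartan-calculus identity of Lemma~\ref{tech_lemma}.
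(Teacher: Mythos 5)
Your proposal is correct and takes essentially the same route as the paper: the paper presents Proposition \ref{ham-infty} as imported from \cite{Rogers:2010nw,FRZ:2013} and frames it precisely as you do, with the only substantive content being the generalized Jacobi identities on tuples of degree-zero Hamiltonian pairs, which collapse (since $l_{k\geq 2}$ vanishes on positive-degree inputs, $\mathcal{L}_{v_i}\omega=0$, and $d\omega=0$) to the Cartan-calculus identity of Lemma \ref{tech_lemma} applied to $\beta=\omega$. The sign calibration of $\vs{k}$ that you flag is exactly the bookkeeping carried out in those references, so nothing essential is missing.
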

\begin{definition}\label{def.local-observables}
We call the Lie $n$-algebra $L_{\infty}(X,\omega)$ defined in the
statement of Proposition \ref{ham-infty}  the \emph{$L_\infty$-algebra of local observables}
on $(X,\omega)$.
\label{TheLooXOmega}
\end{definition}

\begin{remark}
  The projection map of def. \ref{HamiltonianVectorFields}
  uniquely extends to a morphism of $L_\infty$-algebras of the form
  $L_\infty(X,\omega)
	  \xrightarrow{\pi_L}
	  \mathfrak{X}_{\mathrm{Ham}}(X)
	\,,
  $
i.e., local observables of $(X,\omega)$ cover Hamiltonian vector fields. Below in \ref{TheHeisenbergLInfinityCocycle}
we turn to the classification of this map by an $L_\infty$-algebra cocycle.

  \label{MapFromLocalObservablesToHamiltonianVectorFields}
\end{remark}

\begin{example}
 If $n = 1$ then $(X,\omega)$ is a pre-symplectic manifold, the
 chain complex underlying $L_\infty(X,\omega)$ is
$
{\mathrm{Ham}}^0(X)=\{v+H\in \mathfrak{X}(X)\oplus C^\infty(X;\mathbb{R})\,|\, \iota_v\omega+dH=0\},
$
\allowbreak and the Lie bracket is
$
[v_1+H_1,v_2+H_2]= [v_1,v_2]+\iota_{v_1\wedge v_2}\omega.
$
If moreover $\omega$ is non-degenerate so that $(X,\omega)$ is symplectic,
then the projection $v+H \mapsto H$ is a linear isomorphism
${\mathrm{Ham}}^0(X) \stackrel{\simeq}{\to} C^\infty(X;\mathbb{R})$.
It is easy to see that under this isomorphism
$L_\infty(X,\omega)$ is the underlying Lie algebra of the usual
Poisson algebra of functions.  See also Prop.\ 2.3.9 in \cite {Brylinski:1993}.
  \label{OrdinaryPoissonBracket}
\end{example}

\subsection{The Kirillov-Kostant-Souriau \texorpdfstring{$L_\infty$}{Loo}-algebra cocycle}
 \label{TheHeisenbergLInfinityCocycle}

 Here we present an $L_\infty$-algebra cocycle 
 on the Lie algebra of Hamiltonian vector fields on
 a pre-$n$-plectic manifold, which generalizes the traditional
KKS cocycle and the Heisenberg cocycle to higher
 geometry.

\begin{definition}
For $X$ a smooth manifold, denote by $\mathbf{B}\mathbf{H}(X,\flat\mathbf{B}^{n-1}\mathbb{R})$
the abelian Lie $(n+1)$-algebra given by the chain complex
$
\Omega^0(X)\!\xrightarrow{d}\Omega^1(X)\!\xrightarrow{d}\cdots \!\xrightarrow{d}\Omega^{n-1}(X)\!\xrightarrow{d}d\Omega^{n-1}(X),
$
with $d\Omega^{n-1}(X)$ in degree zero.
\label{ResolvedDeloopedShiftedTruncatedDeRhamComplex}
\end{definition}
\begin{remark}
  The complex of def. \ref{ResolvedDeloopedShiftedTruncatedDeRhamComplex}
  serves as a resolution of the cocycle complex
$
\Omega^0(X)\xrightarrow{d}\Omega^1(X)\xrightarrow{d}\cdots \xrightarrow{d}\Omega_{\mathrm{cl}}^{n-1}(X)\xrightarrow{} 0
\,,
$
{for the de Rham cohomology of $X$ up to degree $n-1$}
once delooped (i.e., shifted).
  \label{CoefficientsOvernConnectedSpace}
\end{remark}
\begin{proposition}
  \label{prop.just-above}
  Let $(X,\omega)$ be a pre-$n$-plectic manifold. The multilinear maps
\begin{align*}
  &\omega_{[1]} : v\mapsto -\iota_v\omega;
  \qquad \qquad
  \omega_{[2]} : v_1\wedge v_2\mapsto \iota_{v_1\wedge v_2}\omega;
  \qquad \qquad
     \cdots\\
     &
  \omega_{[n+1]} : v_1\wedge v_2\wedge\cdots  v_{n+1}\mapsto -(-1)^{\binom{n+1}{2}}
\iota_{v_1\wedge v_2\wedge \cdots \wedge v_{n+1}}\omega
\end{align*}
define an $L_\infty$-morphism
$
  \omega_{[\bullet]}
  :
\mathfrak{X}_{\mathrm{Ham}}(X)\xrightarrow{} \mathbf{B} \mathbf{H}(X,\flat\mathbf{B}^{n-1}\mathbb{R})\,,
$
and hence an $L_\infty$-alge\-bra $(n+1)$-cocycle on the Lie algebra of
Hamiltonian vector fields, def. \ref{HamiltonianVectorFields}, with
values in the abelian $(n+1)$-algebra of
def. \ref{ResolvedDeloopedShiftedTruncatedDeRhamComplex}.
\end{proposition}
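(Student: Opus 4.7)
The plan is to verify directly the defining equations of an $L_\infty$-morphism, which in the present context collapse dramatically: the source $\mathfrak{X}_{\mathrm{Ham}}(X)$ is a Lie algebra concentrated in degree $0$ (with only a binary bracket), and the target $\mathbf{B}\mathbf{H}(X,\flat\mathbf{B}^{n-1}\mathbb{R})$ is abelian (only a differential). Writing $f_k := \omega_{[k]}$, the axioms reduce for each $k \geq 1$ to a single coherence relation of the shape
\[
d\,f_k(v_1,\ldots,v_k) \;=\; \sum_{1\leq i<j\leq k}(-1)^{i+j+1}\, f_{k-1}\bigl([v_i,v_j],\,v_1,\ldots,\hat v_i,\ldots,\hat v_j,\ldots,v_k\bigr),
\]
with signs dictated by the Lada--Markl conventions. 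The $k=1$ instance is trivially satisfied: $\omega_{[1]}(v) = -\iota_v\omega$ equals $dH_v$ and sits in the degree-$0$ piece $d\Omega^{n-1}(X)$ of the target chain complex, from which no differential exits.

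For arities $k \geq 2$, the main input is Lemma~\ref{tech_lemma} applied to $\beta = \omega$. Two simplifications occur: since $\omega$ is closed, the rightmost term $\iota_{v_1\wedge\cdots\wedge v_k}d\omega$ vanishes; and since each $v_i$ is Hamiltonian, Remark~\ref{remark.hamiltonian} gives $\mathcal{L}_{v_i}\omega = 0$, killing the entire middle sum in the lemma. What remains is the identity
\[
(-1)^k\, d\,\iota_{v_1 \wedge\cdots\wedge v_k}\omega \;=\; \sum_{1\leq i<j\leq k} (-1)^{i+j}\,\iota_{[v_i,v_j]\wedge v_1 \wedge\cdots \hat v_i \cdots \hat v_j\cdots \wedge v_k}\,\omega,
\]
which is, up to an overall multiplicative sign, exactly the required coherence relation. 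Multiplying through by the prefactor $-(-1)^{\binom{k}{2}}$ built into the definition of $\omega_{[k]}$, and using the binomial identity $\binom{k}{2} - \binom{k-1}{2} = k-1$ to compare with the prefactor $-(-1)^{\binom{k-1}{2}}$ of $\omega_{[k-1]}$, one finds that the signs match precisely, since $(-1)^{\binom{k}{2}+k} = (-1)^{\binom{k-1}{2}+1}$.

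The bulk of the work is therefore confined to sign bookkeeping: aligning our convention $\omega_{[k]}(v_1,\ldots,v_k) = -(-1)^{\binom{k}{2}}\iota_{v_1\wedge\cdots\wedge v_k}\omega$ with the Koszul signs in the $L_\infty$-morphism axioms, and the alternating signs $(-1)^{i+j}$ produced by Lemma~\ref{tech_lemma}. Once the sign computation above is recorded, no further geometric input is needed; the only ingredients are the closedness of $\omega$ and the defining equation $\iota_v\omega + dH = 0$ of Hamiltonian vector fields. The main obstacle, such as it is, is purely combinatorial: the sign verification, though straightforward, must be executed carefully to confirm that every arity $k \geq 2$ of the morphism axiom is encoded in the single closed form of Lemma~\ref{tech_lemma}.
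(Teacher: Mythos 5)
Your argument is correct and follows essentially the same route as the paper's proof: you first observe that $\omega_{[1]}$ lands in $d\Omega^{n-1}(X)$ by the Hamiltonian condition, reduce the $L_\infty$-morphism axioms to the single coherence relation for maps from a Lie algebra into an abelian $L_\infty$-algebra, and then verify it via Lemma \ref{tech_lemma} with $\beta=\omega$, using $d\omega=0$ and $\mathcal{L}_{v_i}\omega=0$. The only difference is that you carry out the prefactor sign comparison $(-1)^{\binom{k}{2}}$ versus $(-1)^{\binom{k-1}{2}}$ explicitly (correctly, via $\binom{k}{2}-\binom{k-1}{2}=k-1$), whereas the paper absorbs it silently into the displayed identity.
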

\begin{proof}
  First notice that the underlying map on chain complexes is indeed
  well defined: by definition of Hamiltonian vector fields, if $v$ is
  Hamiltonian, then there exists an $(n-1)$-form $H$ such that
  $\iota_v\omega+dH=0$ and so $\omega_{[\bullet]}$ takes
  values in $\mathbf{B}
  \mathbf{H}(X,\flat\mathbf{B}^{n-1}\mathbb{R})$.
 In general, an $L_\infty$-algebra morphism $f \maps \mathfrak{g} \to \mathfrak{h}$
 from a Lie algebra $\mathfrak{g}$ to an abelian Lie $(n+1)$-algebra $\mathfrak{h}$ is equivalently a
  collection of linear maps
  $\{ f_k : \wedge^k \mathfrak{g}_\bullet \to \mathfrak{h}_{\bullet} \}_{k = 1}^{n+1}$
  with $\deg{f_{k}} =k-1$ and such that the following holds for all $k \geq 1$
  \begin{equation*}    \label{MorphismToAbelianLooAlgebra}
   d_{\mathfrak h}f_k(v_1^{}\wedge\cdots\wedge v_k^{})=
   \sum_{i<j}(-1)^{i+j+1} f_{k-1}([v_i,v_j]^{}_{\mathfrak g}\wedge
    v_1\wedge\cdots
    \wedge\widehat{v_i}\wedge\cdots\wedge\widehat{v_j}\wedge
    \cdots\wedge v_{k}^{}).
   \end{equation*}
Therefore, checking that $\omega_{[\bullet]}$
is an $L_\infty$-morphism reduces to checking the identities
\[
d\iota_{v_1^{}\wedge\cdots\wedge v_k^{}}\omega=
\sum_{i<j} (-1)^{i+j+k} \iota_{[v_i,v_j]\wedge
v_1\wedge\cdots
\wedge\widehat{v_i}\wedge\cdots\wedge\widehat{v_j}\wedge
\cdots\wedge v_{k+1}^{}}\omega.
\]
These are satisfied -- since the $\omega$ is closed and the $v_i$ are Hamiltonian -- by Lemma \ref{tech_lemma}.
\end{proof}

\begin{definition}
The degree $(n+1)$  \emph{higher Kirillov-Kostant-Souriau $L_\infty$-cocycle}
as\-sociated to the pre-$n$-plectic manifold $(X,\omega)$ is the $L_{\infty}$-morphism
  $
	  \mathfrak{X}_{\mathrm{Ham}}(X)
	  \xrightarrow{\omega_{[\bullet]}}
	  \mathbf{B}\mathbf{H}(X,\allowbreak \flat\mathbf{B}^{n-1}\mathbb{R})
	  $
  given in Prop.\ \ref{prop.just-above}.
  \label{HigherHeisenbergCocycle}
\end{definition}
If $\rho \colon \mathfrak{g} \to \mathfrak{X}_{\mathrm{Ham}}(X)$ is
an $L_\infty$-morphism encoding an action of an $L_\infty$-algebra
$\mathfrak{g}$ on $(X,\omega)$ by Hamiltonian vector fields, then we call
the composite $\rho^* \omega_{[\bullet]}$ the corresponding
\emph{Heisenberg $L_\infty$-algebra cocycle}. This terminology is motivated by the
following example \ref{CocycleOvernPlecticVectorSpaces}. Further discussion
of this aspect is below in Section \ref{TheHeisenbergExtension}.

\begin{example}
   Let $V$ be a vector space equipped with a skew-symmetric multilinear
  form $\omega: \wedge^{n+1} V \to \R$. Since $V$ is an abelian Lie
  group, we obtain via left-translation of $\omega$ a unique closed invariant form,
  which we also denote as $\omega$. By identifying $V$ with left-invariant
  vector fields on $V$, the Poincare lemma implies that we have a
  canonical inclusion
$
    j_V : V \hookrightarrow \mathfrak{X}_{\mathrm{Ham}}(V)
  $
  of $V$ regarded as an abelian Lie algebra into the Hamiltonian vector fields on $(V,\omega)$
  regarded as a pre $n$-plectic manifold.
  Since $V$ is contractible as a topological manifold, we have,
  by remark \ref{CoefficientsOvernConnectedSpace}, a quasi-isomorphism
  $	  \mathbf{B}\mathbf{H}(V;\flat\mathbf{B}^{n-1}\mathbb{R})
	  \xrightarrow{\simeq}
	  \mathbb{R}[n]
  $
  of abelian $L_\infty$-algebras, given by evaluation at $0$.
  Under this equivalence the restriction of the
  $L_\infty$-algebra cocycle $\omega_{[\bullet]}$ of def. \ref{HigherHeisenbergCocycle}
  along $j_V$ is an $L_\infty$-algebra map of the form
  $
    j_V^* \omega_{[\bullet]}
	:
	  V
	  \to
	  \mathbb{R}[n]
	$
  whose single component is the linear map
  $
    \iota_{(-)} \omega : \wedge^{n+1} V \to \mathbb{R}
	\,.
  $
  For $n=1$ and $(V,\omega)$ an ordinary symplectic vector space
the map
  $\iota_{(-)}\omega : V \wedge V \to \mathbb{R}$ is the traditional
  \emph{Heisenberg cocycle}.
  \label{CocycleOvernPlecticVectorSpaces}
\end{example}

\begin{remark}
The KKS $(n+1)$-cocycle has a natural geometric origin as
the Lie differentiation of a morphism of higher smooth groups
canonically arising in higher geometric prequantization, see
\cite{companionArticle}. This can be seen as a deeper conceptual
justification for def.\ \ref{HigherHeisenbergCocycle}.
\end{remark}

\subsection{The Kirillov-Kostant-Souriau \texorpdfstring{$L_\infty$}{Loo}-extension}
 \label{TheKSExtension}

 Using the results presented above, we can now state and prove the
 main theorem of this section.

\begin{theorem}
Given a pre-$n$-plectic manifold $(X,\omega)$, the
higher KKS $L_\infty$-cocycle $\omega_{[\bullet]}$
(def. \ref{HigherHeisenbergCocycle}) and the projection map
$\pi_L \maps L_{\infty}(X,\omega) \to \Xham(X)$
(remark \ref{MapFromLocalObservablesToHamiltonianVectorFields}) form a homotopy fiber sequence of
$L_\infty$-algebras, i.e., fit into a homotopy pullback
 diagram of the form
\[\scalebox{0.92}{
\xymatrix{
  L_\infty(X,\omega)
    \ar[d]^{\pi_L}
    \ar[r]
	&
    0\ar[d]
  \\
  \mathfrak{X}_{\mathrm{Ham}}(X)
    \ar[r]^-{\omega_{[\bullet]}}
	&
  \mathbf{B}\mathbf{H}(X,\flat\mathbf{B}^{n-1}\mathbb{R}).
}
}
\]
\label{TheExtensionClassifiedByTheCocycle}
\end{theorem}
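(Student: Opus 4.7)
The plan is to exhibit $L_\infty(X,\omega)$ as an explicit model for the homotopy fiber of the KKS cocycle $\omega_{[\bullet]}$. Because the target $\mathbf{B}\mathbf{H}(X,\flat\mathbf{B}^{n-1}\mathbb{R})$ is an abelian $L_\infty$-algebra, I would use the standard ``mapping cocone'' recipe for a cocycle with abelian coefficients: the underlying chain complex is the cocone of the chain-level part $\omega_{[1]}$, and the $L_\infty$-structure is built from the Lie bracket on $\mathfrak{X}_{\mathrm{Ham}}(X)$ twisted by the higher components $\omega_{[k]}$ of the cocycle.

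The first step is to identify the underlying chain complex. The cocone of the chain map $\omega_{[1]}\colon \mathfrak{X}_{\mathrm{Ham}}(X)\to d\Omega^{n-1}(X)$ has $\Omega^{n-1-k}(X)$ in degree $k\geq 1$ and $\mathfrak{X}_{\mathrm{Ham}}(X)\oplus \Omega^{n-1}(X)$ in degree $0$, equipped with the twisted differential $(v,H)\mapsto \iota_v\omega+dH$ landing in $d\Omega^{n-1}(X)$ in degree $-1$. Taking the canonical truncation in non-negative degrees replaces the degree-$0$ summand by $\mathrm{Ham}^{n-1}(X)=\ker(\iota_{(-)}\omega+d)$, which recovers precisely the chain complex underlying $L_\infty(X,\omega)$ from Proposition \ref{ham-infty}.

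Next I match the $L_\infty$-brackets. For a cocycle with abelian coefficients, the twisted brackets on the cocone are determined, on degree-$0$ inputs, by combining the bracket of $\mathfrak{X}_{\mathrm{Ham}}(X)$ with the components $\omega_{[k]}$; brackets with any positive-degree input vanish. Since $\omega_{[k]}(v_1,\ldots,v_k)=\vs{k}\iota_{v_1\wedge\cdots\wedge v_k}\omega$, this reproduces verbatim the bracket formulae of Proposition \ref{ham-infty}, and the fact that the binary bracket $([v_1,v_2],\iota_{v_1\wedge v_2}\omega)$ lies in $\mathrm{Ham}^{n-1}(X)$ is already recorded in Lemma \ref{lemma.hamiltonian}. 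The $L_\infty$-identities for the twisted brackets reduce, via Lemma \ref{tech_lemma}, to the same chain of Cartan calculus computations used in the proof of Proposition \ref{prop.just-above}. In this way the inclusion $L_\infty(X,\omega)\hookrightarrow \mathrm{hofib}(\omega_{[\bullet]})$ furnished by the smart truncation becomes an $L_\infty$-quasi-isomorphism sitting over the identity of $\mathfrak{X}_{\mathrm{Ham}}(X)$.

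The main obstacle is justifying rigorously that the twisted cocone above genuinely represents the homotopy fiber in the model category of $L_\infty$-algebras. I would handle this in one of two ways: (i) rectify $\omega_{[\bullet]}$ through the adjunction \eqref{TheAdjunction} to a dg-Lie morphism and construct an explicit fibration replacement, exploiting the abelian nature of the target (the path object for $\mathbf{B}\mathbf{H}(X,\flat\mathbf{B}^{n-1}\mathbb{R})$ is transparent), then identify the strict fiber with the above cocone up to quasi-isomorphism; or (ii) invoke the general theory of homotopy fibers of $L_\infty$-cocycles with abelian coefficients, as developed in \cite{FRZ:2013} and the related work with Bandiera, Blohmann, and Manetti acknowledged in the introduction. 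Either route suffices, and once the model-categorical identification is in place no further Cartan-calculus input is needed beyond Lemma \ref{tech_lemma}.
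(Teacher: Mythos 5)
Your strategy is sound and the identifications you make are right: the cocone of $\omega_{[1]}$ does have $H_{-1}=0$ (since $(v,H)\mapsto \iota_v\omega+dH$ surjects onto $d\Omega^{n-1}(X)$), so the non-negative truncation is a quasi-isomorphic $L_\infty$-subalgebra by Lemma \ref{lemma.hamiltonian}, and the twisted brackets do reproduce Proposition \ref{ham-infty}. But your route is genuinely different in its mechanics from the paper's. The paper never forms the mapping cocone: it proves a recognition principle (Theorem \ref{DomenicosLemma}, established in Appendix B via the rectification adjunction), and then replaces $0\to\mathbf{B}\mathbf{H}(X,\flat\mathbf{B}^{n-1}\mathbb{R})$ by the surjection out of the cone $K$ of the identity of $\Omega^0(X)\to\cdots\to\Omega^{n-1}(X)$. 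Because $K$ is acyclic and concentrated in non-negative degrees, the strict chain-level pullback is \emph{literally} $L_\infty(X,\omega)_\bullet$ — no truncation step — and the remaining work is to lift the chain map $f_1$ into $K$ to an $L_\infty$-morphism with explicit higher components $f_k$ (with entries $\vs{k}\,\iota_{v_1\wedge\cdots\wedge v_k}\omega$) making the square commute strictly. Your cocone-plus-truncation model buys the bracket formulas ``for free'' from the general twisted-cocone recipe, at the cost of having to invoke that recipe as a theorem; the paper's choice of $K$ buys a strict on-the-nose identification of the underlying complex, at the cost of one explicit computation.

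The place where your sketch stops is exactly where the paper does its real work. The assertion that the twisted cocone of an $L_\infty$-cocycle with abelian coefficients models the homotopy fiber is not proved in your proposal; your route (ii) outsources it to \cite{FRZ:2013} and unpublished work, which this paper deliberately does not rely on, and your route (i) (rectify, pull back a path-object fibration, compare) is essentially a re-derivation of Theorem \ref{DomenicosLemma}. Note moreover that to apply such a recognition principle to your model you must exhibit an $L_\infty$-morphism — not merely a chain map — from the (truncated) cocone into the acyclic replacement so that the square commutes in $L_\infty\mathrm{Alg}$; since $\omega_{[\bullet]}\circ\pi_L$ has nontrivial higher components, so must this lift, and writing it down lands you back at the paper's $f_k$'s. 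So the proposal is correct in outline, but completing it honestly either reproduces the paper's argument or imports the cocone theorem from elsewhere.
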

\begin{proof}
By theorem \ref{DomenicosLemma} it is sufficient to
replace the map of chain complexes $0 \to \mathbf{B}\mathbf{H}(X,\flat\mathbf{B}^{n-1}\mathbb{R})$
by any degreewise surjection $K\xrightarrow{\pi_R} \mathbf{B}\mathbf{H}(X,\flat\mathbf{B}^{n-1}\mathbb{R})$
out of an exact chain complex $K$,
such that its pullback along $\omega_1$
is isomorphic to the underlying chain complex of $L_\infty(X,\omega)$
and then to show that the $L_\infty$-structure of $L_\infty(X,\omega)$
sits compatibly in the resulting square diagram.
We take $K$ to be the cone of the identity of the chain complex
$
\Omega^0(X)\xrightarrow{d}\Omega^1(X)\xrightarrow{d}\cdots \xrightarrow{d}\Omega^{n-1}(X)
$
with $\Omega^{n-1}(X)$ in degree zero, and take $\pi_R$ to be the chain map
given by the vertical arrows in the following diagram:
\[
\scalebox{0.885}{
\xymatrix@R=-2pt{\Omega^0(X)\ar[r]^{d}\ar[ddr]^{\mathrm{id}}\ar[dddddd]_{\mathrm{id}} & \Omega^1(X)\ar[r]^{d}\ar[ddr]^{\mathrm{id}} & \Omega^2(X) \ar[r]^{d}\ar[ddr]^{\mathrm{id}}
&{\phantom{mmmmm\text{$\Omega^1_1$}}}\ar[r]^{d}\ar[ddr]^{\mathrm{id}}&\Omega^{n-1}(X)\ar[rdd]^{\mathrm{id}}&&
\\ & \oplus & \oplus & \cdots&\oplus&\\
& \Omega^0(X)\ar[r]^{d}\ar[dddd]_{\mathrm{id}\oplus0}& \Omega^1(X) \ar[r]^{d}&{\phantom{\text{$\Omega^1$}mmmmm}}\ar[r]^{d}&\Omega^{n-2}(X)\ar[dddd]_{\mathrm{id}\oplus0}\ar[r]^-{d}&\Omega^{n-1}(X)\ar[dddd]_{d}\\
\\
{\phantom{{\displaystyle{\int}}}}
\\
{\phantom{{\displaystyle{\int}}}}
\\
& & & & &
\\
\Omega^0(X)\ar[r]^-{d}&\Omega^1(X)\ar[r]^-{d}&\Omega^2(X)\ar[r]^-{d}&{\phantom{mm}}\cdots{\phantom{mm}} \ar[r]^-{d}&\Omega^{n-1}(X)\ar[r]^-{d}&d\Omega^{n-1}(X)
}
}
\]
By inspection and comparison with prop. \ref{ham-infty} it is easy to see that the fiber product of chain complexes of $K$ and $\mathfrak{X}_{\mathrm{Ham}}(X)$ over $\mathbf{B}\mathbf{H}(X,\flat\mathbf{B}^{n-1}\mathbb{R})$
is the chain complex $L_\infty(X,\omega)_\bullet$
that underlies the $L_\infty$-algebra of local observables:
 \[
 \scalebox{0.92}{
   \raisebox{20pt}{
 \xymatrix{
    L_\infty(X,\omega)_\bullet
	\ar[r]^{f_1}\ar[d]_{\pi_L}&K\ar[d]^{\pi_R}
    \\
    \mathfrak{X}_{\mathrm{Ham}}(X)
	  \ar[r]^-{\omega_{[1]}}
	  &
	\mathbf{B}\mathbf{H}(X,\flat\mathbf{B}^{n-1}\mathbb{R}),
 }
 }
 \,,
 }
 \]
 where  $f_1$ is the morphism
\[
f_1:
v+\eta^\bullet\mapsto
\left(
\begin{matrix}
0&0&0&\cdots &0&0&\\
&\eta^0 & \eta^1 &\cdots &\eta^{n-3}&\eta^{n-2}&\eta^{n-1}
\end{matrix}
\right).
\]
As we already observed in remark \ref{MapFromLocalObservablesToHamiltonianVectorFields},
 the chain map underlying $\pi_L$ uniquely extends to an $L_\infty$-algebra
 morphism.
 Therefore to complete the proof, it is sufficient to show that we can lift the horizontal chain map $f_1$ above to a morphism of
 $L_\infty$-algebras which makes the diagram
 \[
 \scalebox{0.92}{
 \xymatrix{
L_\infty(X,\omega)\ar[r]^{f}\ar[d]_{\pi_L}&K\ar[d]^{\pi_R}\\
\mathfrak{X}_{\mathrm{Ham}}(X)\ar[r]^-{\omega_{[\bullet]}}&\mathbf{B}\mathbf{H}(X,\flat\mathbf{B}^{n-1}\mathbb{R}),
 }
 }
 \]
  commute.
This is easily realized by defining the  ``Taylor coefficients''  of $f$ for $k\geq 2$  to be the degree $(k-1)$ maps $f_k:\wedge^k L_\infty(X,\omega)\to K$ given by
 \[
  f_k:(v_1+\eta_1^\bullet)\wedge \cdots \wedge (v_k+\eta_k^\bullet)\mapsto\scalebox{0.87}{$ \left(
\begin{matrix}
0&0&0&&&\cdots &0&0&\\
&0 & 0 &\cdots &0&\vs{k}\iota_{v_1\wedge\cdots\wedge v_k}\omega&0&\cdots&0
\end{matrix}
\right).
 $}
 \]
\end{proof}

\subsection{The Heisenberg \texorpdfstring{$L_\infty$}{Loo}-extension}
 \label{TheHeisenbergExtension}

If a Lie algebra $\mathfrak{g}$ acts on an
$n$-plectic manifold by Hamiltonian vector fields,
then the KKS $L_\infty$-extension of $\mathfrak{X}_{\mathrm{Ham}}(X)$,
discussed above in \ref{TheKSExtension},
restricts to an $L_\infty$-extension of $\mathfrak{g}$. This is a generalization
{of Kostant's construction \cite{Kostant:1970} of central extensions of Lie algebras
  to the context of $L_\infty$-algebras. Perhaps the most famous of
  these central extensions is the Heisenberg Lie algebra, which is the
  inspiration behind the following terminology:}

\begin{definition}
  Let $(X,\omega)$ be a pre-$n$-plectic manifold and let $\rho :
  \mathfrak{g} \to \mathfrak{X}_{\mathrm{Ham}}(X)$ be a Lie algebra
  homomorphism encoding an action of $\mathfrak{g}$ on $X$ by
  Hamiltonian vector fields. The corresponding \emph{Heisenberg
    $L_\infty$-algebra extension} $\mathfrak{heis}_\rho(\mathfrak{g})$
  of $\mathfrak{g}$ is the extension classified by the composite
  $L_{\infty}$-morphism $\omega_{[\bullet]}\circ \rho$,  i.e., the
  homotopy pullback on the left of
  $$
  \scalebox{0.92}{
    \raisebox{20pt}{
    \xymatrix{
	  \mathfrak{heis}_\rho(\mathfrak{g})
	  \ar[r]
	  \ar[d]
	  &
	  L_\infty(X,\omega)
	  \ar[r]
	  \ar[d]
	  &
	  0
	  \ar[d]
	  \\
	  \mathfrak{g}
	  \ar[r]^-\rho
	  &
	  \mathfrak{X}_{\mathrm{Ham}}(X)
	  \ar[r]^-{\omega_{[\bullet]}}
	  &
	  \mathbf{B}\mathbf{H}(X,\flat \mathbf{B}^{n-1}\mathbb{R})
	}
	}
	\,.}
  $$
\end{definition}
\begin{remark}
  It is natural to call an $L_\infty$-morphism with values in the
  $L_\infty$-algebra of observables of a pre-$n$-plectic manifold
  $(X,\omega)$ an `$L_\infty$ {comoment map}', which
  generalizes the familiar notion in symplectic geometry. Hence, one
  could say that an action $\rho$ of a Lie algebra $\mathfrak{g}$ on a
  pre-$n$-plectic manifold $(X,\omega)$ via Hamiltonian vector fields
  naturally induces such a co-moment map from the Heisenberg
  $L_\infty$-algebra $\mathfrak{heis}_\rho(\mathfrak{g})$.
 \end{remark}

\begin{example}
  For $(V,\omega)$ a symplectic vector space regarded as a symplectic
  manifold, the translation action of $V$ on itself is {via
    Hamiltonian vector fields} (see example \ref{CocycleOvernPlecticVectorSpaces}). If one denotes by
  $
    \rho : V \to \mathfrak{X}_{\mathrm{Ham}}(X)
  $
 this action, then the induced
  Heisenberg $L_\infty$-extension is the traditional Heisenberg Lie algebra.
\end{example}
\begin{example}
  Let  $G$ be a {(connected)} compact simple Lie group, regarded as a 2-plectic manifold
  with its canonical 3-form $\omega := \langle -,[-,-]\rangle$
  as in example \ref{CompactSimpleLieGroup2Plectic}.  {The
    infinitesimal generators of the action of $G$ on
    itself by right translation are the left invariant vector fields $\gg$, which are
    Hamiltonian.}
{We have $H^{1}_{\mathrm{dR}}(G) \cong
  H^{1}_{\mathrm{CE}}(\gg,\R)=0$, and therefore a weak equivalence:}
  $
	  \mathbf{B} \mathbf{H}(G, \flat\mathbf{B} \mathbb{R})
	  \xrightarrow{\simeq}
	  \mathbb{R}[2]	
  $
  {given by the evaluation at the identity element of $G$.}
  The resulting composite cocycle
  $$
    \langle -,[-,-]\rangle
	:
    \xymatrix{
	  \mathfrak{g}
	  \ar[r]^-\rho
	  &
	  \mathfrak{X}_{\mathrm{Ham}}(X)
	  \ar[r]^-{\omega_{[\bullet]}}
	  &
	  \mathbb{R}[2]
	}
  $$
  is exactly the Lie algebra 3-cocycle that classifies the String Lie-2-algebra.
  By theorem \ref{DomenicosLemma} the String Lie 2-algebra is the homotopy fiber
  of this cocycle, in that we have a homotopy pullback square of $L_\infty$-algebras
  $$
  \scalebox{0.92}{
    \raisebox{20pt}{
    \xymatrix{
	  \mathfrak{string}_{\mathfrak{g}}
	  \ar[r]
	  \ar[d]
	  &
	  0
	  \ar[d]
	  \\
	  \mathfrak{g}
	  \ar[r]^-{\langle -,[-,-]\rangle}
	  &
	  \mathbb{R}[2]
	}}
	\,.}
  $$
  Hence, the String Lie 2-algebra $ \mathfrak{string}_{\mathfrak{g}}$  is the Heisenberg Lie 2-algebra
  of the 2-plectic manifold $(G,\langle -,[-,-]\rangle)$ with its canonical
  $\mathfrak{g}$-action $\rho$, i.e.,   $
    \mathfrak{heis}_\rho(\mathfrak{g}) \simeq \mathfrak{string}_{\mathfrak{g}}
	\,.
  $
  \label{StringAsHeisenberg}
  The relationship between $\mathfrak{string}_{\mathfrak{g}}$ and $L_{\infty}(G,\omega)$
  was first explored in \cite{RogersString}.
\end{example}

\section{The dg-Lie algebra of infinitesimal quantomorphisms}
 \label{ThedgLieAlgOfInfinitesimalQuantomorphisms}
 \label{HigherQuantomorphismGroupsAndTheirLieDerivation}

 The $L_\infty$-algebra $L_\infty(X,\omega)$
 discussed above in Section \ref{LooXomega} has the nice property that the definition
 of its brackets generalizes the definition of the traditional Poisson bracket
 in an elegant way. We now present another $L_\infty$-algebra
 that looks a little less elegant in components, but has
 a more manifest conceptual interpretation, namely as the dg-Lie algebra
 of infinitesimal automorphisms of a $U(1)$-$n$-bundle with connection
that cover the diffeomorphisms of the base.
\nbbt{A main result of this section is Thm.\
  \ref{theorem.main-theorem}, which establishes a weak equivalence
  between the aforementioned dg Lie algebra and $L_{\infty}(M,\omega)$.}

 \subsection{Quantomorphism \texorpdfstring{$n$}{n}-groups}

 Since, by definition, a prequantization of a pre-$n$-plectic manifold
 $(X,\omega)$ is a morphism of higher stacks $X\to \mathbf{B}^{n}U(1)_{\mathrm{conn}}$,
 a prequantized pre-$n$-plectic manifold is naturally an object in the
overcategory (or ``slice topos'') $\mathbf{H}_{/\mathbf{B}^{n}U(1)_{\mathrm{conn}}}$.
This leads to the following definition.
\begin{definition}
  Let $\nabla_0, \nabla_1: X\to \mathbf{B}^nU(1)_{\mathrm{conn}}$ be
  two morphisms representing (or ``modulating'') principal
  $U(1)$-$n$-bundles with connection on $X$. A \textit{1-morphism}
  $(\phi,\eta) \maps \allowbreak  \nabla_0 \to \nabla_1$ in
  $\mathbf{H}_{/\mathbf{B}^{n}U(1)_{\mathrm{conn}}}$ is a homotopy
  commutative diagram of the form
  \[
  \scalebox{0.92}{
    \raisebox{20pt}{ \xymatrix{
	   X \ar[rr]^{\phi}_{\phantom{m}}="s" \ar[dr]_{\nabla_0}^{\ }="t" && X \ar[dl]^{\nabla_1}
	   \\
	   & \mathbf{B}^{n}U(1)_{\mathrm{conn}}
	   \ar@{=>}^\eta "s"; "t"
	}}.}
 \]
 A \textit{2-morphism} $(k,h) \maps (\phi_1, \eta_1) \to (\phi_2,
   \eta_2)$ is only between 1-morphisms such that $\phi_1 = \phi_2$
 and is given by a homotopy commutative diagram of the form
	 	 $$
		 \scalebox{0.92}{
		  \raisebox{20pt}{
	   \xymatrix{
	       X \ar[rr]^{\phi_1=\phi_2}_{\ }="s"
		     \ar@/^.4pc/[dr]|{\nabla_0}^{\ }="t"_{\ }="s2"
			 \ar@/_1.3pc/[dr]_{\nabla_0}^{\ }="t2"
			 && X \ar[dl]^{\nabla_1}
		   \\
		   & \mathbf{B}^nU(1)_{\mathrm{conn}}
		   \ar@{=>}^{\eta_1} "s"; "t"
		   \ar @{=>}_{k} "s2"; "t2"
	   }}
	   \,,
	   }
	 $$
 where one has the (undisplayed) 2-arrow $\eta_2$ on the back face of
 the diagrams and an undisplayed 3-arrow $h:k\circ \eta_1\to \eta_2$ decorating the bulk of the 3-simplex. Higher morphisms are defined similarly.
 \end{definition}
 \begin{remark}
   Since we are dealing with a commutative diagram of morphisms
   between (higher) stacks, we have the homotopy
   $\eta$ appearing here as part of the data of the commutative
   diagram defining a 1-morphism. In particular, isomorphisms (or
   better, equivalences) between $\nabla_0$ and $\nabla_1$ will be
   pairs $(\phi,\eta)$ consisting of a diffeomorphism $\phi \colon X
   \to X$ and a gauge transformation of higher connections
  $
   \eta :
	   \phi^* \nabla_1 \xrightarrow{\simeq}  \nabla_0.
	     $
   In particular, for the 1-plectic (i.e., symplectic) case,
    $\nabla_0$ and $\nabla_1$ correspond to principal $U(1)$-bundles
    with connection. If $X$ is compact, then the 1-morphisms between them correspond to
    ``strict contactomorphisms'' $(P_0,A_0) \to (P_1,A_1)$ between the
    total spaces of the bundles with their connection 1-forms $A_{i} \in
    \Omega^{1}(P_{i};\R)$ regarded as ``regular'' contact forms.
    If $\nabla = \nabla_0 = \nabla_1$ and
    $\nabla$ is regarded as the prequantization of its curvature, i.e.,
    the symplectic 2-form $\omega$, then such a contactomorphism is often called a
    \emph{quantomorphism} in the geometric quantization literature.
  \end{remark}

The automorphisms $\mathbf{Aut}_{/_{\mathbf{B}^nU(1)_{\mathrm{conn}}}}(\nabla)$
  of any object $\nabla \in \mathbf{H}_{/\mathbf{B}^{n}U(1)_{\mathrm{conn}}}$ form an
  ``$n$-group" (see, for example, Sec.\ 2.3
  of \cite{NikolausSchreiberStevensonI}).
And so, motivated by the terminology used in the above remark, we introduce the following definition.
\begin{definition}
  Let $\nabla: X \to \mathbf{B}^nU(1)_{\mathrm{conn}}$ be a morphism
  modulating a $U(1)$-$n$-bun\-dle with connection. The \emph{quantomorphism
    $n$-group} of $\nabla$, denoted $\mathbf{QuantMorph}(\nabla)$,
  is the automorphism $n$-group
  $\mathbf{Aut}_{/_{\mathbf{B}^nU(1)_{\mathrm{conn}}}}(\nabla)$
  equipped with its natural smooth structure.
  \end{definition}
\begin{remark}\label{remark.concretification}
  In the above definition we described
  $\mathbf{QuantMorph}(\nabla)$ as a ``smooth $n$-group''. In order
  to make this precise, we need to say what a smooth family of
  automorphisms is. This is systematically done by working with smooth
  families from the very beginning, i.e., by replacing the hom-spaces
  $\mathbf{H}(X,\mathbf{B}^nU(1)_{\mathrm{conn}})$ by what we call the
  ``concretification'' of the internal homs (the higher mapping
  stacks) $[X,\mathbf{B}^nU(1)_{\mathrm{conn}}]$.  See Sec.\ 2.3.2 of
  \cite{companionArticle} for precise discussion of
  this aspect. The intuition behind this smooth structure - which
    is all that we need for our purposes here - is that
    all local bundle data depend smoothly on a parameter varying
    in the base.
\end{remark}

\subsection{Infinitesimal quantomorphisms as a strict model for the
  \texorpdfstring{$L_\infty$}{Loo}-algebra of observables} \label{inf_quant_sec}
{Since the quantomorphism $n$-group $\mathbf{QuantMorph}(\nabla)$ is
equipped with a smooth structure, it has a notion of ``tangent
vectors''. Roughly speaking, these correspond to maps out of the
formal infinitesimal interval, $\mathrm{Spec}
\bigl(\R[\epsilon]/(\epsilon)^2 \bigr) \to
\mathbf{QuantMorph}(\nabla)$. So it is not surprising that there is
also an abstract notion of ``Lie differentiation'' in this context which,
when applied to the smooth $n$-group $\mathbf{QuantMorph}(\nabla)$,
produces not a Lie algebra, but rather a Lie $n$-algebra, which will be denoted 
$\mathrm{Lie}(\mathbf{Quant}$-$\mathbf{Morph}(\nabla))$.
 (See Sec.\ 3.10.9
and Sec.\ 4.5.1.2 in \cite{dcct} for more details on Lie differentiation). }

The defining equations of $\mathrm{Lie}(\mathbf{QuantMorph}(\nabla))$
can be conceptually described as the infinitesimal
 versions of the defining equations for the quantomorphism
 $n$-group. In particular, a degree zero element in
 $\mathrm{Lie}(\mathbf{QuantMorph}(\nabla))$ will be an infinitesimal
 version of a pair $(\phi : X \xrightarrow{\sim} X,\;h: \phi^* \nabla
 \xrightarrow{\sim} \nabla)$, i.e., a pair $(v,b)$ consisting of
a vector field $v$ on $X$ and an ``infinitesimal homotopy'' $b$ such that
$
b \colon \mathcal{L}_v\nabla \to 0,
$
where $\mathcal{L}_v$ is the Lie derivative along $v$. Degree 1
elements in $\mathrm{Lie}(\mathbf{QuantMorph}(\nabla))$ will be
homotopies between the $b$'s, and so on.
{The notion of taking the Lie derivative of a morphism of higher stacks
may give pause, but it has an obvious interpretation if we represent
the map $\nabla:X\to \mathbf{B}^nU(1)_{\mathrm{conn}}$ as a {\v
  C}ech-Deligne cocycle $\bar{A}$ on $X$
(def. \ref{DeligneCocycle}). In this context, $\mathcal{L}_v\nabla$
corresponds to the usual Lie derivative $\mathcal{L}_v \bar{A}$ for
vector fields acting on local differential forms. Moreover, in this
context the Dold-Kan correspondence tells us that, for example, an
infinitesimal homotopy $b \colon \mathcal{L}_v\nabla \to 0$ is simply
an element $\bar{\theta}$ of the total \v{C}ech-de Rham complex
$\bigl(\Tot^{\bullet}(\cU,\Omega), \dt \bigr)$ satisfying
$
\mathcal{L}_v \bar{A} = \dt \bar{\theta}.
$
}
The above discussion is the intuition behind the
following:
 \begin{defprop}\label{def.dg-lie-quant}
{Let $X$ be a smooth manifold and $n \in \mathbb{N}$.
If $\bar{A}$ is a {\v C}ech-Deligne
$n$-cocycle on $X$ relative to some cover $\mathcal{U}$, then
the \textit{dg Lie algebra of infinitesimal quantomorphisms}
$
\mathrm{dgLie}_{\mathrm{Qu}}(X,\bar{A})
$
is the strict Lie $n$-algebra whose underlying complex is
\begin{equation*}
\begin{split}
\mathrm{dgLie}_{\mathrm{Qu}} (X,\bar{A})^{0} &= \{ v +  \bar{\theta} \in \X(M) \oplus
\Tot^{n-1}(\cU,\Omega) ~\vert ~  \L_{v} \bar{A} = \dt \bar{\theta} \},\\
\mathrm{dgLie}_{\mathrm{Qu}} (X,\bar{A})^{i} &= \Tot^{n-1 - i}(\cU,\Omega) \quad\text{for}\quad  1\leq i\leq n-1,
\end{split}
\end{equation*}
with differential
\[
\scalebox{0.86}{$
\mathrm{dgLie}_{\mathrm{Qu}} (X,\bar{A})^{n-1} \xto{\dt}
\mathrm{dgLie}_{\mathrm{Qu}} (X,\bar{A})^{n-2} \xto{\dt} \cdots \xto{\dt}
\mathrm{dgLie}_{\mathrm{Qu}} (X,\bar{A})^{1} \xto{0\oplus \dt} \mathrm{dgLie}_{\mathrm{Qu}} (X,\bar{A})^{0},
$}
\]
and whose graded Lie bracket is the semidirect product bracket for the
Lie algebra of vector fields acting on differential forms by Lie derivative:
\begin{equation}
\begin{split}
&\bbrac{v_1 + \bar{\theta}_{1}}{v_2 + \bar{\theta}_{2}}  = [v_1,v_2] +
\L_{v_{1}} \bar{\theta}_{2} -\L_{v_{2}} \bar{\theta}_{1};\\
&\bbrac{v + \bar{\theta}}{\bar{\eta}}  = -\bbrac{\bar{\eta}}{v + \bar{\theta}}= \L_{v} \bar{\eta};\qquad\qquad
\bbrac{\bar{\eta}}{\bar{\eta}}=0.
\end{split}
\end{equation}
}
 \label{dgLieXOmega}
\end{defprop}

{
The next theorem reveals the relationship between the above dgla of
infinitesimal quantomorphisms and the $L_{\infty}$-algebra of local
observables. It is the higher analogue of the well-known fact in
traditional prequantization that the underlying Lie algebra of the
Poisson algebra on a prequantized symplectic manifold is isomorphic to
the Lie algebra of $U(1)$-invariant connection-preserving vector fields on the total
space of the prequantum bundle.}

\begin{theorem}
 \label{theorem.main-theorem}
 Let $(X,\omega)$ be an integral pre-$n$-plectic manifold (def.\ \ref{n-plectic_def}), $\mathcal{U}
 $ a good open cover of $X$,
 and $\nabla$  a prequantization of $(X,\omega)$ (def. \ref{PrequantizationOfnPlectic})
 presented by a {\v C}ech-Deligne cocycle $\bar{A} = \sum_{i=0}^{n}A^{n-i}$ in $\mathrm{Tot}^n(\mathcal{U},\underline{U}(1)_{\mathrm{Del}}^{\leq n})$. There
exists an $L_\infty$-quasi-iso\-morphism
$
  f:
    L_\infty(X,\omega)
	\xrightarrow{\simeq}
	\mathrm{dgLie}_{\mathrm{Qu}} (X,\bar{A})
  \,
$
between the $L_\infty$-algebra of local observables
(def.\ \ref{TheLooXOmega}) and the dgla of infinitesimal quantomorphisms
(def.\ \ref{dgLieXOmega}), whose linear term is
\begin{equation*}
f_{1}(x) =
\begin{cases}
v - H\vert_{U_{\alpha}} +\sum_{i=0}^{n} (-1)^{i} \iota_vA^{n-i} & \qquad\forall x =  v +  H \in  {\mathrm{Ham}}^{n-1}(X)\\
-  x\vert_{U_{\alpha}}& \qquad \forall x \in \Omega^{n-1-i}(X) \quad i \geq 1\end{cases}
\end{equation*}
and whose higher components $f_k$ are explicitly determined by Eq.\ \ref{struct_maps}.
\end{theorem}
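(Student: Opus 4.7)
I would construct the quasi-isomorphism $f$ in three stages: check that the asserted $f_1$ is a well-defined chain map into $\mathrm{dgLie}_{\mathrm{Qu}}(X,\bar{A})$, verify that it is a quasi-isomorphism of underlying complexes, and finally exhibit higher Taylor coefficients $\{f_k\}_{k\geq 2}$ completing $f$ to an $L_\infty$-morphism.

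\textbf{Chain map and quasi-isomorphism.} The core computation is that for $x = v + H \in {\mathrm{Ham}}^{n-1}(X)$, the pair $\bigl(v,\,-H|_{U_\alpha} + \sum_{i=0}^{n}(-1)^i \iota_v A^{n-i}\bigr)$ lies in $\mathrm{dgLie}_{\mathrm{Qu}}(X,\bar{A})^0$. This reduces to Cartan's magic formula applied componentwise to $\bar A$,
\[
\mathcal{L}_v \bar{A} = \iota_v \dt \bar{A} + \dt \iota_v \bar{A} = \iota_v \omega + \dt \iota_v \bar{A},
\]
together with the Hamiltonian identity $\iota_v \omega = -dH$ and the Čech-Deligne descent relations $\delta A^{n-i} = (-1)^i d A^{n-i-1}$, which collectively show that $-dH|_{U_\alpha} + \dt\iota_v\bar A$ coincides with $\dt\bar\theta$ for the $\bar\theta$ above. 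Compatibility with differentials on the remaining factors $\Omega^{n-1-i}(X)$ follows by direct inspection. For the quasi-isomorphism step, I would compare the two short exact sequences of complexes
\[
0 \to \Omega^\bullet_{\leq n-1}(X) \to L_\infty(X,\omega)_\bullet \xrightarrow{\pi_L} \Xham(X) \to 0,
\]
\[
0 \to \Tot^{n-1-\bullet}(\cU,\Omega) \to \mathrm{dgLie}_{\mathrm{Qu}}(X,\bar{A})_\bullet \to \Xham(X) \to 0,
\]
both of which are extensions of $\Xham(X)$ established along the lines of theorem \ref{TheExtensionClassifiedByTheCocycle}. The map $f_1$ induces the identity on cokernels, and on kernels restricts to the standard inclusion of global forms as Čech-degree-zero cocycles. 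Since $\mathcal{U}$ is a good cover, this inclusion is a quasi-isomorphism (it is the edge of the Čech-de Rham spectral sequence), and the Five Lemma concludes.

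\textbf{Higher coefficients and the $L_\infty$-relations.} The higher brackets $l_k(v_1,\dots,v_k) = \vs{k}\iota_{v_1\wedge\cdots\wedge v_k}\omega$ of $L_\infty(X,\omega)$ for $k \geq 2$ must be transported across $f$ to the strict bracket of a dgla, which forces nontrivial $f_k$. Motivated by the form of $f_1$, I take the ansatz
\[
f_k(x_1,\dots,x_k) = \pm \sum_{i=0}^{n}(-1)^i \iota_{v_1 \wedge \cdots \wedge v_k} A^{n-i}
\]
on degree-zero inputs (extended by zero to mixed-degree inputs, for which degree counting leaves no room), with signs arranged so that $f_k$ has total degree $k-1$ and lands in $\Tot^{n-k}(\cU,\Omega)$; this is the content of Eq.\ \ref{struct_maps}. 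The $L_\infty$-morphism equations at arity $k$ then decompose, Čech-level by Čech-level, into instances of Lemma \ref{tech_lemma} with $\beta = A^{n-i}$, namely the identity relating $d \iota_{v_1\wedge\cdots\wedge v_k}A^{n-i}$ to contractions along $[v_p,v_q]$, Lie derivatives, and $\iota_{v_1\wedge\cdots\wedge v_k}dA^{n-i}$; the intertwining between distinct Čech levels is precisely supplied by $\delta A^{n-i} = (-1)^i d A^{n-i-1}$, so that $\dt$ on one side and the semidirect product bracket on the other combine to the prescribed equations.

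\textbf{Main obstacle.} The conceptual outline is clear, but the principal difficulty is the simultaneous sign and degree bookkeeping. One must reconcile the Koszul signs of the total differential $\dt = \delta + (-1)^i d$, the signs in Cartan's formula and in Lemma \ref{tech_lemma}, the sign conventions on unshuffles in the $L_\infty$-relations, and the signs built into the brackets $l_k$. Arranging the signs in the ansatz for $f_k$ so that all these contributions cancel correctly is tedious and constitutes the bulk of the explicit verification; there is no obvious shortcut short of a conceptual reinterpretation of $f$ as the Lie differentiation of a morphism of smooth higher groups (cf.\ the remark following def.\ \ref{HigherHeisenbergCocycle}).
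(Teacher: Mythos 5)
The construction of the higher components is where your proposal breaks down, and this is the heart of the theorem. Your ansatz $f_k(x_1,\dots,x_k)=\pm\sum_{i}(-1)^i\iota_{v_1\wedge\cdots\wedge v_k}A^{n-i}$, extended by zero to mixed-degree inputs, is \emph{not} the content of Eq.\ \eqref{struct_maps}: the actual components are $f_m=\ksgn{m}\bigl(\res\circ S_m(x_1,\dots,x_m)+\iota_{v_1\wedge\cdots\wedge v_m}\bar A(m)\bigr)$, where $S_m(x_1,\dots,x_m)=\sum_i(-1)^i\iota_{v_1\wedge\cdots\widehat{v}_i\cdots\wedge v_m}\theta_i$ (Eq.\ \eqref{Sm_def}) contracts the remaining Hamiltonian vector fields into the \emph{differential-form components of the arguments}; in particular $f_m$ is nonzero when exactly one argument has positive degree, so the degree-counting claim is also false. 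These $S_m$-terms are forced. Already at arity two: for Hamiltonian pairs $x_i=v_i+\theta_i$, the bracket $\bbrac{f_1(x_1)}{f_1(x_2)}$ contains $\L_{v_1}(-\res\theta_2)-\L_{v_2}(-\res\theta_1)$, whose exact part $-\res\,d(\iota_{v_1}\theta_2-\iota_{v_2}\theta_1)$ can only be cancelled by $l'_1f_2$ if $f_2$ contains $\res(\iota_{v_1}\theta_2-\iota_{v_2}\theta_1)$; neither $f_1(l_2(x_1,x_2))=[v_1,v_2]-\res(\iota_{v_1\wedge v_2}\omega)+\iota_{[v_1,v_2]}\bar A(1)$ nor the $\bar A$-part of $l'_1 f_2$ supplies such an exact correction, so with your $f_2$ this term survives. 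Even more simply, for inputs $v+H$ and $\eta\in\Omega^{n-1-d}(X)$ with $d\geq 1$, your $f_2$ vanishes on all relevant arguments, $l_2$ vanishes on mixed degrees, and the arity-two relation collapses to $\res(\L_v\eta)=0$, which is false; with the correct $f_2(v+H,\eta)=\res(\iota_v\eta)$ it closes by Cartan's formula. For the same reason, the claim that the $L_\infty$-relations reduce, {\v C}ech level by {\v C}ech level, to Lemma \ref{tech_lemma} applied to the $A^{n-i}$ alone is not tenable: the verification must also process the Lie-derivative and Schouten-type identities for the $\theta_i$-terms, which is exactly what the $S_m$-bookkeeping and the auxiliary computations (Lemmas \ref{I1_lemma}--\ref{dtot_lem} in the appendix) carry out. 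As written, the proposal asserts rather than constructs the $f_k$, and the asserted formula does not satisfy the morphism equations.

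Two smaller points on the earlier steps. The identity $\L_v\bar A=\iota_v\dt\bar A+\dt\iota_v\bar A$ is not literally true, because $\dt$ carries the sign $(-1)^i d$ in {\v C}ech degree $i$; the correct statement, and the reason for the alternating signs in $f_1$, is $\dt\,\iota_v\bar A(1)=d\iota_vA^{n}+\sum_{i\geq1}\L_vA^{n-i}$ with $\bar A(1)=\sum_i(-1)^iA^{n-i}$, which together with $dA^n=\omega\vert_{U_\alpha}$ and $dH=-\iota_v\omega$ gives $\L_v\bar A=\dt\bigl(-\res H+\iota_v\bar A(1)\bigr)$ as required. Your five-lemma route to the quasi-isomorphism is a legitimate alternative to the paper's explicit partition-of-unity homotopy inverse, but the subcomplexes in your two short exact sequences are misidentified: in degree zero the kernels are $\Omega^{n-1}_{\mathrm{cl}}(X)$ and the $\dt$-closed part of $\Tot^{n-1}(\cU,\Omega)$, not all of $\Omega^{n-1}(X)$ and $\Tot^{n-1}(\cU,\Omega)$. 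With these truncations, $\res$ is indeed a quasi-isomorphism for a good cover and the five lemma applies, so that part of the argument is salvageable.
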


\begin{proof}
  The linear morphism $f_1$ is essentially the familiar quasi-isomorphism between the de Rham complex and
  the total \v{C}ech-de Rham complex. Proving that $f_1$ lifts to an $L_{\infty}$-morphism and explicitly determining the higher components of this $L_\infty$-morphism is a lengthy
  but straightforward computation. We report it in Appendix
  \ref{appendix.proof}.
\end{proof}

\begin{remark}
By homological perturbation theory \cite{Huebschmann} one knows that there must exist some $L_\infty$ algebra structure on the chain complex 
\[
\Omega^0(X)\xrightarrow{d}\Omega^1(X)\xrightarrow{d}\cdots \xrightarrow{d}\Omega^{n-2}(X)\xrightarrow{(0,d)}{\mathrm{Ham}}^{n-1}(X)
\,
\]
 making it an $L_\infty$-algebra quasi-isomorphic to the dgla $\mathrm{dgLie}_{\mathrm{Qu}} (X,\bar{A})$. The remarkable information provided by Theorem \ref{theorem.main-theorem} is that this $L_\infty$ algebra structure is identified with that provided by Proposition \ref{ham-infty}.
\end{remark}

\begin{corollary}The image of the natural projection
  $\mathrm{dgLie}_{\mathrm{Qu}} (X,\bar{A})\to \mathfrak{X}(X)$ is the subspace
  $\mathfrak{X}_{\mathrm{Ham}}(X)$ of Hamiltonian vector
  fields. That is, the infinitesimal quantomorphisms cover infinitesimal {Hamiltonian}
  $n$-plectomorphisms.
\end{corollary}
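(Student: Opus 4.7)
My plan is to establish the two inclusions separately, relying on a total-complex Cartan identity together with the Čech--de Rham quasi-isomorphism for a good cover.

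For $\mathfrak{X}_{\mathrm{Ham}}(X) \subseteq \mathrm{Im}(\pi)$, I would invoke the $L_\infty$-quasi-isomorphism of Thm.\ \ref{theorem.main-theorem}: the explicit linear term $f_1$ sends any Hamiltonian pair $v + H \in \mathrm{Ham}^{n-1}(X)$ to a degree-zero element of $\mathrm{dgLie}_{\mathrm{Qu}}(X,\bar A)$ whose underlying vector field is precisely $v$, so every Hamiltonian vector field lies in the image of the projection.

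For the reverse inclusion, the key step is to verify a total-complex analogue of Cartan's magic formula. I would introduce a sign-twisted contraction $\tilde\iota_v$ acting on $\check{C}^i(\mathcal{U},\Omega^j)$ by $\tilde\iota_v := (-1)^i \iota_v$. A direct computation, using that $\iota_v$ commutes with the Čech differential $\delta$ and satisfies $d\iota_v + \iota_v d = \mathcal{L}_v$ pointwise, shows that on the total complex one has
\[
d_{\mathrm{Tot}}\,\tilde\iota_v + \tilde\iota_v\, d_{\mathrm{Tot}} \;=\; \mathcal{L}_v\,.
\]
Applying this to $\bar A$, and using $d_{\mathrm{Tot}}\bar A = \omega$ together with $\tilde\iota_v\omega = \iota_v\omega$ (since $\omega$ sits in Čech degree zero), one gets
\[
\mathcal{L}_v\bar A \;=\; d_{\mathrm{Tot}}\bigl(\tilde\iota_v\bar A\bigr) + \iota_v\omega\,.
\]
Given $v + \bar\theta \in \mathrm{dgLie}_{\mathrm{Qu}}(X,\bar A)^0$, i.e., $\mathcal{L}_v\bar A = d_{\mathrm{Tot}}\bar\theta$, subtracting yields $\iota_v\omega = d_{\mathrm{Tot}}(\bar\theta - \tilde\iota_v\bar A)$, which exhibits the globally defined $n$-form $\iota_v\omega \in \check{C}^0(\mathcal{U},\Omega^n) \subset \mathrm{Tot}^n(\mathcal{U},\Omega)$ as $d_{\mathrm{Tot}}$-exact.

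To finish, since $\mathcal{U}$ is a good cover, the inclusion $\Omega^\bullet(X) \hookrightarrow \mathrm{Tot}^\bullet(\mathcal{U},\Omega)$ is a quasi-isomorphism, so any closed global form that is $d_{\mathrm{Tot}}$-exact in the total complex is already de Rham exact on $X$. The closedness of $\iota_v\omega$ follows from $d\iota_v\omega = \mathcal{L}_v\omega - \iota_v d\omega = 0$, using $d\omega=0$ together with $\mathcal{L}_v\omega = \mathcal{L}_v d_{\mathrm{Tot}}\bar A = d_{\mathrm{Tot}}\mathcal{L}_v\bar A = 0$. Hence there exists $H \in \Omega^{n-1}(X)$ with $dH = -\iota_v\omega$, so $v$ is Hamiltonian. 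I expect the only real obstacle to be the sign bookkeeping that produces the twisted Cartan identity on the total complex; once that identity is verified, the rest of the argument is a short manipulation plus standard Čech--de Rham.
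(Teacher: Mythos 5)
Your proof is correct and follows essentially the same route as the paper: your twisted contraction $\tilde\iota_v\bar A$ is exactly the term $\iota_v\bar A(1)$ appearing in the appendix, where the identity $\dt\bigl(\bar\theta-\iota_v\bar A(1)\bigr)=\res(\iota_v\omega)$ characterizing degree-zero infinitesimal quantomorphisms is established (proof of Prop.\ \ref{f1_prop}), and your conclusion that $\iota_v\omega$ is exact rests on the same {\v C}ech--de Rham comparison (the paper via the explicit collation map $\jmath$ and its homotopy, you via the induced isomorphism on cohomology), while the easy inclusion via $f_1$ is likewise the paper's. The only detail to note is that the bottom component $A^0$ is $U(1)$-valued, so your total-complex Cartan identity applied to $\bar A$ implicitly uses the paper's convention $\L_vA^0=\iota_v d\log A^0$; with that understood, the argument is complete.
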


\begin{remark}
  Theorem \ref{theorem.main-theorem} implies that
  $\mathrm{dgLie}_{\mathrm{Qu}} (X,\bar{A})$ is independent, up to equivalence, of
  the choice of prequantization $\bar{A}$ of $\omega$.
  It  also
  says that $\mathrm{dgLie}_{\mathrm{Qu}} (X,\bar{A})$ is a `rectification'
  or `semi-strictification' of the $L_\infty$-algebra
  $L_\infty(X,\omega)$.
\end{remark}

\section{Inclusion into Atiyah and Courant \texorpdfstring{$L_\infty$}{Loo}-algebras}
\label{InclusionIntoAtiyahAndCourant}
{If $(X,\omega)$ is a prequantized symplectic manifold, and $(P,A)$ is
the corresponding prequantum bundle, then there is an embedding,
induced by the morphism given in Thm.\ \ref{theorem.main-theorem}, of
the Lie algebra of observables on $X$ into the Lie algebra of
$U(1)$-invariant vector fields on $P$. The latter is the Lie algebra
of global sections of the Atiyah algebroid of $P$ (see, for example, Sec.\
2 of \cite{rogers.2-plectic} and Def.\ \ref{Atiyah_Lie_algebra}
below). The integrated analog of this embedding is a canonical map from the
group of quantomorphisms to the group of bisections \cite[Chap.\
15]{CdS-Weinstein:1999} of the Lie groupoid that integrates the
Atiyah algebroid. This groupoid is usually called the gauge
groupoid of $P$, but we prefer to call it the `Atiyah
groupoid'. Likewise, we call its group of bisections the
`Atiyah group'. Such a bisection is just an equivariant diffeomorphism of $P$
covering a diffeomorphism of the base $X$, and hence it ``forgets'' the
connection 1-form $A$.}

{In analogy with the above, we now explain how similar
embeddings of quantomorphisms naturally arise in the higher case. This
  provides the motivation for the Lie-theoretic results presented in this section.}

\subsubsection{Higher Atiyah groups and the Courant $n$-group}
Recall from Section \ref{PrequantumnPlecticManifolds}
that the $n$-stack $\mathbf{B}^nU(1)_{\mathrm{conn}}$ is presented via
the Dold-Kan correspondence by the presheaf of chain complexes
\[
C^\infty(-;U(1))\xrightarrow{d\text{log}}\Omega^1(-)\xrightarrow{d}\Omega^2(-)\xrightarrow{d}\cdots\cdots\xrightarrow{d}\Omega^{n}(-)
\]
with $\Omega^{n}(-)$ in degree zero. We can also consider 
the
$n$-stack $\mathbf{B} \bigl(\mathbf{B}^{n-1}U(1)_{\mathrm{conn}}
\bigr)$, which is the delooping of the $(n-1)$ stack $\mathbf{B}^{n-1}U(1)_{\mathrm{conn}}$.
It is presented by the presheaf
\[
C^\infty(-;U(1))\xrightarrow{d\text{log}}\Omega^1(-)\xrightarrow{d}\Omega^2(-)\xrightarrow{d}\cdots\cdots\xrightarrow{d}\Omega^{n-1}(-)
\to 0
\]
with $\Omega^{n-1}(-)$ in degree 1. In general, there is more, namely a commutative diagram
\[
\scalebox{0.92}{
\xymatrix{
C^\infty(-;U(1))\ar[r]^-{d\text{log}}\ar[d]&\Omega^1(-)\ar[d]\ar[r]^{d}&\cdots\ar[r]^-{d}&\Omega^{n-1}(-)\ar[d]\ar[r]^{d}&\Omega^{n}(-)\ar[d]\\
C^\infty(-;U(1))\ar[r]^-{d\text{log}}\ar[d]&\Omega^1(-)\ar[d]\ar[r]^{d}&\cdots\ar[r]^-{d}&\Omega^{n-1}(-)\ar[d]\ar[r]&0\ar[d]\\
\cdots\ar[d]&\cdots\ar[d]&\cdots&\cdots\ar[d]&\cdots\ar[d]\\
C^\infty(-;U(1))\ar[r]&0\ar[r]^{d}&\cdots\ar[r]&0\ar[r]&0\\
}
}
\]
corresponding to a sequence of natural forgetful morphisms of stacks
\[
\mathbf{B}^nU(1)_{\mathrm{conn}}\to \mathbf{B}(\mathbf{B}^{n-1}U(1)_{\mathrm{conn}})\to \mathbf{B}^2(\mathbf{B}^{n-2}U(1)_{\mathrm{conn}})\to \cdots \to \mathbf{B}^nU(1),
\]
where at each step the top differential form data for the connection are forgotten.

 If $\nabla:X\to \mathbf{B}^nU(1)_{\mathrm{conn}}$ is the morphism
 representing a $U(1)$-$n$-bundle with connection on a smooth manifold
 $X$, then the forgetful morphisms realize $X$ both as an object over
 $\mathbf{B}(\mathbf{B}^{n-1}U(1)_{\mathrm{conn}})$ and as an object
 over $\mathbf{B}^nU(1)$. Therefore we have a sequence of
 automorphism $n$-groups of $\nabla$:
\[
\mathbf{Aut}_{/_{\mathbf{B}^{n}U(1)_{\mathrm{conn}}}}(\nabla)\to \mathbf{Aut}_{/_{\mathbf{B}(\mathbf{B}^{n-1}U(1)_{\mathrm{conn}})}}(\nabla)\to \mathbf{Aut}_{/_{\mathbf{B}^nU(1)}}(\nabla)
\,.
\]
All of these automorphism
     $n$-groups have a smooth structure and are
   ``concretified'' in the sense of Remark
   \ref{remark.concretification}.
{We call the $n$-group $\mathbf{Aut}_{/_{\mathbf{B}^nU(1)}}(\nabla)$
the `Atiyah $n$-group' of $\nabla$, since for the case $n=1$, it is the previously
mentioned Atiyah group. We call
$\mathbf{Aut}_{/_{\mathbf{B}(\mathbf{B}^{n-1}U(1)_{\mathrm{conn}})}}(\nabla)$
the `Courant $n$-group' of $\nabla$, since for the $n=2$ case it can
be thought of as the object that integrates the Courant Lie 2-algebra.
(see Def.\ \ref{Courant_L2A} below). A more detailed discussion of these objects
as the bisections of smooth $\infty$-groupoids appears in \cite{companionArticle}.

Conceptually speaking, the infinitesimal analogue of the above sequence
of $n$-groups is a sequence of Lie $n$-algebras
\begin{equation} \label{inf_sequence}
\mathrm{Lie}\mathbf{QuantMorph}(\nabla)\to\mathrm{Lie}\mathbf{Courant}(\nabla)\to \mathrm{Lie}\mathbf{Atiyah}(\nabla),
\end{equation}
where $\mathrm{Lie}\mathbf{QuantMorph}(\nabla)$ is the Lie $n$-algebra
of infinitesimal quantomorphisms
described in the beginning of Sec.\
\ref{inf_quant_sec}.  The elements of
$\mathrm{Lie}\mathbf{Atiyah}(\nabla)$ are to be thought of as those infinitesimal
autoequivalences which preserve only the underlying $U(1)$-$n$-bundle of
$\nabla$, while $\mathrm{Lie}\mathbf{Courant}(\nabla)$ consists of those
infinitesimal autoequivalences which preserve all of the connection data on
the $n$-bundle except the highest degree part.

Recall that we modeled the Lie $n$-algebra
$\mathrm{Lie}\mathbf{QuantMorph}(\nabla)$ by using the dg Lie algebra
$\mathrm{dgLie}_{\mathrm{Qu}} (X,\bar{A})$ given in Def/Prop.\
\ref{def.dg-lie-quant}.  Similarly, we define below dg Lie algebras
which can be thought of as models for
$\mathrm{Lie}\mathbf{Atiyah}(\nabla)$ and
$\mathrm{Lie}\mathbf{Courant}(\nabla)$ for the $n=1$ and $n=2$ cases.
We consider these particular cases in order to relate our results to
the traditional theory of prequantum $U(1)$-bundles and also more recent work on Courant algebroids and $U(1)$-bundle gerbes.}

\subsection{The \texorpdfstring{$n=1$}{n1} case} \label{n.eq.1_sec}
Here $(X,\omega)$ is an ordinary pre-symplectic manifold, and the algebra
of local observables $L_{\infty}(X,\omega)$ (Def.\ \ref{TheLooXOmega})
is the underlying Lie algebra of the Poisson algebra of Hamiltonian
functions.  A prequantization $\nabla$ is an ordinary $U(1)$-principal
bundle with connection over $X$.

From any closed 2-form $\omega$, one can construct a Lie
algebroid over $X$ whose global sections form the following Lie algebra
(see for example, Sec.\ 2 of \cite{rogers.2-plectic}):
\begin{definition} \label{Atiyah_Lie_algebra}
Let $(X,\omega)$ be a presymplectic manifold. The \emph{Atiyah Lie algebra}
$\mathfrak{atiyah}(X,\omega)$ is the vector space
$\mathfrak{X}(X)  \oplus C^\infty(X;\mathbb{R})$ endowed with the Lie bracket
\[
\bbrac{v_1+c_1}{v_2+c_2}_{\mathfrak{at}}=[v_1,v_2]+\mathcal{L}_{v_1}c_2-\mathcal{L}_{v_2}c_1-\omega(v_1,v_2).
\]
\end{definition}
Obviously  the underlying vector
space of the Lie algebra $L_{\infty}(X,\omega)$
is a subspace of $\mathfrak{atiyah}(X,\omega)$. A straightforward
calculation shows that the inclusion
\begin{equation} \label{incl1}
L_{\infty}(X,\omega) \hookrightarrow \mathfrak{atiyah}(X,\omega)
\end{equation}
is also a Lie algebra morphism.
Just like in our construction of the dgla $\mathrm{dgLie}_{\mathrm{Qu}} (X,\bar{A})$
(Def/Prop.\ \ref{def.dg-lie-quant}), we now represent the prequantization
$\nabla$ by a \v{C}ech-Deligne 1-cocycle \eqref{DeligneCocycle}, and
obtain a model for $\mathrm{Lie}\mathbf{Atiyah}(\nabla)$.
\begin{definition}
If $\bar{A}=A^{1} + A^{0}$ is a {\v C}ech-Deligne
$1$-cocycle on $X$ relative to some cover $\mathcal{U}$, then
$\mathrm{Lie}_{\mathrm{At}}(X,\bar{A})$
is the Lie algebra whose underlying vector space is
\[
\mathrm{LieAlg}_{\mathrm{At}}(X,\bar{A})= \{ v +  \bar{\theta} \in \X(X) \oplus
\check{C}^0(\cU,\Omega^{0}) ~\vert ~  \L_{v} A^{0} = \delta \bar{\theta} \}
\]
with Lie bracket
$
\bbrac{v_1 + \bar{\theta}_{1}}{v_2 + \bar{\theta}_{2}}_{\mathrm{At}} = [v_1,v_2] +
\L_{v_{1}} \bar{\theta}_{2} -\L_{v_{2}} \bar{\theta}_{1}
$.
\end{definition}
Since $\L_{v} A^{0} = \iota_{v} d \log A^{0}$, it is easy to see that
an element of $\mathrm{LieAlg}_{\mathrm{At}}(X,\bar{A})$ is the local
data corresponding to a $U(1)$-invariant vector field on the
total space $P$ of the prequantum bundle, i.e., a global section of the
Atiyah algebroid $TP/U(1) \to X$. Moreover, by construction, there is an inclusion of Lie algebras
\begin{equation} \label{incl2}
  \mathrm{dgLie}_{\mathrm{Qu}} (X,\bar{A})\hookrightarrow \mathrm{LieAlg}_{\mathrm{At}}(X,\bar{A})
\end{equation}
exhibiting the infinitesimal quantomorphisms as the Lie subalgebra of
vector fields on $P$ that preserve the connection.

The following
proposition describes the relationship between
$\mathrm{LieAlg}_{\mathrm{At}}(X,\bar{A})$ and
$\mathfrak{atiyah}(X,\omega)$, which one can think of as an
extension of Thm.\ \ref{theorem.main-theorem} for the $n=1$ case.

\begin{proposition} \label{atiyah_commutes_prop}
There exists a natural Lie algebra isomorphism
$$ \psi \maps \mathfrak{atiyah}(X,\omega) \xto{\cong} \mathrm{LieAlg}_{\mathrm{At}}(X,\bar{A})$$
such that the following diagram commutes
\[
\scalebox{0.92}{\xymatrix{
\mathfrak{atiyah}(X,\omega) \ar[r]^{\psi} & \mathrm{LieAlg}_{\mathrm{At}}(X,\bar{A})\\
L_{\infty}(X,\omega) \ar[u] \ar[r]^{f} & \mathrm{dgLie}_{\mathrm{Qu}} (X,\bar{A}) \ar[u]
}}
\]
where $f \maps L_{\infty}(M,\omega) \xto{\cong}
\mathrm{dgLie}_{\mathrm{Qu}} (X,\bar{A})$ is the isomorphism of Lie algebras given in
Thm.\ \ref{theorem.main-theorem}, and
the vertical morphisms are the inclusions \eqref{incl1} and \eqref{incl2}.
\end{proposition}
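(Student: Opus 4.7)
The plan is to write down $\psi$ explicitly at the level of underlying vector spaces, using the same formula that already produced the linear part $f_1$ of Theorem~\ref{theorem.main-theorem}, and then verify the three things the statement demands: that $\psi$ is a well-defined linear bijection, that it is a Lie algebra morphism, and that the square commutes on the nose.

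First, I would define $\psi\maps \mathfrak{atiyah}(X,\omega)\to \mathrm{LieAlg}_{\mathrm{At}}(X,\bar{A})$ by
$
\psi(v+c) \;=\; v + \bar\theta, \qquad \bar\theta_{\alpha} \;=\; -c\vert_{U_\alpha} + \iota_v A^{1}_{\alpha},
$
i.e., the analog for all of $\mathfrak{X}(X)\oplus C^\infty(X;\mathbb{R})$ of the formula defining $f_1$ on ${\mathrm{Ham}}^{0}(X)$. To see that the image lies in $\mathrm{LieAlg}_{\mathrm{At}}(X,\bar{A})$, I would compute $(\delta\bar\theta)_{\alpha\beta} = \iota_v(A^1_\beta - A^1_\alpha) = \iota_v \delta A^1 = \iota_v\, d\log A^0 = \mathcal{L}_v A^0$, using the Čech--Deligne cocycle condition $\delta A^1 = d\log A^0$ and Cartan's formula applied to the degree-zero Čech cochain $A^0$. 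Bijectivity is then immediate: given $(v,\bar\theta)$ with $\mathcal{L}_v A^0 = \delta\bar\theta$, the local functions $\iota_v A^{1}_\alpha - \bar\theta_\alpha$ agree on overlaps by exactly the same Čech--Deligne identity, hence patch to a unique global $c\in C^\infty(X;\mathbb{R})$.

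Next, I would check that $\psi$ preserves the bracket. Expanding both sides in terms of $v_i$ and $c_i$, the equality
$\psi([v_1+c_1,v_2+c_2]_{\mathfrak{at}}) = [\psi(v_1+c_1),\psi(v_2+c_2)]_{\mathrm{At}}$
reduces, after cancellation of the symmetric $\mathcal{L}_{v_i}c_j$ terms, to the local identity
\[
\iota_{[v_1,v_2]} A^1 + \omega(v_1,v_2) \;=\; \mathcal{L}_{v_1}\iota_{v_2}A^1 - \mathcal{L}_{v_2}\iota_{v_1}A^1
\]
on each $U_\alpha$. This follows from the Cartan identity~\eqref{Cartan_commutator} applied to $A^1$ together with $dA^1 = \omega$ on $U_\alpha$, in exactly the same spirit as the computation in the proof of Lemma~\ref{lemma.hamiltonian}. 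I expect this to be the main (though still short) obstacle, since it is the only step that genuinely uses $\omega = d A^1$ locally rather than just the Čech--Deligne cocycle condition.

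Finally, the commutativity of the square is manifest: the two inclusions \eqref{incl1} and \eqref{incl2} send $v+H$ and $v+\bar\theta$ to themselves, so chasing $v+H\in L_\infty(X,\omega)$ through $\psi$ after inclusion yields $v+ (-H\vert_{U_\alpha} + \iota_v A^{1}_\alpha)$, which is precisely the value of $f_1(v+H)$ recorded in Theorem~\ref{theorem.main-theorem} for $n=1$ (the contribution $\iota_v A^0$ vanishes since $A^0$ is a degree-zero Čech cochain of $U(1)$-valued functions). Naturality in $\bar{A}$ with respect to refinements of the cover and Čech--Deligne coboundaries follows from the analogous naturality of $f_1$ and completes the argument.
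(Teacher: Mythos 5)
Your proposal is correct and follows essentially the same route as the paper: define $\psi$ by the same local formula as $f_1$, use the {\v C}ech--Deligne identity $\delta A^1 = d\log A^0$ (with $\mathcal{L}_v A^0 = \iota_v d\log A^0$) for well-definedness and bijectivity, verify the bracket via the Cartan identity \eqref{Cartan_commutator} together with $dA^1=\omega$ on each $U_\alpha$, and observe the square commutes by construction. The only nitpick is terminological: $A^0$ is a {\v C}ech $1$-cochain of form-degree zero (so $\iota_v A^0=0$ because it is a $0$-form, not because it has {\v C}ech degree zero), which does not affect the argument.
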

\begin{proof}
It follows from Thm.\ \ref{theorem.main-theorem} that the isomorphism
$f \maps L_{\infty}(M,\omega)\! \xto{\cong}\! \mathrm{dgLie}_{\mathrm{Qu}} (X,\bar{A})$ is
$f(v + c) = v - c \vert_{U_{\alpha}} + \iota_v A^{1}$.
Hence, we define $\psi \maps \mathfrak{atiyah}(X,\omega) \to
\mathrm{LieAlg}_{\mathrm{At}}(X,\bar{A})$ to be
$\psi(v + c) = v - c \vert_{U_{\alpha}} + \iota_v A^{1}$.
Note that if $\L_{v} A^{0} = \delta \bar{\theta}$, then $\delta
(\bar{\theta} + \iota_{v}A^{1})=0$.
Hence $\psi$ is an isomorphism of vector spaces. The fact that
$\psi$ preserves the Lie brackets follows from the equalities
$
\L_{v_1} \iota_{v_2} A^{1} - \L_{v_2} \iota_{v_1} A^{1} =
\iota_{[v_1,v_2]} A^{1} + \iota_{v_{2}}\iota_{v_{1}} dA^{1} =
\iota_{[v_1,v_2]} A^{1} + \iota_{v_1 \wedge v_2} \omega.
$
\end{proof}
\begin{remark}
Note that the isomorphism $\psi$ 
in the above Proposition
uses the connection $A$
to  lift horizontally a vector field on $M$ to a vector field on $P$ in the
standard way.
\end{remark}

\subsection{The \texorpdfstring{$n=2$}{n2} case} \label{n.eq.2_sec}
Here $(X,\omega)$ is a  pre-2-plectic manifold.
A prequantization $\nabla$ of $(X,\omega)$ is a $U(1)$-bundle gerbe
(or principal $U(1)$ 2-bundle) over $X$ equipped
with a 2-connection.

In addition to the Lie 2-algebra of local observables
$L_{\infty}(X,\omega)$, there are two other Lie 2-algebras one can
build directly from any closed 3-form $\omega$. It seems that the first of these
has not appeared previously in the literature, while the
second one originates in Roytenberg and Weinstein's work on Courant
algebroids \cite{Roytenberg-Weinstein}. (The 2-term truncation we use
here is due to subsequent work by Roytenberg \cite{Roytenberg_L2A}.)

\begin{defprop}
Let $\omega$ be a pre-2-plectic structure on $X$. The \textit{Atiyah Lie 2-algebra} $\mathfrak{atiyah}(\omega)$ is the graded vector space
\[
\mathfrak{atiyah}(X,\omega)^0=\X(X);\qquad
\mathfrak{atiyah}(X,\omega)^1=\Omega^0(X);
\]
endowed with the brackets
\[
\left \llbracket \eta \right \rrbracket^{\fa}_{1}=0;  \qquad
\bbrac{v_1}{v_2}^{\fa}_2  = [v_1,v_2]; \qquad
\bbrac{v }{{\eta}}^{\fa}_2  = \L_{v} {\eta}; \qquad
\bbrac{v_1}{v_2,v_3}^{\fa}_3=-\iota_{v_1\wedge v_2\wedge v_3}\omega
\]
(with all other brackets zero by degree reasons).
\end{defprop}
\begin{defprop} \label{Courant_L2A}
Let $\omega$ be a pre-2-plectic structure on $X$. The \textit{Courant Lie 2-algebra} $\mathfrak{courant}(\omega)$ is the graded vector space
\[
\mathfrak{courant}(X,\omega)^0 =\X(X)\oplus \Omega^1(X);\qquad
\mathfrak{courant}(X,\omega)^1 =\Omega^0(X);
\]
endowed with the brackets
\[
\begin{split}
\left \llbracket \eta \right \rrbracket^{\fc}_1&=d\eta; \qquad\qquad \bbrac{v +\theta}{{\eta}}^{\fc}_2  =  \tfrac{1}{2} \iota_{v} d{\eta}  \\\\
\bbrac{v_1+\theta_1}{v_2+\theta_2}^{\fc}_2 & = [v_1,v_2]+ {\L_{v_{1}}\theta_{2}- \L_{v_{2}}\theta_{1}}  -\tfrac{1}{2} d \bigl (\iota_{v_1}\theta_{2} - \iota_{v_2}\theta_{1}
\bigr) - \iota_{v_1 \wedge v_2} \omega\\
\bbrac{v_1+\theta_1}{v_2+\theta_2, v_3+\theta_3}^{\fc}_3&=
-\tfrac{1}{6} \biggl( \innerprod{\bbrac{v_1+\theta_1}{v_2+\theta_2}^{\fc}_2}{v_3
    + \theta_3} + \mathrm{cyc.\ perm.} \biggr)
\end{split}
\]
where $\innerprod{\,}{\,}$ is the natural symmetric pairing between sections of $T^*X\oplus TX$, i.e.,
$
\innerprod{v_1 + \theta_1}{v_2 + \theta_2} := \iota_{v_1} \theta_2 + \iota_{v_2} \theta_1
$
(and with all other brackets zero by degree reasons).
\end{defprop}
The relationship between these Lie 2-algebras is given by the
next proposition.
\begin{proposition} \label{2plectic_seq_prop}
There exists a natural sequence of $L_\infty$ morphisms
\[
L_\infty(X,\omega)\xrightarrow{\phi} \mathfrak{courant}(X,\omega)\xrightarrow{\psi} \mathfrak{atiyah}(X,\omega),
\]
where the nontrivial components of the morphism $\phi$ are
\[
\begin{split}
\phi_1(v+\theta)&=v+ \theta; \qquad
\phi_1(\eta)=\eta;\qquad
\phi_2(v_1+\theta_1,v_2+\theta_2)= -\tfrac{1}{2} \left(\iota_{v_1}
  \theta_2 - \iota_{v_2} \theta_1 \right)
\end{split}
\]
and the nontrivial components of the morphism $\psi$ are
\[
\begin{split}
\psi_1(v+\theta)&=v;\qquad
\psi_1(\eta)=\eta;\qquad
\psi_2(v_1+\theta_1,v_2+\theta_2)=-\tfrac{1}{2} \left(\iota_{v_1}  \theta_2 - \iota_{v_2} \theta_1 \right)
\end{split}
\]
\end{proposition}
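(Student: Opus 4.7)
The plan is a direct verification that the prescribed $\phi_1,\phi_2$ and $\psi_1,\psi_2$ satisfy the defining equations of an $L_\infty$-morphism between Lie 2-algebras. Recall that, since source and target are concentrated in degrees $0$ and $1$, an $L_\infty$-morphism $f$ has at most two non-trivial Taylor coefficients $f_1$ (degree $0$) and $f_2$ (degree $+1$, i.e., a map of degree $1$ from the symmetric square), and the structure equations reduce to: (i) $f_1$ is a chain map; (ii) for $x_1,x_2$ of degree $0$, $f_1\,l_2(x_1,x_2)-l_2(f_1 x_1,f_1 x_2)=l_1 f_2(x_1,x_2)$; (iii) for $x$ of degree $0$ and $\eta$ of degree $1$, $f_1\,l_2(x,\eta)-l_2(f_1 x,f_1\eta)=f_2(x,l_1\eta)$; (iv) for $x_1,x_2,x_3$ of degree $0$, $f_1\,l_3(x_1,x_2,x_3)-l_3(f_1 x_1,f_1 x_2,f_1 x_3)$ equals a prescribed alternating sum of terms of the form $f_2(l_2(x_i,x_j),x_k)$. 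So the proof is a checklist of these four items for each of $\phi$ and $\psi$.

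First I would dispense with (i): for $\phi$ both sides of $d\eta$ land in the appropriate degree-$0$ piece of $\mathfrak{courant}$ through the inclusion $\Omega^1\hookrightarrow \X\oplus\Omega^1$, and for $\psi$ the composite $\psi_1\circ l_1^{\fc}$ gives $\psi_1(d\eta)=0=l_1^{\fa}\psi_1(\eta)$. For (ii) applied to $\phi$, one computes the difference between the observables bracket $[v_1,v_2]+\iota_{v_1\wedge v_2}\omega$ (viewed in $\mathfrak{courant}$) and the Courant bracket $[v_1,v_2]+\L_{v_1}\theta_2-\L_{v_2}\theta_1-\tfrac12 d(\iota_{v_1}\theta_2-\iota_{v_2}\theta_1)-\iota_{v_1\wedge v_2}\omega$; the discrepancy is the exact form $d\phi_2(v_1+\theta_1,v_2+\theta_2)$ plus terms involving $\L_{v_i}\theta_j$, and the Hamiltonian condition $\iota_{v_i}\omega=-d\theta_i$ together with Cartan's magic formula will show that these match $l_1^{\fc}\phi_2=d\phi_2$. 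The same type of computation handles (ii) for $\psi$, where the $\iota_{v_1\wedge v_2}\omega$ term of the Courant bracket is killed by $\psi_1$ while $l_1^{\fa}\psi_2$ vanishes, so the residual piece must cancel within the $\Omega^1$-summand by direct inspection.

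For (iii), one uses the explicit formula $\bbrac{v+\theta}{\eta}^{\fc}_2=\tfrac12\iota_v d\eta$ on the $\mathfrak{courant}$ side and notices that $l_2^{L_\infty}(x,\eta)$ vanishes for mixed degrees, so the required identity becomes $-\tfrac12\iota_v d\eta=\phi_2(v+\theta,d\eta)$ with the right-hand side read off from the prescribed formula; the same check works for $\psi$. The hard part is (iv), the cubic compatibility, which is where the 3-bracket $\iota_{v_1\wedge v_2\wedge v_3}\omega$ enters. On the $\phi$ side we must show that the observables cocycle $l_3^{L_\infty}=-(-1)^{\binom{3}{2}}\iota_{v_1\wedge v_2\wedge v_3}\omega=\iota_{v_1\wedge v_2\wedge v_3}\omega$ matches, after applying $\phi_1$, the Courant 3-bracket $-\tfrac16\sum_{\text{cyc}}\langle\bbrac{v_i+\theta_i}{v_j+\theta_j}^{\fc}_2,v_k+\theta_k\rangle$ modulo the correction coming from $\phi_2$. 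Expanding the inner product, using the skew-symmetrization and the Cartan identity $\iota_{[v_i,v_j]}=[\L_{v_i},\iota_{v_j}]$, the terms that are not manifestly of Koszul type will collapse via $d\omega=0$ and $\iota_{v_i}\omega+d\theta_i=0$; this is essentially an instance of Lemma \ref{tech_lemma} for $k=3$. The analogous check for $\psi$ compares the Courant 3-bracket with the Atiyah 3-bracket $-\iota_{v_1\wedge v_2\wedge v_3}\omega$ and, since $\psi_1$ forgets the 1-form components, only those inner-product terms of the form $\iota_{v_k}\theta_{[i,j]}$ survive; the sign and combinatorial coefficient $-\tfrac16$ paired with cyclic summation will reproduce exactly $-\iota_{v_1\wedge v_2\wedge v_3}\omega$ up to the coboundary $\psi_2\circ l_2$.

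The principal obstacle is the bookkeeping in step (iv): keeping track of the $\tfrac12$ and $\tfrac16$ factors together with the cyclic sums and the sign conventions \verb|\vs{k}| used in the definition of $L_\infty(X,\omega)$, while repeatedly invoking the Cartan formula \eqref{Cartan_commutator} and the Hamiltonian condition. Once this is done for $\phi$, the analogous calculation for $\psi$ is essentially a simplification, since all $\Omega^1$-valued terms are quotiented away. No step requires any new ingredient beyond the Cartan calculus of Section \ref{notat_cartan}, the Hamiltonian condition of Definition \ref{HamiltonianVectorFields}, and the explicit brackets in Proposition \ref{ham-infty} and Definitions \ref{Courant_L2A}.
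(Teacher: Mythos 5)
Your overall strategy---a direct verification of the Baez--Crans conditions for 2-term $L_\infty$-morphisms---is the same one the paper uses for $\psi$ (the paper disposes of $\phi$ by citing Thm.\ 7.1 of \cite{rogers.2-plectic} rather than recomputing it, but a direct check of $\phi$ is legitimate). However, your step (iv) misstates the morphism condition in the crucial cubic degree, and as written the verification would not close. For all $x_i$ of degree $0$ the correct identity (see \cite[Def.\ 34]{HDA6}, or the $m=3$ instance of the general $L_\infty$-morphism equation used in the paper's appendix) equates $f_1 l_3(x_1,x_2,x_3)-l_3(f_1x_1,f_1x_2,f_1x_3)$ not just with an alternating sum of terms $f_2(l_2(x_i,x_j),x_k)$, but also with terms in which the \emph{target} binary bracket pairs $f_1$ with $f_2$, i.e.\ $\bbrac{\psi_1(x_i)}{\psi_2(x_j,x_k)}^{\fa}_2$ and $\bbrac{\phi_1(x_i)}{\phi_2(x_j,x_k)}^{\fc}_2$. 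These are not negligible here: since $\bbrac{v}{\eta}^{\fa}_2=\L_v\eta$ and $\bbrac{v+\theta}{\eta}^{\fc}_2=\tfrac12\iota_v d\eta$, they contribute exactly the $\iota_{v_i\wedge v_j}d\theta_k$ terms and half of the $\iota_{v_k}\L_{v_i}\theta_j$ terms that are needed to cancel against the expansion of the 3-brackets; with only the $f_2\circ l_2$ corrections (your ``up to the coboundary $\psi_2\circ l_2$'') the leftover $\iota_{v_i\wedge v_j}d\theta_k$ and $\omega$-terms do not cancel, so the identity fails. The paper's proof is essentially the bookkeeping you postpone: it records four Cartan-calculus identities, two of which are precisely the cyclic sums of $\psi_2(\bbrac{\cdot}{\cdot}^{\fc}_2,\cdot)$ and of $\bbrac{\psi_1(\cdot)}{\psi_2(\cdot,\cdot)}^{\fa}_2$, and then feeds them into the full Def.\ 34 equations.

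Two smaller points. Steps (i)--(iii) are fine and agree with the paper's first listed identity (your mixed-degree check for $\psi$ is exactly $\psi_2(d\eta,v+\theta)=\psi_1\bbrac{\eta}{v+\theta}^{\fc}-\bbrac{\psi_1(\eta)}{\psi_1(v+\theta)}^{\fa}_2$), and your (ii) for $\psi$ is in fact trivial since $\psi_1$ kills the entire $1$-form component and $l_1^{\fa}=0$. Finally, note a sign slip: with the paper's convention the ternary bracket of $L_\infty(X,\omega)$ is $l_3=-(-1)^{\binom{4}{2}}\iota_{v_1\wedge v_2\wedge v_3}\omega=-\iota_{v_1\wedge v_2\wedge v_3}\omega$, not $+\iota_{v_1\wedge v_2\wedge v_3}\omega$ as you wrote; in a computation this sensitive to the $\tfrac12$, $\tfrac16$ coefficients, that error would propagate.
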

 \begin{proof}
The fact that $\phi$ is an $L_{\infty}$-morphism is the content of Thm.\
7.1 in \cite{rogers.2-plectic}. To show $\psi$ is a $L_{\infty}$ morphism,
we first perform several straightforward computations using the Cartan
calculus in order to obtain the following equalities:
\begin{align*}
&\scalebox{0.92}{$\psi_{2} (d \eta, v + \theta) = \psi_1
\bigl(\bbrac{\eta}{v+\theta}^{\fc} \bigr) -\bbrac{\psi_{1}(\eta)}{\psi_{1}(v+ \theta)}^{\fa}_2;$}\\
&\scalebox{0.92}{$\bbrac{v_1+\theta_1}{v_2+\theta_2, v_3+\theta_3}^{\fc}_3=
-\frac{1}{4} \bigl(\iota_{v_3} \L_{v_{1}} \theta_{2} -
\iota_{v_3} \L_{v_{2}} \theta_{1} + \mathrm{cyc.\ perm.} \bigr) +
\frac{1}{2} \iota_{v_1\wedge v_2\wedge v_3}\omega;$} \\
&\scalebox{0.92}{$\psi_{2}\bigl( \bbrac{v_1+\theta_1}{v_2+\theta_2}^{\fc}_2, v_3 +
\theta_3 \bigr) + \mathrm{cyc.\ perm.} = -\frac{1}{4} \bigl(\iota_{v_3} \L_{v_{1}} \theta_{2} -
\iota_{v_3} \L_{v_{2}} \theta_{1} + \mathrm{cyc.\ perm.} \bigr)$} \\
&\scalebox{0.92}{$ \quad\phantom{mmmmmmmmmmmmmmmmmmmmmmi} - \bigl( \iota_{v_1 \wedge v_2} d\theta_{3} + \mathrm{cyc.\
  perm.} \bigr) - \frac{3}{2} \iota_{v_1\wedge v_2\wedge v_3}\omega; $}\\
&\scalebox{0.92}{$\bbrac{\psi_{1}(v_{1} + \theta_1)}{\psi_{2}(v_2 + \theta_2,v_3 +
  \theta_3)}^{\fa}_{2} + \mathrm{cyc.\ perm.} = -\frac{1}{2}\bigl(\iota_{v_3} \L_{v_{1}} \theta_{2} -
\iota_{v_3} \L_{v_{2}} \theta_{1} + \mathrm{cyc.\ perm.} \bigr)$} \\
&\scalebox{0.92}{$ \quad\phantom{mmmmmmmmmmmmmmmmmmmmmmmmi} - \bigl( \iota_{v_1 \wedge v_2} d\theta_{3} + \mathrm{cyc.\  perm.} \bigr)$}.
\end{align*}
We then use the above to verify that the equalities given in \cite[Def.\ 34]{HDA6} are
satisfied.
\end{proof}

If $(X,\omega)$ is prequantized, then we represent the prequantum
2-bundle $\nabla:X\to \mathbf{B}^2U(1)_{\mathrm{conn}}$
with a \v{C}ech-Deligne 2-cocycle, and obtain
dg Lie algebras that we think of as modeling the previously discussed
$L_{\infty}$-algebras $\mathrm{Lie}\mathbf{Atiyah}(\nabla)$ and
$\mathrm{Lie}\mathbf{Cour}\text{-}\allowbreak\mathbf{ant}(\nabla)$.
In what follows, $\Omega^{\leq 0}$ and
$\Omega^{\leq 1}$ denote the cochain complexes of sheaves
\[
\Omega^0(-)\to 0\to 0\to0\to\cdots; \qquad
\Omega^0(-)\xrightarrow{d}\Omega^1(-)\to 0\to 0\to\cdots,
\]
respectively, with both having $\Omega^0(-)$ in degree zero.
\begin{defprop}
If $\bar{A}=A^{2} + A^{1} + A^{0}$ is a {\v C}ech-Deligne
$2$-cocycle on $X$ relative to some cover $\mathcal{U}$,
then we denote by $\mathrm{dgLie}_{\mathrm{At}}(X,\bar{A}) $ and
$ \mathrm{dgLie}_{\mathrm{Cou}}(X,\bar{A}) $
the dg-Lie algebras whose underlying complexes are
\begin{align*}
\mathrm{dgLie}_{\mathrm{At}}(X,\bar{A})^{0} &= \{ v +  \bar{\theta} \in \X(X) \oplus
\Tot^{1}(\cU,\Omega^{ \leq 0}) ~\vert ~  \L_{v} A^{0} = \dt \bar{\theta} \}\\
\mathrm{dgLie}_{\mathrm{At}}(X,\bar{A})^{1} &= \Tot^{0}(\cU,\Omega^{ \leq 0})
\end{align*}
and
\begin{align*}
\mathrm{dgLie}_{\mathrm{Cou}}(X,\bar{A})^{0} &= \{ v +  \bar{\theta} \in \X(X) \oplus
\Tot^{1}(\cU,\Omega ^{ \leq 1}) ~\vert ~  \L_{v}(A^{1} + A^{0}) = \dt \bar{\theta} \}\\
\mathrm{dgLie}_{\mathrm{Cou}}(X,\bar{A})^{1} &= \Tot^{0}(\cU,\Omega^{ \leq 1});
\end{align*}
both equipped with the differential $0 \oplus \dt$,
and whose graded Lie brackets are (for both cases)
\begin{equation}
\begin{split}
&\bbrac{v_1 + \bar{\theta}_{1}}{v_2 + \bar{\theta}_{2}}  = [v_1,v_2] +
\L_{v_{1}} \bar{\theta}_{2} -\L_{v_{2}} \bar{\theta}_{1}\\
&\bbrac{v + \bar{\theta}}{\bar{\eta}}  = -\bbrac{\bar{\eta}}{v + \bar{\theta}}= \L_{v} \bar{\eta};
\qquad \qquad
\bbrac{\bar{\eta}}{\bar{\eta}}=0.
\end{split}
\end{equation}
 \label{dgLieXOmegaCou}
\end{defprop}
The dg Lie algebra $\mathrm{dgLie}_{\mathrm{At}}(X,\bar{A})$ was
constructed by Collier \cite[Def.\ 6.11, Thm.\ 8.18]{Collier:2011},
and he rigorously proved that its degree zero elements correspond to
infinitesimal autoequivalences of the $U(1)$ 2-bundle represented by
the \v{C}ech 2-cocycle $A^0$.  He also constructed
$\mathrm{dgLie}_{\mathrm{Cou}}(X,\bar{A})$ and proved that
its degree zero elements are the infinitesimal autoequivalences the $U(1)$
2-bundle equipped with a connective structure represented by the
truncated \v{C}ech-Deligne 2-cocycle $A^{1} + A^{0}$ \cite[Def.\
10.38, Prop.\ 10.48]{Collier:2011}.
There is an obvious map of dg Lie algebras
$\mathrm{dgLie}_{\mathrm{Cou}}(X,\bar{A}) \xto{p}
\mathrm{dgLie}_{\mathrm{At}}(X,\bar{A})$, which in degree zero forgets the $\check{C}^{0}(\cU,\Omega^1)$ component.
It is also clear that the dg Lie algebra $\mathrm{dgLie}_{\mathrm{Qu}}(X,\bar{A})$ of
infinitesimal quantomorphisms (Def/Prop.\ \ref{def.dg-lie-quant})
embeds into $\mathrm{dgLie}_{\mathrm{Cou}}(X,\bar{A})$.
Hence the next result follows automatically by construction.
\begin{proposition} \label{dgla_incl_prop}
There is a natural sequence of dg Lie algebras
\[
\mathrm{dgLie}_{\mathrm{Qu}} (X,\bar{A})\xto{i}\mathrm{dgLie}_{\mathrm{Cou}}(X,\bar{A})
\xto{p} \mathrm{dgLie}_{\mathrm{At}}(X,\bar{A})
\]
that we interpret as modeling the sequence \eqref{inf_sequence}.
\end{proposition}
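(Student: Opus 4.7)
The proof is essentially a matter of careful bookkeeping, since the inclusion $i$ and projection $p$ are dictated by the definitions. The plan is to exhibit these maps explicitly and then verify that each is compatible with the graded components, the differential, and the Lie bracket.

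First I would define $i$. In degree zero, an element of $\mathrm{dgLie}_{\mathrm{Qu}}(X,\bar{A})$ is a pair $v + \bar\theta$ with $\bar\theta = \theta^1 + \theta^0 \in \check{C}^0(\cU,\Omega^1)\oplus \check{C}^1(\cU,\Omega^0)$ satisfying $\L_v \bar{A} = \dt \bar\theta$. Decomposing this equation by {\v C}ech-form bidegree, the component in $\check{C}^0(\cU,\Omega^2)$ gives $\L_v A^2 = d\theta^1$, while the components in $\check{C}^1(\cU,\Omega^1)$ and $\check{C}^2(\cU,\Omega^0)$ together give exactly $\L_v(A^1 + A^0) = \dt^{\leq 1}\bar\theta$, where $\dt^{\leq 1}$ is the differential on $\Tot^\bullet(\cU,\Omega^{\leq 1})$. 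Hence the same data define an element in $\mathrm{dgLie}_{\mathrm{Cou}}(X,\bar{A})^0$, and I let $i$ be the identity on the underlying data. In degree one, both complexes are $\Tot^0(\cU,\Omega^{\leq \bullet}) = \check{C}^0(\cU,\Omega^0)$, so $i$ is the identity there as well.

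Next I would define $p$ by forgetting the $\check{C}^0(\cU,\Omega^1)$ component in degree zero and restricting to $\check{C}^1(\cU,\Omega^0)$, and as the identity on $\check{C}^0(\cU,\Omega^0)$ in degree one. Well-definedness amounts to the following: given $v + \theta^1 + \theta^0 \in \mathrm{dgLie}_{\mathrm{Cou}}(X,\bar{A})^0$, the bidegree decomposition of $\L_v(A^1+A^0) = \dt\bar\theta$ yields, in particular, $\L_v A^0 = \delta \theta^0$ in $\check{C}^2(\cU,\Omega^0)$, which is precisely the defining condition for $v + \theta^0$ to lie in $\mathrm{dgLie}_{\mathrm{At}}(X,\bar{A})^0$.

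It then remains to check that $i$ and $p$ are morphisms of dg Lie algebras. Compatibility with the differentials $0\oplus \dt$ on each side is immediate from the definitions, since the same formula is used throughout, truncated appropriately. Compatibility with the brackets reduces to observing that the formulas $\bbrac{v_1+\bar\theta_1}{v_2+\bar\theta_2} = [v_1,v_2] + \L_{v_1}\bar\theta_2 - \L_{v_2}\bar\theta_1$, $\bbrac{v+\bar\theta}{\bar\eta} = \L_v\bar\eta$, and $\bbrac{\bar\eta_1}{\bar\eta_2}=0$ are the same in all three dg Lie algebras, and that the Lie derivative and the Lie bracket of vector fields commute with both the inclusion of subspaces (for $i$) and the truncation of the coefficient complex (for $p$). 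The only mild subtlety is to keep track of how the differentials $\dt$ on the three different truncated total complexes act; checking that $\dt\bar\theta$ computed in $\Tot^\bullet(\cU,\Omega)$ restricts correctly to the lower-truncated differentials is the closest thing to a genuine obstacle, and it amounts to the observation that in each truncation one simply discards the components of $\dt$ that would land outside the truncated range.
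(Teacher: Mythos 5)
Your proposal is correct and matches the paper's approach: the paper simply declares that $p$ (forgetting the $\check{C}^0(\cU,\Omega^1)$ component) and the inclusion $i$ are obvious maps of dg Lie algebras, so the proposition holds ``by construction,'' and your write-up is exactly the routine verification behind that claim. The bidegree decomposition of $\L_v\bar{A}=\dt\bar\theta$ and the observation that the brackets and (suitably truncated) differentials are given by the same formulas are precisely what is implicitly being used.
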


In \cite[Theorem 12.50]{Collier:2011}, Collier constructed a weak
equivalence of Lie 2-algebras between a local \v{C}ech description of
the Courant Lie 2-algebra \eqref{Courant_L2A} and the dg Lie algebra
$\mathrm{dgLie}_{\mathrm{Cou}}(X,\bar{A})$. We conclude with the
following proposition which strengthens this result by incorporating
our Thm.\ \ref{theorem.main-theorem} and Prop.\
\ref{2plectic_seq_prop}. It can also be viewed as the higher analog of
Prop.\ \ref{atiyah_commutes_prop}.
\begin{proposition}
If $(X,\omega)$ is a prequantized pre-2-plectic manifold, then there
exist natural weak equivalences of Lie 2-algebras
$
f^{\fa} \maps \mathfrak{atiyah}(X,\omega) \xto{\sim} \mathrm{dgLie}_{\mathrm{At}}(X,\bar{A})
$
and
$
f^{\fc} \maps \mathfrak{courant}(X,\omega) \xto{\sim} \mathrm{dgLie}_{\mathrm{Cou}}(X,\bar{A})
$
such that the following diagram of $L_\infty$-algebras (strictly) commutes:
\[
\scalebox{0.92}{
\xymatrix{
\mathfrak{atiyah}(X,\omega) \ar[r]^{f^{\fa}} & \mathrm{dgLie}_{\mathrm{At}}(X,\bar{A}) \\
\mathfrak{courant}(X,\omega) \ar[u]^{\psi} \ar[r]^{f^{\fc}} & \ar[u]_{p} \mathrm{dgLie}_{\mathrm{Cou}}(X,\bar{A}) \\
L_{\infty}(X,\omega) \ar[u]^{\phi} \ar[r]^{f} & \ar[u]_{i} \mathrm{dgLie}_{\mathrm{Qu}} (X,\bar{A})\\
}}
\]
where $f \maps L_{\infty}(M,\omega) \xto{\sim} \mathrm{dgLie}_{\mathrm{Qu}} (X,\bar{A})$
is the weak equivalence given in Thm.\ \ref{theorem.main-theorem}, and
the vertical morphisms are those given in Prop.\
\ref{2plectic_seq_prop} and Prop.\ \ref{dgla_incl_prop}.
\end{proposition}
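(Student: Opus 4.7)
The plan is to construct $f^{\fa}$ and $f^{\fc}$ by explicit formulas running parallel to the definition of $f$ in Theorem \ref{theorem.main-theorem}, obtained by keeping only those \v{C}ech-Deligne pieces that survive the forgetful projections $i$ and $p$. On linear terms, I would set
\[
f^{\fc}_1(v + \theta) = v + \left(-\theta\vert_{U_\alpha} + \iota_v A^2 - \iota_v A^1 \right), \qquad f^{\fc}_1(\eta) = -\eta\vert_{U_\alpha},
\]
and for $f^{\fa}_1$ the same expression with the $\theta$-components omitted and $\iota_v A^2$ dropped. Using the Deligne cocycle relations $\delta A^2 = dA^1$ and $\delta A^1 = -d\log A^0$, together with the standard commutation of $\iota_v$ with \v{C}ech coboundaries, one checks that these outputs satisfy the defining constraints $\mathcal{L}_v(A^1+A^0) = \dt \bar\theta$ and $\mathcal{L}_v A^0 = \dt \bar\theta$ of the degree-$0$ parts of $\mathrm{dgLie}_{\mathrm{Cou}}(X,\bar{A})$ and $\mathrm{dgLie}_{\mathrm{At}}(X,\bar{A})$.

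Next, I would determine the higher Taylor components $f^{\fc}_k$ and $f^{\fa}_k$ by imposing strict commutativity of the two squares as an equation of $L_\infty$-morphisms. Since $i$ and $p$ are strict dg-Lie maps and $\phi_k, \psi_k$ vanish for $k \geq 3$ (Proposition \ref{2plectic_seq_prop}), the composition law for $L_\infty$-morphisms reduces to explicit formulas that express $f^{\fc}_k$ and $f^{\fa}_k$ in terms of the $f_k$ already determined in Theorem \ref{theorem.main-theorem} and the quadratic corrections $\phi_2, \psi_2$. With these prescriptions, strict commutativity of the two squares holds by construction; what remains is to verify that $f^{\fa}$ and $f^{\fc}$ are genuine $L_\infty$-morphisms. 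This is a Cartan-calculus bookkeeping check, entirely analogous to the one carried out in the appendix for Theorem \ref{theorem.main-theorem} and in the proof of Proposition \ref{2plectic_seq_prop}. The key point is that the $l_3$-brackets in $\mathfrak{courant}$ and $\mathfrak{atiyah}$, which produce terms of the form $\iota_{v_1 \wedge v_2 \wedge v_3}\omega$, are absorbed by $\dt$-exact corrections coming from the $\iota_v A^2$ and $\iota_v A^1$ summands, exactly as in the main theorem.

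Finally, the weak equivalence property can be established either by computing cohomology directly -- since $\mathcal{U}$ is a good cover, the truncated total \v{C}ech-de Rham complexes appearing in $\mathrm{dgLie}_{\mathrm{At}}$ and $\mathrm{dgLie}_{\mathrm{Cou}}$ compute ordinary de Rham cohomology of $X$ in the relevant degrees -- or by invoking Collier's quasi-isomorphism \cite[Thm.\ 12.50]{Collier:2011} between a local \v{C}ech presentation of $\mathfrak{courant}$ and $\mathrm{dgLie}_{\mathrm{Cou}}(X,\bar{A})$, combined with the commutativity of the bottom square and the two-out-of-three property of weak equivalences. The main obstacle will be the $L_\infty$-morphism verification for the higher components $f^{\fc}_k$: the interplay between the nontrivial $l_3^{\fc}$-bracket of the Courant Lie 2-algebra, the semidirect-product bracket of $\mathrm{dgLie}_{\mathrm{Cou}}$, and the $\dt$-differential on the truncated total \v{C}ech-de Rham complex produces a substantial number of terms whose cancellation hinges delicately on Lemma \ref{tech_lemma} together with the Deligne cocycle identities for $\bar{A}$.
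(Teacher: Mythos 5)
Your outline follows the paper's own route: the linear components you propose are exactly the ones used there ($f^{\fc}_{1}(v+\theta)=v-\theta\vert_{U_\alpha}+\iota_v(A^2-A^1)$, $f^{\fa}_{1}(v)=v-\iota_vA^1$, both sending $\eta\mapsto-\eta$), the quasi-isomorphism property is obtained from the good cover and the Poincar\'e lemma just as in the paper, and the remaining content is the Cartan-calculus verification of the $L_\infty$-identities (in the sense of \cite[Def.\ 34]{HDA6}) using Lemma \ref{tech_lemma} and the Deligne cocycle relations, with the $\iota_{v_1\wedge v_2\wedge v_3}\omega$ terms absorbed by $\dt$-exact corrections such as $\dt(\iota_{v_1\wedge v_2}A^2)$ -- exactly the identities the paper tabulates.

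Two points need repair, however. First, strict commutativity of the bottom square cannot ``determine'' $f^{\fc}_2$: in degree zero the image of $\phi_1$ is only the subspace of Hamiltonian pairs inside $\X(X)\oplus\Omega^1(X)$, so the composition law pins $f^{\fc}_2$ only on Hamiltonian arguments, and since $\psi_1$ is defined on all of $\mathfrak{courant}(X,\omega)$ this gap propagates to your prescription for $f^{\fa}_2$. You must posit the extension -- the paper simply takes $f^{\fc}_2(v_1+\theta_1,v_2+\theta_2)=\tfrac12(\iota_{v_1}\theta_2-\iota_{v_2}\theta_1)+\iota_{v_1\wedge v_2}A^2$ and $f^{\fa}_2(v_1,v_2)=\iota_{v_1\wedge v_2}A^2$ for arbitrary $\theta_i$ -- and then both the commutativity of the squares and the $L_\infty$-morphism property are things to verify, not things holding ``by construction''; that verification is the bulk of the paper's proof, and you have only asserted it. Second, your fallback argument for the weak-equivalence property via Collier's theorem plus two-out-of-three does not work: none of the vertical maps $\phi$, $\psi$, $i$, $p$ is a quasi-isomorphism (for instance $H_0(\mathfrak{courant}(X,\omega))\cong\X(X)\oplus\Omega^1(X)/d\Omega^0(X)$, which strictly contains the image of the degree-zero homology of $L_\infty(X,\omega)$ in general), so two-out-of-three applied in these squares yields nothing about $f^{\fc}$ or $f^{\fa}$. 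Your primary route -- observing that $\dt\bigl(\bar{\theta}-\iota_v(A^2-A^1)\bigr)=0$, respectively $\dt\bigl(\bar{\theta}+\iota_vA^1\bigr)=0$, for degree-zero elements and invoking the Poincar\'e lemma over the good cover -- is the paper's argument and is the one to keep.
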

\begin{proof}
In terms of the notation above, Prop.\
\ref{f1_prop} and Eqs.\ (\ref{struct_maps}) imply that the weak
equivalence $f \maps L_{\infty}(M,\omega) \xto{\sim}
\mathrm{dgLie}_{\mathrm{Qu}} (X,\bar{A})$ has non-trivial components
\begin{align*}
f_{1}( v + \theta) & = v - \theta + \iota_v(A^2 - A^1); \qquad \qquad
f_{1}(\eta) = -\eta;\\
f_{2}(v_1 + \theta_1, v_2 + \theta_2) &= \iota_{v_{1}} \theta_2  -
\iota_{v_{2}} \theta_1 + \iota_{v_1 \wedge v_2} A^2.
\end{align*}
(Above we have suppressed the restriction of global forms on $X$ to
open sets $U_\alpha \in \mathcal{U}$.)
Hence, we define the non-trivial components of $f^{\fc}$ to be
\begin{align*}
f^{\fc}_{1}( v + \theta) & = v - \theta + \iota_v(A^2 - A^1); \qquad\qquad
f^{\fc}_{1}(\eta) = -\eta;\\
f^{\fc}_{2}(v_1 + \theta_1, v_2 + \theta_2) &= \tfrac{1}{2} \bigl(\iota_{v_{1}} \theta_2  -
\iota_{v_{2}} \theta_1 \bigr) + \iota_{v_1 \wedge v_2} A^2.
\end{align*}
Similarly, we define $f^{\fa}$ by
\begin{align*}
f^{\fa}_{1}( v) & = v - \iota_v A^1; \qquad
f^{\fa}_{1}(\eta) = -\eta;\qquad
f^{\fa}_{2}(v_1, v_2) =\iota_{v_1 \wedge v_2} A^2.
\end{align*}
Note that if $v + \bar{\theta}$ is a degree 0 element of
$\mathrm{dgLie}_{\mathrm{Cou}}(X,\bar{A})$, then $\dt ( \bar{\theta} -
\iota_{v}(A^2 - A^{1})) =0$. Similarly, if
$v + \bar{\theta}$ is a degree 0 element of
$\mathrm{dgLie}_{\mathrm{At}}(X,\bar{A})$, then $\dt ( \bar{\theta} +
\iota_{v}A^{1}) =0$. It then follows from the Poincar\'{e} Lemma
that both $f^{\fc}_{1}$ and $f^{\fa}_{1}$ are quasi-isomorphisms of
chain complexes.

It follows immediately from the definitions (Prop.\
\ref{2plectic_seq_prop} and Prop.\ \ref{dgla_incl_prop}) that
$f^{\fc}_{1} \circ \phi_1 = i \circ f_{1}$ and $f^{\fa}_{1} \circ
\psi_1 = p \circ f^{\fc}_{1}$. Simple calculations show that the
following equations hold:
\begin{align*}
(f^{\fc} \circ \phi)_{2}(v_1 + \theta_1,v_2 + \theta_2) &:=
f^{\fc}_{1} \phi_{2}(v_1 + \theta_1,v_2 + \theta_2) +
f^{\fc}_{2}(\phi_1(v_1 + \theta_1),\phi_1(v_2 + \theta_2)) \\
& ~ = i \circ f_{2}(v_1 + \theta_1,v_2 + \theta_2),\\
(f^{\fa} \circ \psi)_{2}(v_1 + \theta_1,v_2 + \theta_2) &:=
f^{\fa}_{1} \psi_{2}(v_1 + \theta_1,v_2 + \theta_2) +
f^{\fa}_{2}(\psi_1(v_1 + \theta_1),\psi_1(v_2 + \theta_2)) \\
& ~ = p \circ f^{\fc}_{2}(v_1 + \theta_1,v_2 + \theta_2).
\end{align*}
Hence the above diagram commutes.
Next, using the identities from Sec.\ \ref{notat_cartan} and
the cocycle equation for $\bar{A}$, we obtain the following
equalities:
\begin{align*}
&\scalebox{0.92}{$\bbrac{f^{\fc}_{1}(v_1 + \theta_1)}{f^{\fc}_{1}(v_2 +
  \theta_2)}^{\mathrm{Cou}}= [v_1,v_2] - \L_{v_1} \theta_{2} +
\L_{v_2} \theta_1 + \iota_{[v_1,v_2]}(A^{2}-A^{1})$} \\
&\scalebox{0.92}{$\quad\phantom{mmmmmmmmmmmmmmmmmmmmmmmmmmmmmm}
 + \iota_{v_1 \wedge  v_2} \omega - \dt (\iota_{v_1 \wedge v_2} A^{2})$}\\
&\scalebox{0.92}{$f^{\fc}_{2}\bigl( \bbrac{v_1+\theta_1}{v_2+\theta_2}^{\fc}_2, v_3 +
\theta_3 \bigr) + \mathrm{cyc.\ perm.} = \frac{1}{4} \bigl(\iota_{v_3} \L_{v_{1}} \theta_{2} -
\iota_{v_3} \L_{v_{2}} \theta_{1} + \mathrm{cyc.\ perm.} \bigr) $}\\
& \scalebox{0.92}{$\quad \phantom{mmmmmmmmmmmmmmmm}
+ \bigl( \iota_{v_1 \wedge v_2} d\theta_{3} + \iota_{[v_1,v_2]
  \wedge v_3}A^{2} + \mathrm{cyc.\
  perm.} \bigr) + \frac{3}{2} \iota_{v_1\wedge v_2\wedge v_3}\omega$} \\
&\scalebox{0.92}{$\bbrac{f^{\fc}_{1}(v_{1} + \theta_1)}{f^{\fc}_{2}(v_2 + \theta_2,v_3 +
  \theta_3)}^{\mathrm{Cou}}_{2} + \mathrm{cyc.\ perm.} =
 \frac{1}{2}\bigl(\iota_{v_3} \L_{v_{1}} \theta_{2} -
\iota_{v_3} \L_{v_{2}} \theta_{1} + \mathrm{cyc.\ perm.} \bigr)$} \\
 &\scalebox{0.92}{$\quad \phantom{mmmmmmmmmmmmmmm}
 + \bigl( \iota_{v_1 \wedge v_2} d\theta_{3} + 2 \iota_{[v_1,v_2]
  \wedge v_3}A^{2} + \iota_{v_2 \wedge v_3} \L_{v_1}A^{2} + \mathrm{cyc.\   perm.} \bigr)$}.
\end{align*}
And similarly for $f^{\fa}$:
\begin{align*}
\bbrac{f^{\fa}_{1}(v_1)}{f^{\fa}_{1}(v_2)}^{\mathrm{At}} & = [v_1,v_2]
- \iota_{[v_1,v_2]} A^{1} - \dt (\iota_{v_1 \wedge v_2} A^{2}) \\
f^{\fa}_{2}\bigl (\bbrac{v_1}{v_2}^{\fa}_{2},v_3 \bigr)  + \mathrm{cyc.\
  perm.} &= \iota_{[v_1,v_2] \wedge v_3} A^{2} + \mathrm{cyc.\
  perm.}\\
\bbrac{f^{\fa}_{1}(v_1)}{f^{\fa}_{2}(v_2,v_3)}^{\mathrm{At}} + \mathrm{cyc.\
  perm.} &= \bigl(2 \iota_{[v_1,v_2] \wedge v_3} + \iota_{v_2 \wedge
  v_3} \L_{v_{1}} \bigr) A^{2} + \mathrm{cyc.\  perm.}.
\end{align*}
Using these in conjunction with Lemma \ref{tech_lemma}, it is  easy
to verify that $f^{\fc}$ and $f^{\fa}$ are $L_{\infty}$ morphisms (see, e.g.,
\cite[Def.\ 34]{HDA6}).
\end{proof}

\appendix

\section{An explicit weak equivalence between
  \texorpdfstring{$L_{\infty}(X,\omega)$}{LooXo} and \texorpdfstring{$\LieQuant$}{LieQuant}} \label{appendix.proof}

\numberwithin{equation}{section}

In this section, we prove Thm.\ \ref{theorem.main-theorem}. Namely,
given a pre $n$-plectic manifold $(X,\omega)$ and a prequantization
presented by a \v{C}ech-Deligne $n$-cocycle $\bar{A}$ with respect to
a cover $\cU= \{U_{\alpha} \}$ of $X$, we shall construct
an
$L_{\infty}$-quasi-isomorphism
$
L_{\infty}(X,\omega) \xto{\sim} \LieQuant.
$
We use the following conventions to help simplify calculations:
\begin{itemize}

\item{
We denote by
$\res \maps \Omega^{\bullet}(X) \to \check{C}^{0}(\cU,\Omega^{\bullet})$
the restriction map
$
\res(\theta)_{\alpha} = \theta \vert_{U_{\alpha}} \in \Omega^{\bullet}(U_{\alpha})
$.
For $\bar{A} = \sum_{i=0}^{n}A^{n-i}$ a \v{C}ech-Deligne cocycle, we
define for all $m \geq 1$:
\begin{equation} \label{Asign}
\bar{A}(m) := \sum_{i=0}^{n} (-1)^{mi} A^{n-i}.
\end{equation}

}

\item{
    $(L,l_1)$ denotes the underlying complex of the Lie $n$-algebra
    $L_{\infty}(X,\omega)$ introduced in Def.\ \ref{def.local-observables}:
    \begin{equation*}
      \begin{split}
        L_{0} & = \{ v +  H \in \Xham(X) \oplus \Omega^{n-1}(X)  ~ \vert ~ dH = -\iota_{v}\omega \} \\
        L_{i} & = \Omega^{n-1-i}(X) \quad i \geq 1
      \end{split}
    \end{equation*}
    with differential
    \begin{equation*}
      l_1 \theta = \begin{cases} 0 + d \theta, & \deg{\theta} =1\\    d
        \theta, & \deg{\theta}  > 1. \end{cases}
    \end{equation*}
The higher $k$-ary brackets of $L_{\infty}(X,\omega)$ are denoted by $l_{2},\ldots,l_{n+1}$.
}
\item{
   $(L',l'_1)$ denotes the underlying complex
    of the dgla $\LieQuant$ introduced in Def.\ \ref{def.dg-lie-quant}:
\begin{equation*}
\begin{split}
L'_{0} &= \{ v +  \bar{\theta} \in \X(X) \oplus
\Tot^{n-1}(\cU,\Omega) ~\vert ~  \L_{v} \bar{A} = \dt \bar{\theta} \}\\
L'_{i} &= \Tot^{n-1 - i}(\cU,\Omega) \quad
i \geq 1,
\end{split}
\end{equation*}
with differential
    \begin{equation*}
      l'_1 \bar{\theta} = \begin{cases} 0 + \dt \bar{\theta}, & \deg{\bar{\theta}} =1\\    \dt
        \bar{\theta}, & \deg{\bar{\theta}}  > 1. \end{cases}
    \end{equation*}
The Lie bracket on $\LieQuant$ is denoted by $l'_{2}= \bbrac{\cdot}{\cdot}$.
}
\item{
    Elements of arbitrary degree in $L$ (resp.\ $L'$) will be denoted
    as $x_1,x_2,\ldots$ (resp.\ $\bx_1,\bx_2,\ldots$) where
    \begin{equation}\label{notation_app}
      x_i := v_i + \theta_i \quad \text{(resp.\ $\bx_i := v_i + \btheta_i$).}
    \end{equation}
    It is understood that we set $v_i=0$ if
    $\deg{x_{i}} > 0$ (resp.\ $\deg{\bx_{i}} > 0$). So for example, for any $x_{1},\ldots,x_k \in L$ and
    any $\bx_{1}, \bx_2 \in L'$ the following equalities hold:
\begin{equation*} \label{brackets_app}
\begin{split}
l_2(x_1,x_2) &= [v_1,v_2] + \iota_{v_1 \wedge v_2}\omega;\qquad
l_{k \geq 3}(x_1,\ldots,x_k)  = \ksgn{k} \iota_{v_1 \wedge \cdots \wedge v_k}\omega,\\
\bbrac{\bx_1}{\bx_2} & = [v_1, v_2] + \L_{v_1} \btheta_2 - \L_{v_{2}} \btheta_1.
\end{split}
\end{equation*}
}

\item{
For all $m \geq 2$ we define a map
$
S_{m} \maps L^{\tensor m} \to L,
$
where
\begin{equation} \label{Sm_def}
S_{m}(x_{1},\ldots,x_{m})  =\sum_{i=1}^{m}(-1)^{i} \iota_{v_{1} \wedge \cdots \wedge
\widehat{v}_{i} \wedge \cdots \wedge v_{m}} \theta_{i}.
\end{equation}
It is clear from our above notation that $S_{m}(x_{1},\ldots,x_{m})=0$
if two or more arguments have degree $>0$.
Note that $S_{m}$ is a graded skew-symmetric map
of degree $m-1$ and $S_{m > n}=0$.
}

\item{
For all $m \geq 1$ we define the linear maps $f_{m} \maps L^{\tensor
  m} \to L'$:
\begin{equation} \label{f1_map}
f_1(x) = v - \res(\theta) + \iota_{v} \bar{A}(1),
\end{equation}
 \begin{equation}\label{struct_maps}
f_{ 2\leq m \leq n}(x_1,\ldots,x_m) = \ksgn{m} \left( \res \circ S_{m}(x_1,\ldots,x_{m}) +  \iota_{v_1 \wedge \cdots \wedge v_m}\bar{A}(m) \right),
\end{equation}
and $f_{m > n} =0.$ Note that each $f_m$ is graded skew-symmetric
with $\deg{f_m}=m-1$. Below, we will often suppress the restriction map in the definitions.
These are the structure maps we will use to construct an $L_{\infty}$ quasi-morphism.
}

\item{
Finally,  we
define the following auxiliary linear maps $I^{m}_{(1)}, ~ I^{m}_{(2)}, ~ I^{m}_{(3)}  \maps L^{\tensor m} \to L'$, for all $m \geq 1$,
where $I^{1}_{2}=I^{m <3}_{3}=0$ and
\begin{equation} \label{I_maps}
\begin{split}
\scalebox{0.86}{$I^{m}_{(1)}(x_{1},\ldots,x_{m})$} &\scalebox{0.86}{$= \displaystyle{\sum_{ \sigma \in \Sh(1,m-1)}} \chi(\sigma) (-1)^{m} f_{m}(l_{1}(x_{\sigma(1)}),\ldots,x_{\sigma(m)}),$} \\
\scalebox{0.86}{$I^{m \geq 2}_{(2)}(x_{1},\ldots,x_{m}) $}&\scalebox{0.86}{$=-\displaystyle{\sum_{ \sigma \in \Sh(2,m-2)}} \chi(\sigma) f_{m-1}(l_{2}(x_{\sigma(1)},x_{\sigma(2)}),\ldots,x_{\sigma(m)}),$} \\
\scalebox{0.86}{$ I^{m \geq 3}_{(3)}(x_{1},\ldots,x_{m})$} &\scalebox{0.86}{$= \displaystyle{\sum_{\substack{k=3\dots m \\ \sigma \in  \Sh(k,m-k)}}} \chi(\sigma)  (-1)^{k(m-k) +1}
f_{m+1 -k}(l_{k}(x_{\sigma(1)},\hdots,x_{\sigma(k)}),\ldots,x_{\sigma(m)}).$}
\end{split}
\end{equation}
Above, $\chi(\sigma)=(-1)^{\sigma} \epsilon(\sigma)$, where
$\epsilon(\sigma)$ is the Koszul sign of the permutation.
We also define for all $m \geq 1$ maps
$J^{m} \maps L^{\tensor m} \to L'$, where $J^{1} =0$ and
\begin{equation} \label{J_maps}
\begin{split}
& \scalebox{0.87}{$J^{m \geq 2} (x_1,\hdots,x_m)=$} \\
&  \scalebox{0.87}{$ \displaystyle{\sum_{\substack{s+t = m\\ \tau \in \Sh(s,m-s) \\ \tau(1) <
    \tau(s+1)}}} \chi(\tau) (-1)^{s-1} (-1)^{(t-1) \sum_{p=1}^{s}
  \deg{x_{\tau(p)}}}
\bbrac{f_{s}(x_{\tau(1)},\hdots,x_{\tau(s)})}{f_{t}(x_{\tau(s+1)},\hdots,x_{\tau(m)})}.$}
\end{split}
\end{equation}
}
\end{itemize}

\begin{propositionapp}\label{f1_prop}
The linear map \eqref{f1_map} $f_{1} \maps L \to L'$
is a quasi-isomorphism of chain complexes.
\end{propositionapp}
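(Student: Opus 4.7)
The plan is to verify that $f_1$ is a chain map by direct computation, then deduce the quasi-isomorphism by fitting $f_1$ into a short exact sequence whose flanking terms are controlled by the {\v C}ech--de Rham theorem.

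The chain-map property is immediate in homological degrees $i \geq 1$, where $f_1 = -\res$ lands in {\v C}ech degree $0$ and $\delta\circ \res = 0$, so that $l'_1 f_1(\theta) = -\dt\res\theta = -d\res\theta = -\res(l_1\theta) = f_1 l_1(\theta)$. The substantive point is the verification in degree zero, namely that $\bar{\theta} := -\res(H) + \iota_v \bar{A}(1)$ satisfies $\dt\bar{\theta} = \L_v\bar{A}$ whenever $\iota_v\omega + dH = 0$. Expanding $\dt(\iota_v \bar{A}(1))$ bidegree by bidegree and using in combination (i) the Deligne cocycle identities $\delta A^{n-i} = (-1)^i dA^{n-i-1}$ of Def.\ \ref{DeligneCocycle}, (ii) Cartan's formula $\L_v A^{n-i} = d\iota_v A^{n-i} + \iota_v dA^{n-i}$, and (iii) the curvature identity $dA^n = \res\omega$ extracted from $\dt\bar{A} = \omega$, one sees that consecutive terms telescope, leaving $\L_v \bar{A} - \iota_v dA^n$. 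Adding $\dt(-\res H) = -\res(dH) = \iota_v \res(\omega) = \iota_v dA^n$ closes the identity.

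For the quasi-isomorphism claim, I would fit $f_1$ into a morphism of short exact sequences of chain complexes
\[
\xymatrix{
0 \ar[r] & C \ar[r] \ar[d]^{-\res} & L \ar[r] \ar[d]^{f_1} & \mathfrak{X}_{\mathrm{Ham}}(X)[0] \ar[r] \ar[d]^{\id} & 0 \\
0 \ar[r] & C' \ar[r] & L' \ar[r] & \mathfrak{X}_{\mathrm{Ham}}(X)[0] \ar[r] & 0,
}
\]
where $C$ is the canonical chain-level truncation of the de Rham complex, i.e.\ $C_i = \Omega^{n-1-i}(X)$ for $i\geq 1$ and $C_0 = \Omega^{n-1}_{\mathrm{cl}}(X)$, embedded in $L$ via $H \mapsto (0,H)$, and $C'$ is the analogous truncation of the total {\v C}ech--de Rham complex, with $C'_0 = Z\bigl(\Tot^{n-1}(\cU,\Omega)\bigr)$, embedded in $L'$ via $\bar\theta \mapsto (0,\bar\theta)$. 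The right-hand quotient maps send $(v+H)$ and $(v+\bar\theta)$ to $v$; surjectivity of $L'_0 \to \mathfrak{X}_{\mathrm{Ham}}(X)$ is witnessed by $f_1$ itself.

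By the five lemma applied to the associated long exact sequences in homology, the result then reduces to showing that $-\res\colon C \to C'$ is a quasi-isomorphism. This is exactly the {\v C}ech--de Rham theorem for the good cover $\cU$: the inclusion $\res\colon \Omega^\bullet(X) \to \Tot^\bullet(\cU,\Omega)$ is a quasi-isomorphism, and this property is preserved by canonical truncation in degree $n-1$, which yields precisely $C \to C'$. I expect the sign- and cocycle-bookkeeping in the degree-$0$ chain-map computation above to be the main obstacle; once that identity is in hand, the five-lemma reduction and the appeal to {\v C}ech--de Rham are essentially formal.
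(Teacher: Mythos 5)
Your proof is essentially correct, but it takes a genuinely different route from the paper's in the quasi-isomorphism step. The paper factors $f_1$ as $L \xto{r} \tilde{L} \xto{\phi} L'$, where $\phi(\bar x)=v+\bar\theta+\iota_v\bar A(1)$ is an isomorphism onto $L'$ and $\tilde L$ is the complex cut out by the condition $\dt\bar\theta=\res(\iota_v\omega)$; it then shows $r$ is a quasi-isomorphism by writing down an explicit homotopy inverse built from a partition of unity (the Bott--Tu collation map $\jmath$ with $\jmath\circ\res=\id$ and $\id-\res\circ\jmath=\dt H+H\dt$), the only delicate point being that the homotopy $H$ kills $\res(\iota_v\omega)\in\check C^0(\cU,\Omega^n)$ so that the homotopy identity survives restriction to degree $0$. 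You instead compare the two rows of a short exact sequence with quotient $\mathfrak{X}_{\mathrm{Ham}}(X)[0]$ and kernels the canonical truncations of the de Rham and total {\v C}ech--de Rham complexes, and conclude by the five lemma plus the {\v C}ech--de Rham theorem (truncation preserving quasi-isomorphisms). Your argument is more formal and avoids the explicit homotopy bookkeeping; the paper's argument is more constructive (it produces an explicit homotopy inverse, and the twist by $\iota_v\bar A(1)$ is isolated in the isomorphism $\phi$), and it does not need to know in advance which vector fields occur in $L'_0$.

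That last point is the one step you have left unaddressed: for your bottom row to be a short exact sequence ending in $\mathfrak{X}_{\mathrm{Ham}}(X)[0]$, you must show not only that the projection $L'_0\to\mathfrak{X}(X)$ hits every Hamiltonian vector field (which $f_1$ does witness), but also that its image is \emph{contained} in $\mathfrak{X}_{\mathrm{Ham}}(X)$, i.e.\ that $\L_v\bar A=\dt\bar\theta$ forces $v$ to be Hamiltonian. This is true, and in the paper it appears only as a corollary of the theorem, so in your route it must be proved up front. It follows from ingredients you already have: your degree-$0$ chain-map computation rearranges to $\res(\iota_v\omega)=\dt\bigl(\bar\theta-\iota_v\bar A(1)\bigr)$, whence $\iota_v\omega$ is closed and $\res(\iota_v\omega)$ is exact in the total complex, and injectivity of $\res$ on cohomology (the other half of the {\v C}ech--de Rham theorem) then gives $\iota_v\omega=-dH$ for some global $H$. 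Note that a componentwise glance only yields local exactness of $\iota_v\omega$, so this genuinely uses the quasi-isomorphism and is not automatic. With that step inserted, your five-lemma argument goes through.
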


\begin{proof}
It is clear from the definition that $f_1$ is a chain map.
Since $\bar{A}$ is a \v{C}ech-Deligne cocycle, and since the interior
product $\iota_v$ commutes with the \v{C}ech differential, we have
$
\dt \iota_v \bar{A}(1) = d\iota_v A^{n} + \sum_{i=1}^{n} \L_{v} A^{n-i}.
$
This implies that $v + \bar{\theta} \in L'_{0}$ if and only if
$\dt \bigl(\bar{\theta} -\iota_{v} \bar{A}(1) \bigr) = \res(\iota_v \omega).$
Let $\tilde{L}$ be the complex whose underlying graded vector space is
\begin{equation*}
\tilde{L}_{0} 
= \{ v +  \btheta \in \X(X) \oplus
\Tot^{n-1}(\cU,\Omega) ~\vert ~  \dt \btheta = \res(\iota_v\omega) \}
;\qquad\qquad
\tilde{L}_{i} 
 = L'_{i} \quad i > 0,
\end{equation*}
and whose differential is $\tilde{l}_1=l'_{1}$, the same differential as on $L'$. The chain map
$f_1$ then is equal to the composition:
$
L \xto{r} \tilde{L} \xto{\phi} L',
$
where, using notation \eqref{notation_app}, $r(x)= v - \res(\theta)$,
and $\phi(\bar{x})=v + \btheta + \iota_v \bar{A}(1)$.
Note that $\phi$ is a isomorphism of complexes.

Next, let $\{\rho_{\alpha} \}$ be a partition of unity subordinate to the
cover $\cU = \{U_{\alpha} \}$. Define a map $K \maps \check{C}^{i}(\mathcal{U},\Omega^{j}) \to \check{C}^{i-1}(\mathcal{U},\Omega^{j})$
to be
$
(K \theta)_{\alpha_0,\ldots,\alpha_{i-1}} = \sum_{\alpha}
\rho_{\alpha} \theta_{\alpha,\alpha_0,\ldots,\alpha_{i-1}},
$
and let $D'' \maps \check{C}^{i}(\mathcal{U},\Omega^{j}) \to \check{C}^{i}(\mathcal{U},\Omega^{j+1})$
be the ``signed'' de Rham differential $D'' \theta = (-1)^{i} d
\theta$. Then, see \cite[Prop.\ 9.5]{Bott-Tu}, there exists a chain map $
\jmath \maps \Tot^{\bullet}(\cU,\Omega) \to \Omega^{\bullet}(X)$
such that
\begin{equation}\label{collate_eq}
\jmath \circ \res = \id_{\Omega^{\bullet}(X)}, \quad
\id_{\Tot^{\bullet}(\cU,\Omega)} - \res \circ \jmath = \dt H + H \dt,
\end{equation}
where $H \maps \Tot^{\bullet}(\cU,\Omega) \to \Tot^{\bullet}(\cU,\Omega)$ is the chain homotopy given as follows:
if $\bar{\theta} = \sum_{i=0}^{m} \theta^{m-i}$,
with  $\theta^{m-i} \in \check{C}^{i}(\cU,\Omega^{m-i})$, then
$
H(\bar{\theta}) = \sum_{i=0}^{m-1} (H\bar{\theta})_{i},
$
where
\begin{equation} \label{homotopy_eq}
\begin{split}
(H\bar{\theta})_{i} &= \sum_{j=i+1}^{m} K \circ
\underset{j-(i+1)}{\underbrace{(-D'' K) \circ (-D'' K) \circ \cdots
    \circ (-D'' K)}} \theta^{m-j} \in \check{C}^{i}(\cU,\Omega^{m-1-i}).
\end{split}
\end{equation}
Hence, the restriction map $\res$ is a quasi-isomorphism between the
de Rham and \v{C}ech-de Rham complexes, and $\jmath$ is its  homotopy inverse.

Let $ \tilde{\jmath} \maps \tilde{L} \to L$ to be the chain map
$\tilde{\jmath}(\bar{x})= v - \jmath(\btheta)$.
Note that $\tilde{\jmath}$ is well-defined on degree 0 elements since
$\dt \bar{\theta} = \res(\iota_v \omega)$.
Let $\tilde{H} \maps \tilde{L} \to \tilde{L}$ be the degree 1 map
$\tilde{H}(\bar{x}) =H(\btheta)$.
We now show that $\tilde{H}$ is a chain homotopy i.e.,
$\id_{\tilde{L}} - r \circ \tilde{\jmath} = \tilde{l}_1 \tilde{H} + \tilde{H} \tilde{l_{1}}.
$ 
Since \eqref{collate_eq} holds, it follows that we just need to check 
this
on degree 0 elements.
Since we have the equality $\dt \bar{\theta} = \res(\iota_v \omega) \in \check{C}^{0}(\cU,\Omega^{n})$
for all $v + \bar{\theta} \in \tilde{L}_{0}$, it follows from the definition of $H$ \eqref{homotopy_eq} that
$H \bigl(\dt \bar{\eta} \bigr) =0$. So \eqref{collate_eq} implies that 
the above identity
holds
for degree 0 as well. Therefore $r$ is a quasi-isomorphism, and hence $f_1$ is a quasi-isomorphism.
\end{proof}

\subsection*{Technical lemmas}
In the remainder of the appendix we show that
the maps $f_{2 \leq m \leq n} \maps L^{\tensor m} \to L'$ given by equation \eqref{struct_maps} lift the  map $f_1
\maps L \to L'$ to an $L_{\infty}$-morphism between
$L_{\infty}(X,\omega)$ and $\LieQuant$. Prop.\ \ref{f1_prop} implies
that this lift will be an $L_{\infty}$-quasi-isomorphism.
We present here several small computational results necessary for the proof.
\begin{lemmaapp} \label{I1_lemma}
For all $m \geq 2$ and  $x_{1}, \ldots,x_{m} \in L$, we have
\begin{equation} \label{I1_lemma_eq0}
\begin{split}
I^{ m\geq 2}_{(1)} (x_1,\hdots,x_m) &= \ksgn{m} (-1)^{m} \sum_{i=1}^{m}
(-1)^{i}\iota_{v_{1} \wedge \cdots \wedge \widehat{v}_{i} \wedge \cdots \wedge
v_{m}} (l_{1} \theta_{i}).
\end{split}
\end{equation}
\end{lemmaapp}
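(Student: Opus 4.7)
The plan is to unpack the definition of $I^{m}_{(1)}$ and exploit the fact that $l_{1}$ always kills the vector-field component of any element of $L$. Concretely: if $\deg{x} \geq 2$, then $l_{1} x = dx$ is a pure form; if $\deg{x} = 1$, then $l_{1} x = (0, dx) \in {\mathrm{Ham}}^{n-1}(X)$, whose vector-field part is zero by construction; and if $\deg{x} = 0$, then $l_{1} x = 0$. In every case, the ``vector-field component'' of $l_{1} x_{\sigma(1)}$ vanishes. Feeding this observation into the formula \eqref{struct_maps} for $f_{m}$, the summand $\iota_{w_{1} \wedge \cdots \wedge w_{m}} \bar{A}(m)$ vanishes because $w_{1} = 0$, and in the expansion of $S_{m}(l_{1} x_{\sigma(1)}, x_{\sigma(2)}, \ldots, x_{\sigma(m)})$ only the $j = 1$ term survives --- each of the other summands contains the zero vector field in its wedge product. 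Therefore
\begin{equation*}
f_{m}(l_{1} x_{\sigma(1)}, x_{\sigma(2)}, \ldots, x_{\sigma(m)}) = -\ksgn{m}\, \iota_{v_{\sigma(2)} \wedge \cdots \wedge v_{\sigma(m)}}(l_{1}\theta_{\sigma(1)}).
\end{equation*}

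Next, I would parametrize the $(1, m-1)$-shuffles by setting $\sigma(1) = i$ with $\sigma(2) < \cdots < \sigma(m)$, so that $(\sigma(2), \ldots, \sigma(m)) = (1, \ldots, \widehat{i}, \ldots, m)$, the permutation sign is $(-1)^{i-1}$, and the Koszul sign is $\epsilon(\sigma_{i}) = (-1)^{\deg{x_{i}} \sum_{k<i} \deg{x_{k}}}$. In particular $v_{\sigma(2)} \wedge \cdots \wedge v_{\sigma(m)} = v_{1} \wedge \cdots \wedge \widehat{v}_{i} \wedge \cdots \wedge v_{m}$. Substituting into the definition of $I^{m}_{(1)}$ and collecting the signs $(-1)^{i-1}(-1)^{m}(-\ksgn{m}) = \ksgn{m}(-1)^{m}(-1)^{i}$ gives
\begin{equation*}
I^{m}_{(1)}(x_{1}, \ldots, x_{m}) = \ksgn{m}(-1)^{m} \sum_{i=1}^{m} (-1)^{i}\, \epsilon(\sigma_{i})\, \iota_{v_{1} \wedge \cdots \widehat{v}_{i} \cdots v_{m}}(l_{1}\theta_{i}).
\end{equation*}

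It then remains to argue that $\epsilon(\sigma_{i})$ can be replaced by $1$ in each surviving term, via a short case analysis on how many of the $x_{k}$ have positive degree. If two or more do, then for every index $i$ the wedge $v_{1} \wedge \cdots \widehat{v}_{i} \cdots v_{m}$ contains at least one zero factor, so both sides vanish identically. If exactly one $x_{j}$ has positive degree, only the $i = j$ term contributes on either side, and $\epsilon(\sigma_{j}) = 1$ because $x_{j}$ is moved past purely degree-zero elements. Finally, if every $x_{k}$ has degree zero, then $l_{1} x_{\sigma(1)} = 0$ on the left-hand side, and the right-hand side vanishes under the convention $l_{1}\theta_{i} = 0$ in degree zero.

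The calculation is largely sign bookkeeping; the main obstacle is tracking the interaction of the permutation sign $(-1)^{\sigma}$ with the Koszul sign $\epsilon(\sigma)$ uniformly in the degrees of the $x_{k}$, and verifying that the Koszul sign collapses to $1$ in precisely those cases where the expression is nonzero. The nontrivial conceptual input, which makes all the bookkeeping work, is the collapse of $S_{m}$ to a single term forced by the vanishing of the vector-field component of $l_{1}$.
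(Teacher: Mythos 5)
Your proof is correct and follows essentially the same route as the paper's: unpack $I^{m}_{(1)}$ via \eqref{struct_maps}, use that $l_1$ has vanishing vector-field component to kill the $\bar{A}(m)$ term and collapse $S_m$ to its first summand, parametrize the $(1,m-1)$-shuffles by $\sigma(1)=i$, and observe that the Koszul sign is $1$ on all nonvanishing terms. Your explicit case analysis justifying $\epsilon(\sigma)=1$ (and the degenerate all-degree-zero case) is just a more detailed rendering of the step the paper states in one line.
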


\begin{proof}
Equations \eqref{struct_maps} and \eqref{I_maps} imply that
\begin{equation}\label{I1_lemma_eq1}
\begin{split}
\scalebox{0.995}{$I^{ m\geq 2}_{(1)} (x_1,\hdots,x_m)$}&\scalebox{0.995}{$= \ksgn{m}(-1)^{m}\displaystyle{ \sum_{i=1}^{m}}
(-1)^{i-1} \epsilon(\sigma(i)) S_{m}( l_{1}x_{i},x_1,\ldots,\widehat{x}_{i},\ldots,x_{m}).$}
\end{split}
\end{equation}
The vector field associated to $l_{1}x_{i}=l_{1}(v_{i}+ \theta_i)$ is zero, hence
\[
S_{m}(
l_{1}x_{i},x_1,\ldots,\widehat{x}_{i},\ldots,x_{m})=-\iota_{v_{1}
\wedge \cdots \wedge \widehat{v}_{i} \wedge \cdots \wedge v_{m}} (l_{1} \theta_{i}),
\]
and furthermore, any non-zero terms contributing to the sum \eqref{I1_lemma_eq1}
necessarily have $\epsilon(\sigma(i))=1$.

\end{proof}

\begin{lemmaapp}\label{I2_lemma}
If $x_1,x_2 \in L$, then
$
I^{2}_{(2)}(x_1,x_2) = -[v_1,v_2] + \iota_{v_1 \wedge v_2} \omega
-\iota_{[v_1,v_2]} \bar{A}(1),
$ 
and for all $m > 2$ and  $x_{1}, \ldots,x_{m} \in L$, the following
equality holds:
\begin{equation}
\begin{split}
\scalebox{0.86}{$I^{m > 2}_{(2)}(x_1,\ldots,x_m)$} &\scalebox{0.86}{$= - (-1)^{\binom{m}{2}} \left( \binom{m}{2}
  \iota_{v_1 \wedge \cdots \wedge v_m} \omega + \displaystyle{\sum_{i
    <k}} (-1)^{i+k} \iota_{[v_i,v_k] \wedge v_{1} \wedge \cdots
    \widehat{v}_{i} \cdots \widehat{v}_{k} \cdots \wedge
    v_{m}}\bar{A}(m-1) \right)$} \\
& \scalebox{0.86}{$+ (-1)^{\binom{m}{2}} \Bigl(  \displaystyle{\sum_{i < k < j} - \sum_{i
      < j < k} + \sum_{j < i < k}} \Bigr) (-1)^{i+k+j}
    \iota_{[v_i,v_k] \wedge v_1 \cdots \widehat{v}_{i} \cdots \widehat{v}_{j}
  \cdots \widehat{v}_{k} \cdots \wedge v_m} \theta_{j}$}
\end{split}
\end{equation}
\end{lemmaapp}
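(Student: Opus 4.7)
The plan is to handle the $m = 2$ case by a direct computation, and to handle $m > 2$ by expanding the shuffle sum defining $I^{m}_{(2)}$, splitting $f_{m-1}$ into its $S_{m-1}$ part and its $\iota_{\cdots}\bar{A}(m-1)$ part, and then tracking which terms survive based on the degrees of the inputs. For $m = 2$, the only shuffle gives $I^{2}_{(2)}(x_1, x_2) = - f_1(l_2(x_1, x_2))$; since $l_2$ vanishes unless $\deg{x_1} = \deg{x_2} = 0$, the output $l_2(x_1, x_2) = [v_1, v_2] + \iota_{v_1 \wedge v_2} \omega$ is a Hamiltonian pair (by the proof of Lemma \ref{lemma.hamiltonian}), and applying the formula \eqref{f1_map} for $f_1$ produces the stated expression directly.

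For $m > 2$, the global sign $-\ksgn{m-1} = (-1)^{\binom{m}{2}}$ is pulled out of the definition. The bracket $l_2$ forces $\deg{x_{\sigma(1)}} = \deg{x_{\sigma(2)}} = 0$ for any surviving term, and the rest splits cleanly according to how many, and which, of the remaining inputs have positive degree. The $\iota_{\cdots}\bar{A}(m-1)$ piece requires every input of $f_{m-1}$ to have nonzero vector field, so it contributes only when all $x_j$ have degree zero; combining the shuffle sign $\chi(\sigma) = (-1)^{i+k+1}$ for $(\sigma(1),\sigma(2)) = (i,k)$ (the Koszul sign is trivial since at most one input has positive degree) with the overall $(-1)^{\binom{m}{2}}$ yields the $\sum_{i<k} \iota_{[v_i,v_k] \wedge \cdots} \bar{A}(m-1)$ summand of the claim. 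The $p = 1$ term of $S_{m-1}$ contracts the $\iota_{v_{\sigma(1)} \wedge v_{\sigma(2)}} \omega$ part of $l_2$ against the wedge of the other $m - 2$ vectors; by alternation of $\iota_{v_1 \wedge \cdots \wedge v_m}$, each of the $\binom{m}{2}$ shuffles contributes the same $-\iota_{v_1 \wedge \cdots \wedge v_m} \omega$, producing the $\binom{m}{2} \iota_{v_1 \wedge \cdots \wedge v_m} \omega$ summand.

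The main combinatorial step is handling the $p \geq 2$ terms of $S_{m-1}$, which pair $\theta_{\sigma(p+1)}$ with the wedge of all the other vector fields. Nonvanishing forces all inputs except exactly one $x_j$ to be of degree zero, and further forces $x_j$ to occupy position $p + 1 = \sigma^{-1}(j)$ in the shuffled list. The obstacle is the bookkeeping of the product $(-1)^{\binom{m}{2}} \chi(\sigma) (-1)^p$ as the triple $(i, k, j)$ ranges over $i < k$ with $j \notin \{i, k\}$; splitting into the three subcases $j < i < k$, $i < j < k$, and $i < k < j$ gives $p = j + 1$, $p = j$, and $p = j - 1$ respectively, and a direct sign simplification in each subcase yields precisely the stated alternating sum $\bigl(\sum_{i < k < j} - \sum_{i < j < k} + \sum_{j < i < k}\bigr)$, completing the proof.
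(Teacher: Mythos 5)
Your proposal is correct and follows essentially the same route as the paper's proof: both expand the $\Sh(2,m-2)$-sum over pairs $i<k$, split $f_{m-1}$ into its $S_{m-1}$ and $\iota_{\cdots}\bar{A}(m-1)$ parts, and settle the signs by the case analysis $j<i<k$, $i<j<k$, $i<k<j$ (the paper's three-range sum), with the same sign values in each case. One small wording fix: ``all inputs except exactly one $x_j$ of degree zero'' should read ``except possibly one,'' since when every input has degree zero the $p\geq 2$ terms still contribute the Hamiltonian forms $\theta_j$; your bookkeeping over all triples $(i,k,j)$ already covers this, so nothing else changes.
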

\begin{proof}
The $m=2$ case follows immediately from the definitions.
For  $m >2$, recall the definition of $I^{m}_{(2)}$ \eqref{I_maps}
and note the following equality of summations:
\begin{equation}\label{I2_lemma_eq0}
-\sum_{ \sigma
  \in \Sh(2,m-2)} \chi(\sigma) = \sum_{1 \leq i < k \leq m}(-1)^{i+k} \epsilon(i,k).
\end{equation}
A summand contributing to  $I^{m}_{(2)}$ is of the form
\begin{equation} \label{I2_lemma_eq1}
\begin{split}
\scalebox{0.86}{$f_{m-1}(l_{2}(x_{i},x_{k}),x_1,\ldots,
  \widehat{x}_i,\ldots,\widehat{x}_k,\ldots x_{m})$} & \scalebox{0.86}{$= - (-1)^{\binom{m}{2}}  \Bigl(
  S_{m-1}(l_{2}(x_{i},x_{k}),x_1,\ldots,\widehat{x}_i,\ldots,\widehat{x}_k,\ldots
  x_{m})$} \\
&\scalebox{0.86}{$\quad + \iota_{[v_i,v_k] \wedge v_{1} \wedge \cdots \widehat{v}_{i}
  \cdots \widehat{v}_{k} \cdots \wedge v_{m}}\bar{A}(m-1) \Bigr).$}
\end{split}
\end{equation}
The second term on the right-hand side above vanishes if
$\deg{x_{i}}>0$ for any $i$, hence taking the summation
\eqref{I2_lemma_eq0} of all such terms gives
\begin{equation} \label{I2_lemma_eq1.5}
\sum_{1 \leq i < k \leq m}(-1)^{i+k} \iota_{[v_i,v_k] \wedge v_{1} \wedge \cdots \widehat{v}_{i}
  \cdots \widehat{v}_{k} \cdots \wedge v_{m}}\bar{A}(m-1).
\end{equation}

Using \eqref{Sm_def}, we rewrite the first term on the right-hand side
of \eqref{I2_lemma_eq1} as
\begin{equation}\label{I2_lemma_eq2}
\begin{split}
  S_{m-1}(l_{2}(x_{i},x_{k}),x_1,&\ldots,\widehat{x}_i,\ldots,\widehat{x}_k,\ldots
  x_{m}) = (-1)^{i+k} \iota_{\vk{m}}\omega\\
& + \Bigl( -\sum_{j=1}^{i} + \sum_{j=i+1}^{k-1} - \sum_{j=k+1}^{m}
\Bigr) (-1)^{j} \iota_{[v_i,v_k] \wedge v_1 \cdots \widehat{v}_{i} \cdots \widehat{v}_{j}
  \cdots \widehat{v}_{k} \cdots \wedge v_m} \theta_{j}.
\end{split}
\end{equation}
The first term on the right-hand side of \eqref{I2_lemma_eq2} vanishes
if $\deg{x_{i}}>0$ for any $i$. The second term vanishes if more than
one $x_i$ has degree $>0$. Hence, the summation \eqref{I2_lemma_eq0}
of the terms \eqref{I2_lemma_eq2} is
\begin{equation} \label {I2_lemma_eq3}
\binom{m}{2}  \iota_{v_1 \wedge \cdots \wedge v_m} \omega +
\Bigl( - \sum_{i < k < j} + \sum_{i
      < j < k} - \sum_{j < i < k} \Bigr) (-1)^{i+k+j}
    \iota_{[v_i,v_k] \wedge v_1 \cdots \widehat{v}_{i} \cdots \widehat{v}_{j}
  \cdots \widehat{v}_{k} \cdots \wedge v_m} \theta_{j}
\end{equation}
Combining the above with \eqref{I2_lemma_eq1.5} completes the proof.

\end{proof}

\begin{lemmaapp}\label{I3_lemma}
For all $m \geq 3$ and  $x_{1}, \ldots,x_{m} \in L$, the following
equality holds:
\begin{equation} \label{I3_lemma_eq0}
\begin{split}
  I^{m \geq 3}_{(3)}(x_{1},\ldots,x_{m}) &= (-1)^{\binom{m+1}{2}} (-1)^{m}  \left(
    \binom{m}{2} -m +1 \right) \iota_{\vk{m}} \omega.
\end{split}
\end{equation}
\end{lemmaapp}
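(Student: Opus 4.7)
The plan is to expand $I^{m}_{(3)}(x_1,\ldots,x_m)$ directly by inserting the definitions of the higher brackets $l_k$ and the structure maps $f_{m+1-k}$. The crucial initial observation is that $l_k$ for $k \geq 3$ vanishes unless all of its inputs have degree zero, and even then it produces a ``pure theta'' element, i.e., one whose vector-field part is zero. Because of the latter, when $l_k(x_{\sigma(1)},\ldots,x_{\sigma(k)})$ is fed as the first argument of $f_{m+1-k}$, the $\iota_{v_{1} \wedge \cdots \wedge v_{m+1-k}} \bar{A}(m+1-k)$ contribution of formula \eqref{struct_maps} vanishes, and only one summand of $S_{m+1-k}$ survives. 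This collapses the evaluation to $f_{m+1-k}(l_k,x_{\sigma(k+1)},\ldots,x_{\sigma(m)})= \pm \iota_{v_{\sigma(k+1)} \wedge \cdots \wedge v_{\sigma(m)}}(\text{theta part of }l_k)$, and forces all inputs to have degree zero.

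Next I would assemble the signs. Using $\iota_{v_a \wedge v_b} = \iota_{v_b} \iota_{v_a}$, the surviving expression is proportional to $\iota_{v_{\sigma(1)} \wedge \cdots \wedge v_{\sigma(m)}}\omega$. Since all $x_i$ are in degree zero, the Koszul sign $\epsilon(\sigma)$ equals $1$ and $\chi(\sigma) = (-1)^\sigma$, which exactly cancels the sign introduced by the permutation of $v_{\sigma(i)}$'s under $\iota$. Hence \emph{every} shuffle in $\mathrm{Sh}(k,m-k)$ produces the same value, and the shuffle-sum for fixed $k$ reduces to $\binom{m}{k}$ times a single sign times $\iota_{v_1 \wedge \cdots \wedge v_m}\omega$. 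The $k=m$ case (where $f_1$ is applied, not $f_{m+1-k}$ via \eqref{struct_maps}) must be handled separately but happens to fit the same template once one uses $f_1(\theta) = -\theta$ for positive-degree $\theta$.

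The next step is to simplify the accumulated sign exponent $k(m-k) + \binom{m+2-k}{2} + \binom{k+1}{2}$ coming from the definition of $I^m_{(3)}$, the definition \eqref{struct_maps} of $f_{m+1-k}$, and the formula for $l_k$. A short algebraic calculation — expanding $(m+2-k)(m+1-k) + k(k+1)$ and collecting — shows this exponent equals $\binom{m+2}{2} - k$ modulo $2$. Thus
\[
I^m_{(3)}(x_1,\ldots,x_m) = (-1)^{\binom{m+2}{2}}\left(\sum_{k=3}^{m} \binom{m}{k} (-1)^k\right) \iota_{v_1 \wedge \cdots \wedge v_m}\omega.
\]
Finally, I would invoke the binomial identity $\sum_{k=0}^m \binom{m}{k}(-1)^k = 0$ to get $\sum_{k=3}^{m}\binom{m}{k}(-1)^k = -1 + m - \binom{m}{2}$, and then rewrite $(-1)^{\binom{m+2}{2}} = -(-1)^{\binom{m+1}{2}+m}$ using $\binom{m+2}{2} = \binom{m+1}{2}+(m+1)$, yielding exactly the claimed formula \eqref{I3_lemma_eq0}.

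The main obstacle is purely bookkeeping: keeping track of three sources of signs (the $L_\infty$-cocycle sign in $I^m_{(3)}$, the sign in the definition \eqref{struct_maps} of $f_{m+1-k}$, and the sign inside $l_k$), plus the reshuffling of interior products $\iota_{v_{\sigma(j)}}$. The conceptual content is small — it is essentially the observation that for pure-$\omega$ contributions the dependence on the shuffle disappears and one is reduced to the alternating binomial sum $\sum_{k \geq 3}\binom{m}{k}(-1)^k$.
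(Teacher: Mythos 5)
Your proposal is correct and follows essentially the same route as the paper's proof: expand $I^{m}_{(3)}$ using the definitions of $l_k$ and $f_{m+1-k}$, note that only the pure-$\omega$ term with all arguments of degree zero survives (the $\bar{A}$-term and all but one summand of $S_{m+1-k}$ drop out because $l_k$ has zero vector-field part), observe that every $(k,m-k)$-unshuffle contributes equally so the sum collapses to $\binom{m}{k}$ copies, and finish with the alternating binomial identity. Your intermediate sign form $(-1)^{\binom{m+2}{2}-k}$ is equivalent to the paper's $\ksgn{m}(-1)^m(-1)^k$, so the arguments coincide in substance.
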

\begin{proof}
Let $\sigma \in \Sh(k,m-k)$. We have the following equalities:
\begin{equation} \label{I3_lemma_eq1}
f_{1}(l_{m}(x_{\sigma(1)},\ldots,x_{\sigma(m)})) = (-1)^{\binom{m+1}{2}}\iota_{\vsk{m}} \omega,
\end{equation}
and, for all $k < m$:
\begin{equation}\label{I3_lemma_eq2}
\begin{split}
\scalebox{0.86}{$f_{m+1 -k}(l_{k}(x_{\sigma(1)},\hdots,x_{\sigma(k)}) ,\ldots,x_{\sigma(m)})$}
&=\scalebox{0.86}{$ -(-1)^{\binom{m-k+2}{2}} S_{m+1-k} (l_{k}(x_{\sigma(1)},\hdots,x_{\sigma(k)}) ,\ldots,x_{\sigma(m)})$}\\
&\scalebox{0.86}{$=-(-1)^{\binom{m-k+2}{2}} (-1)^{1}\iota_{v_{\sigma(k+1)} \wedge \cdots \wedge
v_{\sigma(m)}} l_{k}(x_{\sigma(1)},\hdots,x_{\sigma(k)})$} \\
&\scalebox{0.86}{$=-(-1)^{\binom{k+1}{2}}(-1)^{\binom{m-k+2}{2}}\iota_{v_{\sigma(1)} \wedge \cdots \wedge
v_{\sigma(m)}} \omega.$}
\end{split}
\end{equation}
The second-to-last equality above follows from the fact that $\deg{l_k} >0$ for $k \geq 3$.
Combining \eqref{I3_lemma_eq1}, and \eqref{I3_lemma_eq2}, with the
definition of $I^{m}_{(3)}$ \eqref{I_maps} gives
\begin{equation}\label{I3_lemma_eq3}
\begin{split}
\scalebox{0.93}{$I^{m \geq 3}_{(3)}(x_{1},\ldots,x_{m}) = \displaystyle{\sum_{k=3}^{m} \sum_{\sigma
  \in  \Sh(k,m-k)}} \chi(\sigma)
(-1)^{k(m-k)}(-1)^{\binom{k+1}{2}}(-1)^{\binom{m-k+2}{2}}\iota_{v_{\sigma(1)}
\wedge \cdots \wedge v_{\sigma(m)}} \omega.$}
\end{split}
\end{equation}
The sum on the right-hand side above vanishes if, for any $i$,
$\deg{x_i}>0$. Non-zero summands above have
$\chi(\sigma)=(-1)^{\sigma}$, and since $\omega$ is skew-symmetric,
reordering the vector fields will cancel this sign.
The number of unshuffles appearing in the summation is $\binom{m}{k}$,
therefore, summing over $\sigma$ gives
\begin{equation}\label{I3_lemma_eq4}
\begin{split}
I^{m \geq 3}_{(3)}(x_{1},\ldots,x_{m}) = \sum_{k=3}^{m}
(-1)^{k(m-k)}(-1)^{\binom{k+1}{2}}(-1)^{\binom{m-k+2}{2}} \binom{m}{k}
\iota_{\vk{m}} \omega.
\end{split}
\end{equation}
It's easy to see that
$
(-1)^{k(m-k)} (-1)^{\binom{k+1}{2}}(-1)^{\binom{m-k+2}{2}}\! =\!\ksgn{m}(-1)^{m}(-1)^{k}.
$
Substituting the above sign into \eqref{I3_lemma_eq4} and using the
 fact that $\sum_{k=0}^{m} \binom{m}{k} (-1)^{k} =0$
gives the equality \eqref{I3_lemma_eq0}.

\end{proof}

\begin{lemmaapp}\label{J_lemma1}
For all $m \geq 2$ and  $x_{1}, \ldots,x_{m} \in L$ the following
equality holds:
\begin{equation}\label{Jlem_eq0}
\begin{split}
J^{ m\geq 2}(x_1,\hdots,x_m)  &= \sum_{i=1}^{m} (-1)^{i-1} \bbrac{f_{1}(x_{i})}{f_{m-1}(x_{1},\ldots,\widehat{x}_{i},\ldots,x_{m})}.
\end{split}
\end{equation}
\end{lemmaapp}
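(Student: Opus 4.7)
The plan is to exploit the special structure of the dg-Lie bracket on $L' = \LieQuant$ to kill most of the summands in \eqref{J_maps}, leaving only the $s=1$ and $t=1$ contributions, and then to reorganize these into a single sum indexed by $i = 1, \ldots, m$.

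First, I would isolate the bulk contributions. For any $k \geq 2$ one has $\deg f_k = k-1 \geq 1$, so $f_k(x_{\tau(1)}, \ldots, x_{\tau(k)})$ always lies in $L'_{\geq 1} = \Tot^{\bullet}(\cU, \Omega)$, i.e.,\ it carries purely form data with no vector-field component. The third bracket relation in Def/Prop \ref{def.dg-lie-quant}, namely $\bbrac{\bar{\eta}}{\bar{\eta}} = 0$, extends by bilinearity to give $\bbrac{\bar{\eta}_1}{\bar{\eta}_2} = 0$ for any pair of elements in $L'_{\geq 1}$. Hence every term of \eqref{J_maps} with both $s \geq 2$ and $t \geq 2$ vanishes identically, and only the two extreme decompositions $(s,t) = (1, m-1)$ and $(s,t) = (m-1, 1)$ can contribute.

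Next, I would enumerate the surviving shuffles. A $(1, m-1)$-shuffle is determined by its value at $1$, and the requirement $\tau(1) < \tau(2)$ forces $\tau = \mathrm{id}$, producing a single $s = 1$ term proportional to $\bbrac{f_1(x_1)}{f_{m-1}(x_2, \ldots, x_m)}$. Symmetrically, a $(m-1, 1)$-shuffle is determined by $\tau(m) = i$; the constraint $\tau(1) < \tau(m)$ is automatic for $i \geq 2$ and fails for $i = 1$, so this case produces $m-1$ terms of the form $\bbrac{f_{m-1}(x_1, \ldots, \widehat{x}_i, \ldots, x_m)}{f_1(x_i)}$, one per $i \in \{2, \ldots, m\}$.

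The final step, which I expect to be the main obstacle, is the sign bookkeeping. I would apply graded antisymmetry $\bbrac{a}{b} = -(-1)^{|a||b|}\bbrac{b}{a}$ to each $t = 1$ summand in order to move $f_1(x_i)$ into the first slot, and then collect the resulting sign with the three factors already present in \eqref{J_maps}: the shuffle sign $\chi(\tau_i)$ (which factors as the permutation sign $(-1)^{m-i}$ times the Koszul sign from moving $x_i$ past $x_{i+1}, \ldots, x_m$), the prefactor $(-1)^{s-1} = (-1)^{m-2}$, and the prefactor $(-1)^{(t-1)\sum_p \deg{x_{\tau(p)}}} = 1$ (which is trivial for $t = 1$). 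Using $\deg{f_{m-1}(\cdots)} = (m-2) + \sum_{j \neq i} \deg{x_j}$ and $\deg{f_1(x_i)} = \deg{x_i}$, a direct degree-tracking shows that all Koszul factors telescope so that the total coefficient of $\bbrac{f_1(x_i)}{f_{m-1}(x_1, \ldots, \widehat{x}_i, \ldots, x_m)}$ collapses to $(-1)^{i-1}$ uniformly for every $i = 1, \ldots, m$, which is precisely the identity \eqref{Jlem_eq0}.
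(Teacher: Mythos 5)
Your overall route is the same as the paper's: discard every summand of \eqref{J_maps} in which both $s\geq 2$ and $t\geq 2$, because $f_k$ for $k\geq 2$ takes values in the pure-form part $\Tot^{\bullet}(\cU,\Omega)\subset L'$ and two such elements bracket to zero; observe that the $(1,m-1)$-shuffle constraint forces $\tau=\mathrm{id}$ while the $(m-1,1)$-shuffles give the $m-1$ terms with $\tau(m)=i\geq 2$; and then move $f_1(x_i)$ to the first slot by graded antisymmetry. This is exactly the argument in the appendix, so the plan is sound in outline.

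Two steps, however, do not work as you state them. First, the sign bookkeeping: combining the Koszul sign of $\tau_i$, namely $(-1)^{\deg{x_i}\sum_{j>i}\deg{x_j}}$, with the antisymmetry sign $(-1)^{\deg{x_i}((m-2)+\sum_{j\neq i}\deg{x_j})}$ leaves a residual factor $(-1)^{\deg{x_i}((m-2)+\sum_{j<i}\deg{x_j})}$, so the Koszul factors do \emph{not} telescope and the coefficient of $\bbrac{f_{1}(x_{i})}{f_{m-1}(x_{1},\ldots,\widehat{x}_{i},\ldots,x_{m})}$ is not $(-1)^{i-1}$ uniformly. The identity holds anyway, but only because any summand with $\deg{x_i}>0$ has $v_i=0$, so $f_1(x_i)$ is itself a pure form and its bracket with $f_{m-1}(\cdots)$ vanishes; hence only the $\deg{x_i}=0$ summands contribute, and for those every Koszul factor is $1$. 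You must make this one-line observation explicit (the same remark absorbs the factor $(-1)^{(m-2)\deg{x_1}}$ attached to the $(1,m-1)$ term, which you passed over). Relatedly, the vanishing of $\bbrac{\bar{\eta}_1}{\bar{\eta}_2}$ for two form elements is not a consequence of bilinearity applied to $\bbrac{\bar{\eta}}{\bar{\eta}}=0$ (polarization only gives antisymmetry); it is part of the semidirect-product description of the bracket in Def./Prop.\ \ref{def.dg-lie-quant}, in which the forms constitute an abelian ideal. Second, your enumeration treats $(1,m-1)$ and $(m-1,1)$ as distinct decompositions, which fails precisely at $m=2$, where they coincide: there the definition \eqref{J_maps} produces the single term $\bbrac{f_{1}(x_1)}{f_{1}(x_2)}$ (this is the value the paper records and the one used later in the proof of Theorem \ref{theorem.main-theorem}), whereas counting the $(1,1)$-split twice doubles it. So handle $m=2$ separately, as the paper does, and run your shuffle count only for $m\geq 3$.
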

\begin{proof}
Recalling the definition of $J^m$ \eqref{J_maps}, it is easy to see that 
$J^{2}(x_1,x_2) \break = \bbrac{f_{1}(x_1)}{f_{1}(x_2)}$.
For the  $m >2$ case,  it follows from the definition of the bracket \eqref{brackets_app}
that 
\begin{equation} \label {Jlem_eq1}
\begin{split}
   J^{m \geq 3} (x_1,\hdots,x_m) &= \bbrac{f_{1}(x_{1})}{f_{m-1}(x_{2},\hdots,x_{m})}\\
 & \quad + \sum_{i \geq 2} \chi(\tau(i)) (-1)^{m}
 \bbrac{f_{m-1}(x_1, x_2,\ldots, \widehat{x}_i,\ldots,x_m)}{f_{1}(x_{i})}.
\end{split}
\end{equation}

Above $x_{i}=x_{\tau(m)}$, so
$\chi(\tau(i))=(-1)^{m-i}\epsilon(\tau(i))=(-1)^{m-i}(-1)^{\deg{x_i}\sum_{j>i} \deg{x_j}}$.
It follows from the antisymmetry of the bracket and
the definition of the structure maps that the summation on the
right-hand side
of \eqref {Jlem_eq1} is
\begin{equation}\label{Jlem_eq3}
\sum_{i \geq 2}  (-1)^{i-1} \bbrac{f_{1}(x_{i})}{f_{m-1}(x_1, x_2,\ldots, \widehat{x}_i,\ldots,x_m)}.
\end{equation}
Hence, the equality \eqref{Jlem_eq0} holds.

\end{proof}

\begin{lemmaapp}\label{J_lemma2}
For all $m \geq 3$ and  $x_{1}, \ldots,x_{m} \in L$ the following
equality holds:
\begin{equation}
\begin{split}
& \scalebox{0.84}{$\displaystyle{-(-1)^{\binom{m}{2}} J^{m \geq 3}(x_1,\ldots,x_m) = 2 \Bigl(\sum_{i < k < j}
  - \sum_{i < j<k} + \sum_{j < i < k} \Bigr) (-1)^{i+j+k}
  \iota_{[v_i,v_k] \wedge v_1 \wedge \cdots \widehat{v}_{i} \cdots \widehat{v}_{j} \cdots
\widehat{v}_{k} \cdots \wedge v_{m}} \theta_{j}}$} \\
& \scalebox{0.84}{$\quad \quad \displaystyle{+ \Bigl( \sum_{i <j} - \sum_{j<i} \Bigr) (-1)^{i+j}
\iota_{v_{1} \wedge \cdots \widehat{v}_{i} \cdots \widehat{v}_{j}
\cdots \wedge v_{m} } \L_{v_{i}} \theta_{j}
 - 2 \sum_{i < j} (-1)^{i+j}\iota_{[v_i,v_j] \wedge
v_{1} \wedge \cdots \widehat{v}_{i} \cdots \widehat{v}_{j}
\cdots \wedge v_{m} } \bar{A}(m-1)}$} \\
& \scalebox{0.84}{$\quad \quad \displaystyle{- \sum_{i} (-1)^{i} \iota_{v_1 \wedge
\cdots \widehat{v}_{i} \cdots \wedge v_{m}} \L_{v_{m}} \bar{A}(m-1)}.$}
\end{split}
\end{equation}
\end{lemmaapp}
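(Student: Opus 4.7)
The plan is to start from Lemma~\ref{J_lemma1}, which simplifies $J^m$ to $\sum_{i=1}^m (-1)^{i-1}\bbrac{f_1(x_i)}{f_{m-1}(x_1,\dots,\widehat{x}_i,\dots,x_m)}$, and then to expand the brackets on the right-hand side using the semidirect-product structure of the dgla $\LieQuant$ together with the Cartan identity~(\ref{Cartan_commutator}).

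First I would note that for $m\ge 3$ the element $\bar\eta_i:=f_{m-1}(x_1,\dots,\widehat{x}_i,\dots,x_m)$ lies in strictly positive degree in $L'$ whenever it is nonzero, because its $L'$-degree equals $m-2+\sum_{\ell\ne i}\deg{x_\ell}\ge 1$. Hence $\bbrac{f_1(x_i)}{\bar\eta_i}$ vanishes whenever $\deg{x_i}>0$, and the outer sum restricts to indices $i$ with $\deg{x_i}=0$. For such $i$ the semidirect-product formula gives $\bbrac{f_1(x_i)}{\bar\eta_i}=\L_{v_i}\bar\eta_i$; the $-\theta_i+\iota_{v_i}\bar{A}(1)$ piece of $f_1(x_i)$ drops out because it is a positive-degree element that pairs trivially with the positive-degree $\bar\eta_i$.

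Next I would substitute the definition~(\ref{struct_maps}) of $f_{m-1}$, yielding $\bar\eta_i=-(-1)^{\binom{m}{2}}\bigl(S_{m-1}(x_1,\dots,\widehat{x}_i,\dots,x_m)+\iota_{V_i}\bar{A}(m-1)\bigr)$, where $V_i:=v_1\wedge\cdots\widehat{v}_i\cdots\wedge v_m$, and then apply $\L_{v_i}$ using the Cartan identity $\L_u\iota_V\alpha=\iota_V\L_u\alpha+\iota_{[u,V]_S}\alpha$ (the special case of~(\ref{Cartan_commutator}) with $u$ a vector field), together with the Schouten formula~(\ref{schouten}) for $[v_i,V_i]_S$. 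This produces exactly four families of terms, which I expect to match the four families on the right-hand side of the lemma: (a) terms $\iota_{V'}\L_{v_i}\theta_j$ coming from Lie-differentiating the forms inside $S_{m-1}$; (b) terms $\iota_{[v_i,v_k]\wedge V''}\theta_j$ coming from the Schouten bracket of $v_i$ with the multivector contracted into $\theta_j$; (c) a Lie-derivative term involving $\L_{v_i}\bar{A}(m-1)$; and (d) terms $\iota_{[v_i,v_k]\wedge V''}\bar{A}(m-1)$ from the Schouten bracket acting on the $\bar{A}$-part.

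The hardest part will be the sign bookkeeping needed to recast the raw expansion into the manifestly ordered sums $\sum_{i<k<j}-\sum_{i<j<k}+\sum_{j<i<k}$ and the pairs $\sum_{i<j}-\sum_{j<i}$ that appear in the statement. Three competing sources of signs must be reconciled: the global prefactor $\ksgn{m-1}=-(-1)^{\binom{m}{2}}$ from $f_{m-1}$; the sign $(-1)^{\ell'}$ internal to $S_{m-1}$, where $\ell'=\ell$ if $\ell<i$ and $\ell'=\ell-1$ if $\ell>i$; and the Schouten sign $(-1)^{s+1}$ coming from~(\ref{schouten}). The factor of $2$ in front of the $\theta_j$- and $\bar{A}$-bracket families should arise by symmetrizing under $(i,k)\leftrightarrow(k,i)$: the identity $\iota_{[v_i,v_k]\wedge\cdots}=-\iota_{[v_k,v_i]\wedge\cdots}$ combined with the outer sign $(-1)^{i-1}$ folds two orderings of the pair into a single range of summation, doubling the coefficient and forcing the splitting into precisely the three orderings of $\{i,j,k\}$ in which $k$ is not minimal. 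Once these signs align, the expanded expression collapses onto the right-hand side of the lemma.
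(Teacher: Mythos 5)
Your proposal is correct and follows essentially the same route as the paper's proof: reduce via Lemma~\ref{J_lemma1}, observe that for $m\ge 3$ only the vector-field part of $f_1(x_i)$ acts (by $\L_{v_i}$) on the positive-degree element $f_{m-1}(x_1,\dots,\widehat{x}_i,\dots,x_m)$, then expand both the $S_{m-1}$-part and the $\iota_{v_1\wedge\cdots\widehat{v}_i\cdots\wedge v_m}\bar{A}(m-1)$-part using the Cartan identity~\eqref{Cartan_commutator} and the Schouten formula~\eqref{schouten}, with the factor $2$ arising from merging the two terms in which the Lie-derivative index and its Schouten partner are interchanged. Two small points: the three surviving ordering-sums are exactly those with $i<k$ (your phrase ``$k$ not minimal'' would also admit $j<k<i$, which does not occur), and the sign bookkeeping you defer is precisely the computation the paper carries out, which confirms your expansion --- including that the last family involves $\L_{v_i}\bar{A}(m-1)$, the $\L_{v_m}$ in the displayed statement being a typo.
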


\begin{proof}
Lemma \ref{J_lemma1} and the definitions of the bracket $\bbrac{\cdot}{\cdot}$
and $f_{m-1}$ imply that
\begin{equation} \label{Jlem_eq20}
\begin{split}
\scalebox{0.82}{$-(-1)^{\binom{m}{2}} J^{ m > 2}(x_1,\hdots,x_m)$} &\scalebox{0.82}{$= \displaystyle{\sum_{i} (-1)^{i-1}
\bigl( \L_{v_{i}} S_{m-1}(x_{1},\ldots,\widehat{x}_{i},\ldots,x_{m})
+ \L_{v_{i}} \iota_{v_1 \wedge \cdots \widehat{v}_{i} \cdots \wedge
v_{m}} \bar{A}(m-1) \bigr)}$}.
\end{split}
\end{equation}
The definition of $S_{m-1}$ \eqref{Sm_def} implies that
the first summation on the right-hand side of \eqref{Jlem_eq20} is
\begin{equation}\label{Jlem_eq21}
\begin{split}
\sum_{i} (-1)^{i-1}\L_{v_{i}}
S_{m-1}(x_{1},\ldots,\widehat{x}_{i},\ldots,x_{m}) &=
\sum_{i < j} (-1)^{i+j+1}\Bigl( \L_{v_{j}} \iota_{v_{1} \wedge \cdots
\widehat{v}_{i} \cdots \widehat{v}_{j} \cdots \wedge v_m} \theta_{i}
\\
& - \L_{v_{i}} \iota_{v_{1} \wedge \cdots
\widehat{v}_{i} \cdots \widehat{v}_{j} \cdots \wedge v_m} \theta_{j} \Bigr)
\end{split}
\end{equation}
The commutator \eqref{Cartan_commutator} implies that
$$\iota_{[v_j, v_{1} \wedge \cdots
\widehat{v}_{i} \cdots \widehat{v}_{j} \cdots \wedge v_m]} =
\L_{v_{j}} \iota_{v_{1} \wedge \cdots\widehat{v}_{i} \cdots \widehat{v}_{j} \cdots \wedge v_m}
- \iota_{v_{1} \wedge \cdots\widehat{v}_{i} \cdots \widehat{v}_{j}
\cdots \wedge v_m} \L_{v_{j}}.$$
This and the definition of the Schouten bracket \eqref{schouten} give the
following equalities:
 \begin{equation} \label{Jlem_eq23}
\begin{split}
&\scalebox{0.85}{$\displaystyle{\sum_{i < j} (-1)^{i+j+1}\L_{v_{j}} \iota_{v_{1} \wedge \cdots
\widehat{v}_{i} \cdots \widehat{v}_{j} \cdots \wedge v_m} \theta_{i}
=\Bigl( -\sum_{i < j < k} + 2\sum_{j < i < k} \Bigr) (-1)^{i+j+k}
\iota_{[v_i,v_k] \wedge v_1 \wedge \cdots \widehat{v}_{i} \cdots \widehat{v}_{j} \cdots
\widehat{v}_{k} \cdots \wedge v_{m}} \theta_{j}}$} \\
&\scalebox{0.85}{$\quad \quad  +\displaystyle{ \sum_{j < i} (-1)^{i+j+1} \iota_{v_{1} \wedge \cdots\widehat{v}_{i} \cdots \widehat{v}_{j}
\cdots \wedge v_m} \L_{v_{i}} \theta_{j}},$}
\end{split}
\end{equation}
\begin{equation}\label{Jlem_eq24}
\begin{split}
&\scalebox{0.85}{$\displaystyle{\sum_{i < j} (-1)^{i+j+1} \L_{v_{i}} \iota_{v_{1} \wedge \cdots
\widehat{v}_{i} \cdots \widehat{v}_{j} \cdots \wedge v_m} \theta_{j}
= \Bigl(-2\sum_{i < k < j} + \sum_{i < j<k} \Bigr) (-1)^{i+j+k}
\iota_{[v_i,v_k] \wedge v_1 \wedge \cdots \widehat{v}_{i} \cdots \widehat{v}_{j} \cdots \widehat{v}_{k} \cdots \wedge v_{m}} \theta_{j}}$} \\
& \scalebox{0.85}{$\quad \quad + \displaystyle{\sum_{i < j} (-1)^{i+j+1} \iota_{v_{1} \wedge \cdots\widehat{v}_{i} \cdots \widehat{v}_{j}
\cdots \wedge v_m} \L_{v_{i}} \theta_{j}}$}.
\end{split}
\end{equation}
As for the second summation on the right-hand side of
\eqref{Jlem_eq20}, note that the identity
 \eqref{Cartan_commutator} for the commutator gives
\begin{equation}\label {Jlem_eq25}
\begin{split}
\sum_{i} (-1)^{i-1}\L_{v_{i}} \iota_{v_1 \wedge \cdots \widehat{v}_{i} \cdots \wedge
v_{m}} \bar{A}(m-1) &=
\sum_{i} (-1)^{i-1} \bigl( \iota_{[v_i,v_1 \wedge \cdots \widehat{v}_{i} \cdots \wedge
v_{m}]}\\ &
\quad  +  \iota_{v_1 \wedge \cdots \widehat{v}_{i} \cdots \wedge
v_{m}} \L_{v_i} \bigr) \bar{A}(m-1).
\end{split}
\end{equation}
The definition of the Schouten bracket implies that
\begin{equation*}
\scalebox{0.97}{$\displaystyle{\sum_{i} (-1)^{i-1}  \iota_{[v_i,v_1 \wedge \cdots \widehat{v}_{i} \cdots \wedge
v_{m}]} \bar{A}(m-1)
 = -2\sum_{ i < j}
(-1)^{i+j} \iota_{[v_i,v_j] \wedge v_1 \wedge \cdots \widehat{v}_{i}
\cdots \widehat{v}_{j} \cdots\wedge v_{m}} \bar{A}(m-1).}$}
\end{equation*}
The above equality, along with \eqref{Jlem_eq25},
\eqref{Jlem_eq23}, and \eqref{Jlem_eq24},
gives the desired expression for $J^{m \geq 3}$.

\end{proof}

\begin{lemmaapp} \label{dtot_lem}
For all $m \geq 2$ and  $x_{1}, \ldots,x_{m} \in L$ the following
equality holds:
\begin{equation*}
\begin{split}
\scalebox{0.87}{$l'_1 f_{m}(x_1,\ldots,x_{m})$} &\scalebox{0.87}{$= \ksgn{m} \bigl(
   dS_{m}(x_{1},\ldots,x_{m}) + (-1)^{m} \iota_{v_1 \wedge \cdots \wedge v_{m}} \omega
   + \L_{\vk{m}} \bar{A}(m-1) \bigr).$}
\end{split}
\end{equation*}
\end{lemmaapp}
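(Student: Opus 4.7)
The plan is to exploit the definition of $f_m$ in equation \eqref{struct_maps}, linearity of $l'_1 = \dt$, and the \v{C}ech--Deligne cocycle conditions satisfied by $\bar{A}$. Writing $V := v_1 \wedge \cdots \wedge v_m$, we have
\[
l'_1 f_m(x_1,\ldots,x_m) = \ksgn{m}\bigl(\dt\, \res S_m(x_1,\ldots,x_m) + \dt\, \iota_V \bar{A}(m)\bigr).
\]
The first summand is immediate: since $S_m(x_1,\ldots,x_m)$ is a globally defined differential form on $X$, its restriction is a \v{C}ech $0$-cocycle, so $\delta \res S_m = 0$ and $\dt \res S_m = dS_m$, producing the first term of the desired expression.

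For the second summand I would expand $\iota_V \bar{A}(m) = \sum_{i=0}^n (-1)^{mi}\iota_V A^{n-i}$ componentwise. Because $V$ is a globally defined multivector, interior product commutes with the \v{C}ech coboundary, $\delta \iota_V = \iota_V \delta$. Combined with the defining identity $d\iota_V = \L_V + (-1)^m \iota_V d$ for multivector Lie derivatives (see \ref{notat_cartan}), this gives componentwise
\[
\dt \iota_V A^{n-i} = \iota_V \delta A^{n-i} + (-1)^i \L_V A^{n-i} + (-1)^{i+m}\iota_V dA^{n-i}.
\]
I would then substitute the Deligne cocycle relations $\delta A^{n-i} = (-1)^i dA^{n-i-1}$, using the boundary convention $\iota_V \delta A^0 = 0$ (since $\delta A^0$ takes values in the locally constant functions $\mathbb{Z}$, which are annihilated by $\iota_V$) and the curvature relation realising $dA^n$ as the restriction of $\omega$ to the cover. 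The first piece then reduces to $(-1)^i \iota_V dA^{n-i-1}$.

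The main obstacle is the sign bookkeeping after multiplying by $(-1)^{mi}$ and summing over $i$. Reindexing $j = i+1$, the \v{C}ech contributions become $(-1)^{m+1}\sum_{j=1}^n (-1)^{(m+1)j}\iota_V dA^{n-j}$, which telescopically cancel the $i \geq 1$ portion of the de Rham contributions $(-1)^m \sum_{i=0}^n (-1)^{(m+1)i}\iota_V dA^{n-i}$, leaving only the top boundary $(-1)^m \iota_V dA^n = (-1)^m \iota_V \omega$. Meanwhile the Lie-derivative contributions assemble into $\sum_i (-1)^{(m+1)i}\L_V A^{n-i}$, which equals $\L_V \bar{A}(m-1)$ because $(-1)^{(m+1)i} = (-1)^{(m-1)i}$ and by the definition \eqref{Asign}. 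Collecting all surviving terms and multiplying by $\ksgn{m}$ yields the claimed identity.
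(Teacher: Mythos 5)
Your proposal is correct and follows essentially the same route as the paper's proof: both expand $\dt\iota_{\vk{m}}\bar{A}(m)$ using the commutation of the \v{C}ech differential with the interior product, the Deligne cocycle relations, Cartan's formula $\L_{\vk{m}}=d\iota_{\vk{m}}-(-1)^m\iota_{\vk{m}}d$, and the local curvature identity $dA^n=\omega\vert_{U_\alpha}$, with the same cancellation between the $\delta$-contributions and the $\iota_{\vk{m}}d$-terms. The only difference is cosmetic bookkeeping (your componentwise reindexing/telescoping versus the paper's grouping of the top term $d\iota_{\vk{m}}A^n$ separately), so no further comparison is needed.
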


\begin{proof}
The definitions of $f_{m}$ and $l'_1$ imply that
\[
l'_1 f_{m}(x_1,\ldots,x_{m}) = \ksgn{m} \bigl( dS_m (x_1,\ldots,x_{m})
+ \dt \iota_{\vk{m}}\bar{A}(m) \bigr).
\]
The \v{C}ech differential commutes with interior product. Hence,
\[
\dt \iota_{\vk{m}}\bar{A}(m) = \iota_{\vk{m}} \delta \bar{A}(m) + d\iota_{\vk{m}} A^{n}\! +\! \sum_{i=1}^{n} (-1)^{mi+i} d \iota_{\vk{m}} A^{n-i}\!.
\]
Since $\bar{A}$ is a \v{C}ech-Deligne $n$-cocycle,
\[
\iota_{\vk{m}} \delta \bar{A}(m)  = -(-1)^{m}\iota_{\vk{m}} \sum_{i=1}^{n} (-1)^{(m-1)i}d A^{n-i}.\]
Hence,  Cartan's formula $\L_{\vk{m}}  = d\iota_{\vk{m}} -(-1)^{m} \iota_{\vk{m}}d$
implies that
\begin{equation*}
\begin{split}
\dt \iota_{\vk{m}}\bar{A}(m) &= d \iota_{v_1 \wedge \cdots v_{m}}
A^{n} + \sum_{i=1}^{n} (-1)^{(m-1)i} \L_{\vk{m}}  A^{n-i} \\
& = (-1)^{m}\iota_{\vk{m}}\omega + \sum_{i=0}^{n} (-1)^{(m-1)i} \L_{\vk{m}}  A^{n-i}.
\end{split}
\end{equation*}
The result then follows from the definition of
$\bar{A}(m-1)$ \eqref{Asign}.
\end{proof}

\subsection*{Proof of Theorem \ \ref{theorem.main-theorem}}
To prove that the maps $f_{k} \maps L^{\tensor
  k} \to L'$ give an $L_{\infty}$-morphism \cite[Def.\ 5.2]{Lada-Markl:1995},
we must verify that $\forall m \geq 1$
\begin{equation*}\label{bigger_eq}
\begin{split}
&\scalebox{0.82}{$l'_{1}f_{m}(x_1,\hdots,x_m) + \displaystyle{\sum_{j+k=m+1} \sum_{\sigma \in  \Sh(k,m-k) }}\chi(\sigma) (-1)^{k(j-1)+1} f_{j}(l_{k}(x_{\sigma(1)},\hdots,x_{\sigma(k)}),x_{\sigma(k+1)},\hdots,x_{\sigma(m)})$}\\
& \scalebox{0.82}{$\quad + \displaystyle{\sum_{s+t = m} \sum_{\substack{\tau \in \Sh(s,m-s) \\ \tau(1) <
    \tau(s+1)}}} \chi(\tau) (-1)^{s-1} (-1)^{(t-1) \sum_{p=1}^{s}
  \deg{x_{\tau(p)}}}
\bbrac{f_{s}(x_{\tau(1)},\hdots,x_{\tau(s)})}{f_{t}(x_{\tau(s+1)},\hdots,x_{\tau(m)})}
=0,$}
\end{split}
\end{equation*}
or, in our notation:
\begin{equation} \label{big_eq}
\bigl(l'_{1}f_{m} +  I^{m}_{(1)} + I^{m}_{(2)} + I^{m}_{(3)}
+ J^{m} \bigr) (x_1,\hdots,x_m) = 0 \quad \forall m
\geq 1.
\end{equation}

For $m=1$, \eqref{big_eq} holds, since $f_1$ is a chain map.
For $m=2$, we have $I^{2}_{(3)}=0$ by definition, and
it follows from Lemmas \ref{I1_lemma} and \ref{I2_lemma} that
\[
I^{2}_{(1)}(x_1,x_2) + I^{2}_{(2)}(x_1,x_2)  = - [v_1,v_2] -\iota_{v_2} l_1
\theta_1 + \iota_{v_1}l_{1}\theta_2 + \iota_{v_1 \wedge v_2}\omega -
\iota_{[v_1,v_2]}\bar{A}(1).
\]
From Lemma \ref{J_lemma1} we have
\[
J^{2}(x_1,x_2) = [v_1,v_2] - \iota_{v_1}d \theta_2 +
\iota_{v_2} d \theta_1 -dS_{2}(x_1,x_2) + \iota_{[v_1,v_2]} \bar{A}(1) -
\L_{v_{1} \wedge v_2} \bar{A}(1).
\]
Hence, the above equalities, along with Lemma \ref{dtot_lem}, imply that
the left-hand side of \eqref{big_eq} is $\iota_{v_1}
(l_1-d)\theta_{2} + \iota_{v_2} (d-l_1)\theta_{1} + 2\iota_{v_1 \wedge
  v_2}\omega$. If $\deg{x_1}=\deg{x_2}=0$, then $l_1=0$ and the $\theta_i$ are
Hamiltonian, i.e.,\ $-\iota_{v_1}d\theta_2 = \iota_{v_2}d\theta_1 = -\iota_{v_{1} \wedge
v_{2}} \omega$. If  $\deg{x_{i}} > 0$, then $v_{i}=0$ and $l_1 \theta_i = d\theta_i$.
Therefore, in either case, \eqref{big_eq} holds.

For the $m \geq 3$ case, note that Lemma \ref{tech_lemma}
combined with Cartan's formula 
for the Lie derivative
implies that for any $x_{1},\ldots,x_{m} \in L$:
\begin{equation*}
\begin{split}
 (-1)^{m} \L_{\vk{m}} \bar{A}(m-1) &=
\sum_{i < j} (-1)^{i+j}\iota_{[v_i,v_j] \wedge v_{1} \wedge \cdots
\widehat{v}_{i} \cdots \widehat{v}_{j} \cdots \wedge v_{m} } \bar{A}(m-1)\\
& \quad+ \sum_{i} (-1)^{i} \iota_{v_1 \wedge \cdots \widehat{v}_{i} \cdots
\wedge v_{m}} \L_{v_{m}} \bar{A}(m-1),
\end{split}
\end{equation*}
and
\begin{equation}\label{thm_eq2}
\begin{split}
\scalebox{0.82}{$(-1)^{m-1}\sum^{m}_{j=1} (-1)^{j} \L_{v_{1} \wedge \cdots \widehat{v}_{j}
  \cdots \wedge v_{m}} \theta_{j} $}&\scalebox{0.82}{$=
\Bigl(\displaystyle{\sum_{i < k < j}  - \sum_{i < j<k} + \sum_{j < i < k}} \Bigr) (-1)^{i+j+k}
  \iota_{[v_i,v_k] \wedge v_1 \wedge \cdots \widehat{v}_{i} \cdots \widehat{v}_{j} \cdots
\widehat{v}_{k} \cdots \wedge v_{m}} \theta_{j}$} \\
&  \scalebox{0.82}{$\quad+ \Bigl( \displaystyle{\sum_{i <j} - \sum_{j<i} }\Bigr) (-1)^{i+j}
\iota_{v_{1} \wedge \cdots \widehat{v}_{i} \cdots \widehat{v}_{j}
\cdots \wedge v_{m} } \L_{v_{i}} \theta_{j}$}
\end{split}
\end{equation}
Combining the above equalities with  Lemmas \ref{I2_lemma},
\ref{I3_lemma}, and \ref{J_lemma2} gives
\begin{equation}\label{thm_eq3}
\begin{split}
(I^{m}_{(2)} + I^{m}_{(3)} +  J^{m})(x_{1},\ldots,x_{m}) &= -(-1)^{\binom{m}{2}}
\Bigl ( (-1)^{m-1}\sum^{m}_{j=1} (-1)^{j} \L_{v_{1} \wedge \cdots \widehat{v}_{j}
  \cdots \wedge v_{m}} \theta_{j}  \\
& \quad  -  (-1)^{m} \L_{\vk{m}} \bar{A}(m-1) + (m-1)
  \iota_{\vk{m}}\omega \Bigr ).
\end{split}
\end{equation}
Cartan's formula also implies that
\[
\sum^{m}_{j=1} (-1)^{j} \L_{v_{1} \wedge \cdots \widehat{v}_{j}
  \cdots \wedge v_{m}} \theta_{j} = dS_{m}(x_1,\ldots,x_{m}) -(-1)^{m-1}\sum_{j=1}^{m} (-1)^{j} \iota_{v_1
 \wedge \cdots \widehat{v}_{j} \cdots \wedge v_m} d\theta_{j}.
\]
Using this, along with Eq.\ (\ref{thm_eq3}) and
the results of Lemmas \ref{I1_lemma} and \ref{dtot_lem}, we conclude
that the left-hand side of \eqref{big_eq} is
\begin{equation}\label{thm_eq5}
 \scalebox{0.92}{$\displaystyle{-(-1)^{\binom{m}{2}}  \Bigl( \sum_{i=1}^{m}
(-1)^{i}\iota_{v_{1} \wedge \cdots \wedge \widehat{v}_{i} \wedge \cdots \wedge
v_{m}} (l_{1} \theta_{i})
 -\sum_{j=1}^{m} (-1)^{j} \iota_{v_1\wedge \cdots
\widehat{v}_{j} \cdots \wedge v_m} d\theta_{j} +   m\iota_{\vk{m}}\omega \Bigr )}.$}
\end{equation}
If all $x_{i}$ are degree 0, then $l_1=0$, and all $\theta_{i}$ are
Hamiltonian, which implies that  
\[
\sum_{j=1}^{m} (-1)^{j} \iota_{v_1\wedge \cdots
\widehat{v}_{j} \cdots \wedge v_m} d\theta_{j} =m \iota_{\vk{m}}\omega.
\]
If  $\deg{x_{k}} >0$ for some $x_k$ then $v_k=0$ and $l_1
\theta_k =d\theta_k$. Hence, in either case, \eqref{big_eq} holds.
This completes the proof of the theorem. \qed

\section{A recognition principle for homotopy fibers of \texorpdfstring{$L_\infty$}{Loo}-morphisms}

 In this section we provide a proof of the recognition principle for homotopy fibers
 of $L_\infty$-algebra morphisms that has been used in Section \ref{LooXomega}.  The proof is based on the following two facts recalled in the Introduction. First,  every $L_\infty$-morphism $f_\infty:\mathfrak{g}\to A$ to a dg-Lie algebra $A$
uniquely factors as
$
\mathfrak{g}\xrightarrow{v_{\mathfrak{g}}} \mathcal{R}(\mathfrak{g})\xrightarrow{\xi_A\circ \mathcal{R}(f_\infty)}A,
$
 where $\xi_A:\mathcal{R}(A)\to A$ is the dg-Lie algebra morphism in the factorization of the identity of $A$ as
$
A\xrightarrow{v_{A}} \mathcal{R}(A)\xrightarrow{\xi_A}A$.
Second,  the adjunction
   $(\mathcal{R} \dashv i)$ 
   induces an equivalence between the homotopy theories of dg-Lie algebras
   and $L_\infty$-algebras and so, if
$f_\infty:\mathfrak{g}\to \mathfrak{h}$ is an $L_\infty$-morphism between two $L_\infty$-algebras,
then an $L_\infty$-algebra $\mathfrak{k}$ presents the homotopy fiber of $f_\infty$ if $\mathfrak{k}$ is $L_\infty$-quasi-isomorphic to the homotopy fiber of $\mathcal{R}(f_\infty):\mathcal{R}(\mathfrak{g})\to \mathcal{R}(\mathfrak{h})$ in the category of dglas.

\begin{lemma}\label{lemma-zero}
Let $\mathfrak{g}$ be an $L_\infty$-algebra, $A$ a dgla, and  $f_\infty: \mathfrak{g} \to A$ an $L_\infty$ morphism. Let $p_A:B \to A$ be a fibration in the category of dglas, with $H_\bullet(B)=0$. The fiber product
\[
\scalebox{0.92}{
\xymatrix{
\mathcal{R}(\mathfrak{g})\times_A B\ar[r]^-{\pi_{B}}\ar[d]_{\pi_{\mathcal{R}(\mathfrak{g})}} & B\ar[d]^{p_A}\\
\mathcal{R}(\mathfrak{g})\ar[r]^{\xi_A\circ \mathcal{R}(f_\infty)}&A
}
}
\]
is a dgla model for the homotopy fiber of $f_\infty$.
\end{lemma}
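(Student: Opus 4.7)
The plan is to reduce the computation of the homotopy fiber of the $L_\infty$-morphism $f_\infty$ to a strict pullback in the Hinich model category of dg-Lie algebras, and then transport the result back across the Quillen equivalence $(\mathcal{R}\dashv i)$. The two main inputs will be the canonical factorization
$$
f_\infty \;=\; (\xi_A\circ\mathcal{R}(f_\infty))\circ v_{\mathfrak g},
$$
recalled at the beginning of this appendix, and the fact that $i$, being the right adjoint in a Quillen equivalence, preserves homotopy limits.

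First, I would use that $v_{\mathfrak g}$ is an $L_\infty$-quasi-isomorphism to replace $f_\infty$ by the dgla morphism $\tilde f := \xi_A\circ\mathcal{R}(f_\infty)\colon\mathcal{R}(\mathfrak g)\to A$ without changing the homotopy fiber. Since $i$ preserves homotopy limits, it then suffices to identify $\mathcal{R}(\mathfrak g)\times_A B$ as a homotopy fiber of $\tilde f$ computed in dglas: once this is done, applying $i$ recovers the desired statement for $f_\infty$ in the homotopy category of $L_\infty$-algebras.

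Next, I would compute this homotopy fiber strictly. By hypothesis $p_A\colon B\to A$ is a fibration in the Hinich model structure, and the acyclicity of $B$ means that $B\to 0$ is a weak equivalence of dglas; hence $p_A$ serves as a fibrant model of the map $0\to A$ that enters the definition of the homotopy fiber of $\tilde f$. Since the Hinich model structure on dglas is right proper (pullbacks of quasi-isomorphisms along degreewise surjections are again quasi-isomorphisms), strict pullbacks along fibrations represent homotopy pullbacks, and therefore $\mathcal{R}(\mathfrak g)\times_A B$ is the homotopy fiber of $\tilde f$, and so a dgla model for the homotopy fiber of $f_\infty$.

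The main obstacle, if any, is bookkeeping rather than substantive: one must cite cleanly the two abstract model-categorical facts on which the argument hinges, namely right properness of the Hinich model structure on dglas and preservation of homotopy limits by $i$. Both follow formally from the explicit description of the fibrations as degreewise surjections and from the adjunction $(\mathcal{R}\dashv i)$ being a Quillen equivalence, but they are the key hinges and should be referenced carefully.
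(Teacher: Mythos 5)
Your proposal is correct and follows essentially the paper's own argument: both identify $\mathcal{R}(\mathfrak{g})\times_A B$ as a strict pullback, in the Hinich model structure on dglas, of a fibration with acyclic total space, and use that such pullbacks compute homotopy fibers (right properness, all objects being fibrant). The only difference is bookkeeping: you keep the base $A$ and use $p_A$ itself as the fibrant replacement of $0\to A$, absorbing $\xi_A$ and $v_{\mathfrak{g}}$ by homotopy invariance (your appeal to $i$ preserving homotopy limits could simply be replaced by the convention recalled at the start of the appendix), whereas the paper first pulls $p_A$ back along $\xi_A$ to obtain a fibrant replacement of $0\to\mathcal{R}(A)$ and then invokes the pasting law for pullbacks to recover the same fiber product.
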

\begin{proof}
Consider the commutative diagram of dglas
\[
\scalebox{0.92}{
\xymatrix{
\mathcal{R}(\mathfrak{g})\times_A B\ar[rr]^{{(\mathcal{R}(f_\infty),\mathrm{id}_B)}}\ar[d]_{\pi_{\mathcal{R}(\mathfrak{g})}} &&\mathcal{R}(A)\times_A B\ar[r]^-{\pi_{B}}\ar[d]^{\pi_{\mathcal{R}(A)}} & B\ar[d]^{p_A} \\
\mathcal{R}(\mathfrak{g})\ar[rr]^{\mathcal{R}(f_\infty)}&&\mathcal{R}(A)\ar[r]^{\xi_A}&A
}
}
\]
where the rightmost diagram and the outer diagram are pullbacks. By
the pasting law, also the leftmost diagram is a pullback.
Since $p_A$ is a fibration and $\xi_{A}$ is a weak equivalence,
the map $\pi_{\mathcal{R}(A)}$ is a fibration and the map $\pi_{B}$ is a weak
equivalence. It follows that
$\pi_{\mathcal{R}(A)}$ is a fibrant replacement of $0\to \mathcal{R}(A)$. Hence $\mathcal{R}(\mathfrak{g})\times_A B$ is a model for the homotopy fiber of $\mathcal{R}(f_\infty)$ in the category of dglas.
\end{proof}
\begin{theorem}
Let $\mathfrak{g}$ be an $L_\infty$-algebra, $A$ a dgla, and  $f_\infty: \mathfrak{g} \to A$ an $L_\infty$ morphism. Let $p_A:B \to A$ be a fibration  in the category of dglas, with $H_\bullet(B)=0$. Assume we have
a commutative diagram of $L_\infty$-algebras
\[\scalebox{0.92}{
\xymatrix{
(\mathfrak{g}\times_A B, Q)\ar[r]^-{\pi_{B,\infty}}\ar[d]_{\pi_{\mathfrak{g},\infty}} & B\ar[d]^{p_A}\\
\mathfrak{g}\ar[r]^{f_\infty}&A
}
}
\]
 for a suitable $L_\infty$-structure $Q$ on the fiber product of chain complexes $\mathfrak{g}\times_A B$  of $p_A$ with the linear component of $f_\infty$, with  $\pi_{\mathfrak{g},\infty}$ and $\pi_{B,\infty}$ $L_\infty$-morphisms lifting the linear projections $\pi_{\mathfrak{g}}$ and $\pi_B$. Then
 $(\mathfrak{g}\times_A B, Q)$ is a model for the homotopy fiber of $f_\infty$.
 \label{DomenicosLemma}
\end{theorem}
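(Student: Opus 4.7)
The plan is to construct an explicit $L_\infty$-quasi-isomorphism from $(\mathfrak{g} \times_A B, Q)$ to the dgla model $\mathcal{R}(\mathfrak{g}) \times_A B$ of the homotopy fiber of $f_\infty$ provided by Lemma \ref{lemma-zero}, thereby identifying the former as a model for the homotopy fiber.

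First I would exploit the universal property of the strict pullback $\mathcal{R}(\mathfrak{g}) \times_A B$ of dglas. Since $i\colon \mathrm{dgLie} \hookrightarrow L_\infty\mathrm{Alg}$ is a right adjoint, it preserves limits, so this pullback is also a strict pullback in the $L_\infty$-algebra category. It therefore suffices, in order to define the desired $L_\infty$-morphism $\phi\colon (\mathfrak{g} \times_A B, Q) \to \mathcal{R}(\mathfrak{g}) \times_A B$, to specify two $L_\infty$-morphisms out of $(\mathfrak{g} \times_A B, Q)$, one to $\mathcal{R}(\mathfrak{g})$ and one to $B$, whose compositions with the respective structure maps to $A$ agree strictly. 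I would take these to be $v_{\mathfrak{g}} \circ \pi_{\mathfrak{g},\infty}$ and $\pi_{B,\infty}$. Their compatibility is precisely the factorization identity $\xi_A \circ \mathcal{R}(f_\infty) \circ v_{\mathfrak{g}} = f_\infty$ recalled at the beginning of this section, combined with the hypothesis $f_\infty \circ \pi_{\mathfrak{g},\infty} = p_A \circ \pi_{B,\infty}$ expressing commutativity of the given $L_\infty$-square.

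Next I would verify that $\phi$ is a quasi-isomorphism by inspecting its linear component. Since $p_A$ is a fibration in the Hinich model structure, i.e.\ a degreewise surjection, the two strict pullbacks of underlying chain complexes fit into compatible short exact sequences
\begin{equation*}
0 \to \ker(p_A) \to \mathfrak{g} \times_A B \to \mathfrak{g} \to 0, \qquad 0 \to \ker(p_A) \to \mathcal{R}(\mathfrak{g}) \times_A B \to \mathcal{R}(\mathfrak{g}) \to 0,
\end{equation*}
connected by $\mathrm{id}_{\ker(p_A)}$ on the left and by the quasi-isomorphism $(v_{\mathfrak{g}})_1$ on the right. The five-lemma applied to the induced long exact sequences in homology then shows that the linear component $\phi_1$ is a quasi-isomorphism, whence $\phi$ itself is an $L_\infty$-quasi-isomorphism.

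The main obstacle I anticipate is the bookkeeping around the universal property: one must convince oneself that the strict dgla pullback genuinely represents the pullback in $L_\infty\mathrm{Alg}$, so that a strictly compatible pair of $L_\infty$-morphisms assembles into a single $L_\infty$-morphism whose higher Taylor coefficients are canonically determined by those of its two components. Once this foundational point is handled by the right-adjoint preservation-of-limits argument, everything else reduces to standard homological algebra and the quasi-isomorphism statement for $v_{\mathfrak{g}}$ from the adjunction $(\mathcal{R} \dashv i)$.
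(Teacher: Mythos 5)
Your proposal is correct and follows essentially the same strategy as the paper's own proof: compare $(\mathfrak{g}\times_A B, Q)$ with the dgla model $\mathcal{R}(\mathfrak{g})\times_A B$ of the homotopy fiber provided by Lemma \ref{lemma-zero}, and reduce everything to checking that the comparison morphism has linear component $((v_{\mathfrak{g}})_1,\mathrm{id}_B)$, which is a quasi-isomorphism. The only (harmless) divergences are that you produce the comparison morphism directly from limit-preservation of the right adjoint $i$ -- this is the adjoint transpose of the paper's construction, which first rectifies the square, invokes the universal property of the dgla pullback to get $\psi$, and then precomposes with the unit $v_{(\mathfrak{g}\times_A B,Q)}$ -- and that you verify the quasi-isomorphism property of the linear part by the five lemma on the kernel short exact sequences, where the paper instead uses the pasting law to exhibit it as the pullback of the weak equivalence $(v_{\mathfrak{g}})_1$ along the fibration $\pi_{\mathcal{R}(\mathfrak{g})}$; both verifications are sound.
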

\begin{proof}
Applying the rectification functor to the diagram of $L_\infty$-morphisms above we get a commutative diagram of dglas
\[
\scalebox{0.92}{
\xymatrix{
\mathcal{R}(\mathfrak{g}\times_A B,Q)\ar[r]^-{\mathcal{R}(\pi_{B,\infty})}\ar[d]_{\mathcal{R}(\pi_{\mathfrak{g},\infty})} & \mathcal{R}(B)\ar[d]^{\mathcal{R}(p_A)}\\
\mathcal{R}(\mathfrak{g})\ar[r]^{\mathcal{R}(f_\infty)}&\mathcal{R}(A)
}
}
\]
Using the counit of the adjunction $(\mathcal{R} \dashv i)$, we can extend this to a commutative diagram of dglas
\[
\scalebox{0.92}{
\xymatrix{
\mathcal{R}(\mathfrak{g}\times_A B,Q)\ar[r]^-{\mathcal{R}(\pi_{B,\infty})}\ar[d]_{\mathcal{R}(\pi_{\mathfrak{g},\infty})} & \mathcal{R}(B)\ar[d]^{\mathcal{R}(p_A)}\ar[r]^{\xi_B}&B\ar[d]^{p_A}\\
\mathcal{R}(\mathfrak{g})\ar[r]^{\mathcal{R}(f_\infty)}&\mathcal{R}(A)\ar[r]^{\xi_A}&A
}
}
\]
By the universal property of the pullback of dglas, the outer rectangle
 is equivalent to the datum of a morphism of dglas $
\psi: \mathcal{R}(\mathfrak{g}\times_A B,Q)\to \mathcal{R}(\mathfrak{g})\times_A B$,
where $\mathcal{R}(\mathfrak{g})\times_A B$ is the pullback of dglas
\[
\scalebox{0.92}{
\xymatrix{
\mathcal{R}(\mathfrak{g})\times_A B\ar[d]_{\pi_{\mathcal{R}({\mathfrak{g}})}}\ar[rr]^-{\pi_B}& &B\ar[d]^{p_A}\\
\mathcal{R}(\mathfrak{g})\ar[rr]^{\xi_A\circ \mathcal{R}(f_\infty)}&&A
}
}
\]
The morphism $\psi$ will satisfy $
\pi_B\circ\psi=\xi_B\circ \mathcal{R}(\pi_{B,\infty})$ and $
 \pi_{\mathcal{R}(\mathfrak{g})}\circ\psi=\mathcal{R}(\pi_{\mathfrak{g},\infty})$.
By Lemma \ref{lemma-zero}, the dgla $\mathcal{R}(\mathfrak{g})\times_A B$  is a dgla model for the homotopy fiber of $f_\infty$. Then to conclude we only need to show that $\psi$ is a quasi-isomorphism. This is equivalent to proving that the $L_\infty$-morphism $\eta$ given by the composition
\[
 \eta :
(\mathfrak{g}\times_A B,Q)\xrightarrow{v_{(\mathfrak{g}\times_A B,Q)}} \mathcal{R}(\mathfrak{g}\times_A B,Q)\xrightarrow{\psi} \mathcal{R}(\mathfrak{g})\times_A B
\]
is a quasi-isomorphism. The linear part $\eta_1$ of $\eta$ is determined by its compositions with the linear projections to $\mathcal{R}(\mathfrak{g})$ and to $B$. We have
$
\pi_B\circ \eta_1
=(\pi_B\circ \eta)_1=(\pi_B\circ\psi\circ v_{(\mathfrak{g}\times_A B,Q)})_1=
(\xi_B\circ\mathcal{R}(\pi_{B,\infty})\circ v_{(\mathfrak{g}\times_A B,Q)})_1
=(\pi_{B,\infty})_1=\pi_B
$ 
and, similarly,
$
\pi_{\mathcal{R}(\mathfrak{g})}\circ \eta_1
=(\pi_{\mathcal{R}(\mathfrak{g})}\!\circ \eta)_1=(\pi_{\mathcal{R}(\mathfrak{g})}\!\circ\psi\!\circ v_{(\mathfrak{g}\times_A B,Q)})_1=
(\mathcal{R}(\pi_{\mathfrak{g},\infty})\!\circ\! v_{(\mathfrak{g}\times_A B,Q)})_1
\! =(v_{\mathfrak{g}}\!\circ\!\pi_{\mathfrak{g},\infty})_1= \break(v_{\mathfrak{g}})_1\!\circ\pi_{\mathfrak{g}}
$. 
This means that the map of chain complexes
$\eta_1: \mathfrak{g}\times_A B\to  \mathcal{R}(\mathfrak{g})\times_A B$
is given by
$\eta_1=((v_{\mathfrak{g}})_1,\mathrm{id}_B)$.
Now consider the commutative diagram
\[
\scalebox{0.92}{
\xymatrix{
\mathfrak{g}\times_A B\ar[rr]^{((v_{\mathfrak{g}})_1,\mathrm{id}_B)}\ar[d]_{\pi_{\mathfrak{g}}} &&\mathcal{R}(\mathfrak{g})\times_A B\ar[rr]^-{\pi_{B}}\ar[d]_{\pi_{\mathcal{R}(\mathfrak{g})}} && B\ar[d]^{p_A}\\
\mathfrak{g}\ar[rr]^{(v_{\mathfrak{g}})_1}&&\mathcal{R}(\mathfrak{g})\ar[rr]^{\xi_A\circ \mathcal{R}(f_\infty)}&&A
}
}
\]
The rightmost subdiagram is a pullback by definition, while the total diagram is
 \[
 \scalebox{0.92}{
\xymatrix{
\mathfrak{g}\times_A B\ar[r]^{\pi_B}\ar[d]_{\pi_{\mathfrak{g}}} & B\ar[d]^{p_A}\\
\mathfrak{g}\ar[r]^{(f_\infty)_1}&A
},
}
\]
since
$
\xi_A\circ \mathcal{R}(f_\infty)\circ (v_{\mathfrak{g}})_1=(\xi_A\circ \mathcal{R}(f_\infty)\circ v_{\mathfrak{g}})_1=(f_\infty)_1,$
and so it is a pullback by hypothesis. Then, by the pasting law, also
the leftmost subdiagram is a pullback.
The map $\pi_{\mathcal{R}(\gg)}$ is a fibration, since $p_{A}$ is fibration,
and all chain complexes are fibrant. Hence,
since $(v_{\mathfrak{g}})_1$ is a quasi-isomorphism, its pullback
$\eta_1=((v_{\mathfrak{g}})_1,\mathrm{id}_B)$ is also a quasi-isomorphism.
\end{proof}

\end{document}